\renewcommand{\section}{\@startsection%
{section}%
{1}%
{0em}%
{1.7em}%
{1.2em}%
{\normalfont\large\centering\bfseries}}
\renewcommand{\@seccntformat}[1]%
{\csname the#1\endcsname.\hspace{0.5em}}
\numberwithin{equation}{section}
\renewcommand\appendix{\par
\setcounter{section}{0}%
\setcounter{subsection}{0}%
\setcounter{theorem}{0}
\setcounter{table}{0}
\setcounter{figure}{0}
\gdef\thetable{\Alph{table}}
\gdef\thefigure{\Alph{figure}}
\section*{Appendix}
\gdef\thesection{\Alph{section}}
\setcounter{section}{0}}
\newtheorem{theorem}{Theorem}[section]
\newtheorem{proposition}{Proposition}[section]
\newtheorem{lemma}{Lemma}[section]
\newtheorem{corollary}{Corollary}[section]
\theoremstyle{definition}
\newtheorem{definition}{Definition}
\newtheorem{remark}{Remark}[section]
\newcommand{\abs}[1]{\left|#1\right|}
\newcommand{\norm}[1]{\left\|#1\right\|}
\newcommand{\inner}[2]{\left\langle#1,#2\right\rangle}
\newcommand{\cH}{\mathcal{H}}
\newcommand{\integers}{\mathbb{Z}}
\newcommand{\tb}[1]{\widetilde{\boldsymbol{#1}}}
\newcommand{\reals}{\mathbb{R}}
\newcommand{\complex}{\mathbb{C}}
\newcommand{\nats}{\mathbb{N}}
\newcommand{\pb}[1]{\boldsymbol{#1}}
\DeclareMathOperator{\rank}{rank}
\DeclareMathOperator{\spec}{spec}
\DeclareMathOperator{\diag}{diag}
\begin{document}
\begin{titlepage}
\title{Inverse spectral analysis for a class of finite band symmetric matrices
\footnotetext{%
Mathematics Subject Classification(2010):
34K29;  
47B36. 
}
\footnotetext{%
Keywords:
Inverse spectral problem;
Band symmetric matrices;
Spectral measure.
}\hspace{-5mm}
\thanks{%
Research partially supported by UNAM-DGAPA-PAPIIT IN105414
}%
}
\author{
\textbf{Mikhail Kudryavtsev}
\\
\small Department of Mathematics\\[-1.6mm]
\small Institute for Low Temperature Physics and Engineering\\[-1.6mm]
\small Lenin Av. 47, 61103\\[-1.6mm]
\small Kharkov, Ukraine\\[-1.6mm]
\small\texttt{kudryavtsev@onet.com.ua}
\\[2mm]
\textbf{Sergio Palafox}
\\
\small Instituto de F\'isica y Matem\'aticas\\[-1.6mm]
\small Universidad Tecnol\'ogica de la Mixteca\\[-1.6mm]
\small Km. 2.5 Carr. a Acatlima, C.P. 69000, Oaxaca, M\'exico\\[-1.6mm]
\small \texttt{sergiopalafoxd@gmail.com}
\\[2mm]
\textbf{Luis O. Silva}
\\
\small Departamento de F\'{i}sica Matem\'{a}tica\\[-1.6mm]
\small Instituto de Investigaciones en Matem\'aticas Aplicadas y en Sistemas\\[-1.6mm]
\small Universidad Nacional Aut\'onoma de M\'exico\\[-1.6mm]
\small C.P. 04510, M\'exico D.F.\\[-1.6mm]
\small \texttt{silva@iimas.unam.mx}
}
\date{}
\maketitle
\vspace{-4mm}
\begin{center}
\begin{minipage}{5in}

\centerline{{\bf Abstract}}
\bigskip

In this note, we solve an inverse spectral problem for a class of
finite band symmetric matrices. We provide necessary and sufficient
conditions for a matrix valued function to be a spectral function of
the operator corresponding to a matrix in our class and give an
algorithm for recovering this matrix from the spectral function.  The
reconstructive algorithm is applicable to matrices which cannot be
treated by known inverse block matrix methods.  Our approach to the
inverse problem is based on the rational interpolation theory
developed in a previous paper.
\end{minipage}
\end{center}
\thispagestyle{empty}
\end{titlepage}
\section{Introduction}
\label{sec:intro}
This work deals with the direct and inverse spectral analysis of a
class of finite symmetric band matrices with emphasis in the inverse
problems of characterization and reconstruction. Inverse spectral
problems for band matrices have been studied extensively in the
particular case of Jacobi matrices (see for instance
\cite{MR2263317,MR504044,MR2915295,MR1616422,MR0447294,MR0213379,
  MR0382314,MR549425,MR1463594,MR1436689,MR1247178} for the finite
case and
\cite{MR2998707,MR3113459,MR1045318,MR1616422,MR499269,MR0221315,
  MR2305710,MR2438732} for the infinite case).  Works dealing with
band matrices non-necessarily tridiagonal are not so abundant (see
\cite{MR629608,MR2533388,MR2592784,MR1668981,MR1699440,MR636029,
  MR2110489,MR2432761} for the finite case and
\cite{MR2043894,MR2494240} for the infinite case).

Let $\cH$ be a finite dimensional Hilbert space with a fixed orthonormal
basis $\{\delta_k\}_{k=1}^N$ in it. For any $j=0,1,\dots,n$ with $n<
N$, consider the operator $D_j$  whose matrix representation with
respect to $\{\delta_k\}_{k=1}^N$ is a diagonal real matrix, i.\,e.,
$D_j\delta_k=d^{(j)}_k\delta_k$ for all $k=1,\dots,N$, where
$d_k^{(j)}$ is a real number. Also, let $S$ be the shift operator,
that is,
\begin{equation*}
  S\delta_k=
  \begin{cases}
    \delta_{k+1} & k=1,\dots,N-1\\
    0           & k=N\,.
  \end{cases}
\end{equation*}
The object of our considerations in this note is the symmetric
operator
\begin{equation*}
  A:=D_0+\sum_{j=1}^nS^jD_j + \sum_{j=1}^nD_j(S^*)^j\,.
\end{equation*}
Hence, the matrix representation of $A$ with respect to
$\{\delta_k\}_{k=1}^N$ is an Hermitian band matrix with real entries
which is denoted by $\mathcal{A}$. Alongside the matrix $\mathcal{A}$,
for any $j\in\{0,\dots,n\}$, we consider the diagonal matrix
$\mathcal{D}_j$ being the matrix representation with respect to
$\{\delta_k\}_{k=1}^N$ of the operator $D_j$.

We assume that the diagonals $\mathcal{D}_j$ ($j\in\{0,\dots,n\}$)
satisfy the following conditions.  The diagonal farthest from the main
one, that is
\begin{equation*}
 \mathcal{D}_n=\diag\{d_k^{(n)}\}\,,
\end{equation*}
is such that
one of the following alternatives takes place:
\begin{itemize}
\item For some $m_1<N-n+1$, all the numbers
$d_{1}^{(n)},\dots,d_{m_1-1}^{(n)}$ are strictly positive and
$d_{m_1}^{(n)}=\dots=d_{N-n}^{(n)}=0$.
\item All the elements of the sequence $\mathcal{D}_n$
are positive which we convene to mean that $m_1=N-n+1$. In this case,
we define $m_j=N-n+j$ for $j=2,\dots,n$.
\end{itemize}

If $m_1< N-n+1$, we say that $\mathcal{D}_n$ undergoes a degeneration
at $m_1$ and, then, the elements
$d_{m_1+1}^{(n-1)},\dots,d_{N-n+1}^{(n-1)}$ of $\mathcal{D}_{n-1}$
behave in the same way as the elements of $\mathcal{D}_n$, that is,
one of the following alternatives takes place:
\begin{itemize}
\item For some $m_2$ such that $m_1<m_2< N-n+2$,
  $d_{m_1+1}^{(n-1)},\ldots,d_{m_2-1}^{(n-1)}>0$ and
$d_{m_2}^{(n-1)}=\ldots=d_{N-n+1}^{(n-1)}=0$.
\item $d_k^{(n-1)}>0$ for $k=m_1+1,\dots,N-n+1$ and we define
  $m_2=N-n+2$ and $m_j=N-n+j$ for $j=3,\dots,n$.
\end{itemize}

We continue applying this rule up to some $j_0\le n-1$ such that
$m_{j_0}<N-n+j_0$ and $m_{j_0+1}=N-n+j_0+1$.  Finally, we define
$m_j=N-n+j$ for $j=j_0+2,\dots,n$.

Note that if one assumes that $m_j+1<m_{j+1}$ for $j=0,\dots,j_0-1$, i.\,e.,
there are no consecutive degenerations, then the elements of
$\mathcal{D}_{n-j}$ satisfy
\begin{align*}
  d_{m_j+1}^{(n-j)},\dots,d_{m_{j+1}-1}^{(n-j)}>0\,,\\
  d_{m_{j+1}}^{(n-j)}=\dots=d_{N-n+j}^{(n-j)}=0
\end{align*}
for $j=0,\dots,j_0-1$, with $m_0=0$. The entries of $\mathcal{D}_{n-j_0}$ satisfy
\begin{equation*}
 d_{m_{j_0}+1}^{(n-j_0)},\ldots,d_{N-n+j_0}^{(n-j_0)}>0\,.
\end{equation*}
The diagonal $\mathcal{D}_{n-j}$ undergoes a degeneration
at $m_{j+1}$ for $j\in\{0,\dots,j_0-1\}$. When $j_0=0$, there is no
degeneration of the diagonal $\mathcal{D}_{n}$. Note that
$\mathcal{D}_2$ is the innermost diagonal where a degeneration may
occur. Observe also that, in all cases, one has the set
of numbers $\{m_1,\dots,m_n\}$.

\begin{definition}
\label{def:matrices-degenerate}
Fix the natural numbers $n$ and $N$ such that $n<N$. All the matrices
satisfying the above properties for a given set of numbers
$\{m_i\}_{i=1}^n$ are denoted by $\mathcal{M}(n,N)$. Note that in this
notation, $N$ represents the dimension and $2n+1$ is the number of
diagonals of the matrices.
\end{definition}

\begin{figure}[h]
\begin{center}
\begin{tikzpicture}[scale=.18]\footnotesize
  \pgfmathsetmacro{\xone}{0} \pgfmathsetmacro{\xtwo}{ 30}
  \pgfmathsetmacro{\yone}{0} \pgfmathsetmacro{\ytwo}{30}
  \draw[step=1cm,gray,opacity=0.5,very thin] (\xone,\yone) grid
  (\xtwo,\ytwo); 
  \draw[step=1cm,gray,opacity=0.5,very thin] (35,7)
  grid (36,8); 
  \draw[step=1cm,gray,opacity=0.5,very thin] (35,5) grid
  (36,6); 
  \draw[step=1cm,gray,opacity=0.5,very thin] (35,7) grid
  (36,8); 
  \draw(46,7.5)node[text width=3cm,scale=1]{zeros}; 
  \draw(46,5.7)node[text width=3cm,scale=1]{real numbers}; 
  \draw(46,3.5)node[text width=3cm,scale=1]{positive numbers}; 
  \draw(43,22)node[text width=3cm,scale=1]{\bf degenerations}; 
  \draw(44,17.1)node[text width=4.2cm,align=justify, scale=.8] {(the dashed 
 arrow points to consecutive degenerations, i.\,e., when $m_{j+1}=m_j+1$)}; 
  \draw[->] (33.5,22) -- (16,21.5); 
  \draw[->] (33.5,22) -- (13,25.5); 
  \draw[dashed,->] (33.5,22) -- (17,19); 
  \draw[->] (33.5,22) -- (19,15.5); 
  \draw[->] (33.5,22) -- (23,11); 
  \draw[->] (33.5,22) -- (25,8);


\begin{scope}
  \filldraw[thin,gray,opacity=.4] (35,5)
    rectangle (36,6)
 ;
  \filldraw[thin,gray,opacity=1] (35,3)
    rectangle (36,4);
\end{scope}
\begin{scope}
\foreach \x in {0,1,2,3}
{
  \filldraw[thin,gray,opacity=1] (0+\x, 21-\x)
    rectangle (1+\x,22-\x)
 ;
   \filldraw[thin,gray,opacity=1] (8+\x, 30-\x)
     rectangle (9+\x,29-\x);}
\end{scope}
\begin{scope}
\foreach \x in {0,1,2,3,4}
{
  \filldraw[thin,gray,opacity=.4] (0+\x, 22-\x)
    rectangle (1+\x,23-\x)
 ;
   \filldraw[thin,gray,opacity=.4] (7+\x, 30-\x)
     rectangle (8+\x,29-\x);}
\end{scope}

\begin{scope}
\foreach \x in {5,6,7}
{
  \filldraw[thin,gray,opacity=1] (0+\x, 22-\x)
    rectangle (1+\x,23-\x)
 ;
   \filldraw[thin,gray,opacity=1] (7+\x, 30-\x)
     rectangle (8+\x,29-\x);}
\end{scope}
\begin{scope}
\foreach \x in {0,1,2,3,4,5,6,7,8}
{
  \filldraw[thin,gray,opacity=.4] (0+\x, 23-\x)
    rectangle (1+\x,24-\x)
 ;
   \filldraw[thin,gray,opacity=.4] (6+\x, 30-\x)
     rectangle (7+\x,29-\x);}
\end{scope}
\begin{scope}
\foreach \x in {9}
{
  \filldraw[thin,gray,opacity=1] (0+\x, 23-\x)
    rectangle (1+\x,24-\x)
 ;
   \filldraw[thin,gray,opacity=1] (6+\x, 30-\x)
     rectangle (7+\x,29-\x);}
\end{scope}
\begin{scope}
\foreach \x in {0,1,2,3,4,5,6,7,8,9,10}
{
  \filldraw[thin,gray,opacity=.4] (0+\x, 24-\x)
    rectangle (1+\x,25-\x)
 ;
   \filldraw[thin,gray,opacity=.4] (5+\x, 30-\x)
     rectangle (6+\x,29-\x);}
\end{scope}

\begin{scope}
\foreach \x in {0,1,2,3,4,5,6,7,8,9,10,11,12}
{
  \filldraw[thin,gray,opacity=.4] (0+\x, 25-\x)
    rectangle (1+\x,26-\x)
 ;
   \filldraw[thin,gray,opacity=.4] (4+\x, 30-\x)
     rectangle (5+\x,29-\x);}
\end{scope}

\begin{scope}
\foreach \x in {12,13}
{
  \filldraw[thin,gray,opacity=1] (0+\x, 25-\x)
    rectangle (1+\x,26-\x)
 ;
   \filldraw[thin,gray,opacity=1] (4+\x, 30-\x)
     rectangle (5+\x,29-\x);}
\end{scope}
\begin{scope}
\foreach \x in {0,1,2,3,4,5,6,7,8,9,10,11,12,13,14}
{
  \filldraw[thin,gray,opacity=.4] (0+\x, 26-\x)
    rectangle (1+\x,27-\x)
 ;
   \filldraw[thin,gray,opacity=.4] (3+\x, 30-\x)
     rectangle (4+\x,29-\x);}
\end{scope}

\begin{scope}
\foreach \x in {15,16,17,18}
{
  \filldraw[thin,gray,opacity=1] (0+\x, 26-\x)
    rectangle (1+\x,27-\x)
 ;
   \filldraw[thin,gray,opacity=1] (3+\x, 30-\x)
     rectangle (4+\x,29-\x);}
\end{scope}

\begin{scope}
\foreach \x in {0,1,2,3,4,5,6,7,8,9,10,11,12,13,14,15,16,17,18,19}
{
  \filldraw[thin,gray,opacity=.4] (0+\x, 27-\x)
    rectangle (1+\x,28-\x)
 ;
   \filldraw[thin,gray,opacity=.4] (2+\x, 30-\x)
     rectangle (3+\x,29-\x);}
\end{scope}

\begin{scope}
\foreach \x in {20,21}
{
  \filldraw[thin,gray,opacity=1] (0+\x, 27-\x)
    rectangle (1+\x,28-\x)
 ;
   \filldraw[thin,gray,opacity=1] (2+\x, 30-\x)
     rectangle (3+\x,29-\x);}
\end{scope}
\begin{scope}
\foreach \x in {0,1,2,3,4,5,6,7,8,9,10,11,12,13,14,15,
16,17,18,19,20,21,22}
{
  \filldraw[thin,gray,opacity=.4] (0+\x, 28-\x)
    rectangle (1+\x,29-\x)
 ;
   \filldraw[thin,gray,opacity=.4] (1+\x, 30-\x)
     rectangle (2+\x,29-\x);}
\end{scope}

\begin{scope}
\foreach \x in {23,24,25,26,27,28}
{
  \filldraw[thin,gray,opacity=1] (0+\x, 28-\x)
    rectangle (1+\x,29-\x)
 ;
   \filldraw[thin,gray,opacity=1] (1+\x, 30-\x)
     rectangle (2+\x,29-\x);}
\end{scope}

\begin{scope}
  \foreach \x in
  {0,1,2,3,4,5,6,7,8,9,10,11,12,13,14,15,16,17,18,19,
20,21,22,23,24,25,26,27,28,29}
  { \filldraw[thin,gray,opacity=.25] (0+\x, 29-\x) rectangle
    (1+\x,30-\x) ; \filldraw[thin,gray,opacity=.2] (0+\x, 29-\x)
    rectangle (1+\x,30-\x);}
\end{scope}
\end{tikzpicture}
\end{center}
\caption{The structure of a matrix in $\mathcal{M}(n,N)$}\label{fig:structure-class}
\end{figure}
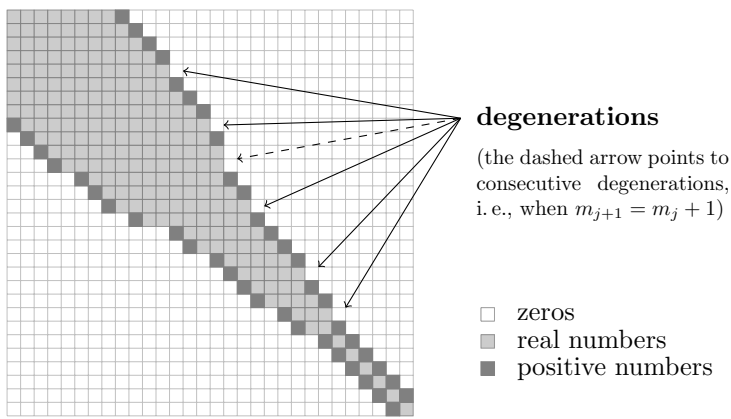


An example of a matrix in $\mathcal{M}(3,7)$, when $m_1=3$, $m_2=5$
and $m_3=7$ is the following.
\begin{equation*}
\mathcal{A}=
\begin{small}
  \begin{pmatrix}
    d^{(0)}_1&d^{(1)}_1&d^{(2)}_1&d^{(3)}_1&0&0&0\\[1mm]
    d^{(1)}_1&d^{(0)}_2&d^{(1)}_2&d^{(2)}_2&d^{(3)}_2&0&0\\[1mm]
    d^{(2)}_1&d^{(1)}_2&d^{(0)}_3&d^{(1)}_3&d^{(2)}_3&0&0\\[1mm]
    d^{(3)}_1&d^{(2)}_2&d^{(1)}_3&d^{(0)}_4&d^{(1)}_4&d^{(2)}_4&0\\[1mm]
    0&d^{(3)}_2&d^{(2)}_3&d^{(1)}_4&d^{(0)}_5&d^{(1)}_5&0\\[1mm]
    0&0&0&d^{(2)}_4&d^{(1)}_5&d^{(0)}_6&d^{(1)}_6\\[1mm]
    0&0&0&0&0&d^{(1)}_{6}&d^{(0)}_7
  \end{pmatrix}\,.
\end{small}
\end{equation*}
Here we say that the matrix $\mathcal{A}$ underwent a degeneration of
the diagonal $\mathcal{D}_3$ in $m_1=3$ and a degeneration of
$\mathcal{D}_2$ in $m_2=5$. Observe that $j_0=2$ and that
$d_1^{(3)},\ d_2^{(3)},\ d_{4}^{(2)},\ d_6^{(1)}$  are positive numbers.

It is known that the dynamics of a finite linear mass-spring system is
characterized by the spectral properties of a finite Jacobi matrix
\cite{MR2102477,mono-marchenko} (see Figure~\ref{fig:0}) when the system
is within the regime of validity of the Hooke law. The entries of the
Jacobi matrix are determined by the masses and spring constants of the
system
\cite{MR2915295,MR2998707,MR3113459,MR2102477,mono-marchenko}. The
movement of the mechanical system of Figure~\ref{fig:0} is a
superposition of harmonic oscillations whose frequencies are given by
the spectrum of the Jacobi operator.
\begin{figure}[h]
\begin{center}
\resizebox{1\textwidth}{!}{%
\begin{tikzpicture}
  [mass1/.style={circle,draw=black!80,fill=black!13,thick,inner sep=0pt,
   minimum size=5mm},
   mass2/.style={circle,draw=black!80,fill=black!13,thick,inner sep=0pt,
   minimum size=3.7mm},
   mass3/.style={circle,draw=black!80,fill=black!13,thick,inner sep=0pt,
   minimum size=5.7mm},
   mass4/.style={circle,draw=black!80,fill=black!13,thick,inner sep=0pt,
   minimum size=5mm},
   mass5/.style={circle,draw=black!80,fill=black!13,thick,inner sep=0pt,
   minimum size=4mm},
   mass6/.style={circle,draw=black!80,fill=black!13,thick,inner sep=0pt,
   minimum size=5.2mm},
   mass7/.style={circle,draw=black!80,fill=black!13,thick,inner sep=0pt,
   minimum size=6mm},
   mass8/.style={circle,draw=black!80,fill=black!13,thick,inner sep=0pt,
   minimum size=5.2mm},
   mass9/.style={circle,draw=black!80,fill=black!13,thick,inner sep=0pt,
   minimum size=5.4mm},
   massn/.style={circle,draw=black!80,fill=black!13,thick,inner sep=0pt,
   minimum size=5.2mm},
   wall/.style={postaction={draw,decorate,decoration={border,angle=-45,
   amplitude=0.3cm,segment length=1.5mm}}},
   wall1/.style={postaction={draw,decorate,decoration={border,angle=45,
   amplitude=0.3cm,segment length=1.5mm}}}]
  \node (massn) at (12.75,1) [massn] {};
  \node (mass9) at (11.5,1) [mass8] {};
  \node (mass8) at (10.25,1) [mass8] {};
  \node (mass7) at (9.0,1) [mass7] {};
  \node (mass6) at (7.75,1) [mass6] {};
  \node (mass5) at (6.5,1) [mass5] {};
  \node (mass4) at (5.25,1) [mass4] {};
  \node (mass3) at (4.0,1) [mass3] {};
  \node (mass2) at (2.75,1) [mass2] {};
  \node (mass1) at (1.5,1) [mass1] {};
\draw[decorate,decoration={coil,aspect=0.4,segment
  length=1.1mm,amplitude=0.7mm}] (0.5,1) -- node[below=4pt]
{} (mass1);
\draw[decorate,decoration={coil,aspect=0.4,segment
  length=1.4mm,amplitude=0.7mm}] (mass1) -- node[below=4pt]
{} (mass2);
\draw[decorate,decoration={coil,aspect=0.4,segment
  length=1.5mm,amplitude=0.7mm}] (mass2) -- node[below=4pt]
{} (mass3);
\draw[decorate,decoration={coil,aspect=0.4,segment
  length=1.1mm,amplitude=0.7mm}] (mass3) -- node[below=4pt]
{} (mass4);
\draw[decorate,decoration={coil,aspect=0.4,segment
  length=0.9mm,amplitude=0.7mm}] (mass4) -- node[below=4pt]
{} (mass5);
\draw[decorate,decoration={coil,aspect=0.4,segment
  length=1.4mm,amplitude=0.7mm}] (mass5) -- node[below=4pt]
{} (mass6);
\draw[decorate,decoration={coil,aspect=0.4,segment
  length=1.7mm,amplitude=0.7mm}] (mass6) -- node[below=4pt]
{} (mass7);
\draw[decorate,decoration={coil,aspect=0.4,segment
  length=0.8mm,amplitude=0.7mm}] (massn) -- node[below=4pt]
{} (13.75,1);
\draw[decorate,decoration={coil,aspect=0.4,segment
  length=1.1mm,amplitude=0.7mm}] (mass7) -- node[below=4pt]
{} (mass8);
\draw[decorate,decoration={coil,aspect=0.4,segment
  length=1.3mm,amplitude=0.7mm}] (mass8) -- node[below=4pt]
{} (mass9);
\draw[decorate,decoration={coil,aspect=0.4,segment
  length=1.7mm,amplitude=0.7mm}] (mass9) -- node[below=4pt]
{} (massn);
\draw[line width=.5pt,wall](0.5,1.7)--(0.5,0.3);
\draw[line width=.5pt,wall1](13.75,1.7)--(13.75,0.3);
\end{tikzpicture}
}
\end{center}
\caption{Mass-spring system corresponding to a Jacobi
  matrix}\label{fig:0}
\end{figure}
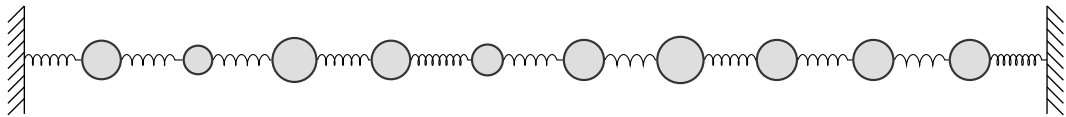
Analogously, one can deduce that a matrix in $\mathcal{M}(n,N)$ models
a linear mass-spring system where the interaction extends to all the
$n$ neighbors of each mass (see Appendix~\ref{sec:Mass-spring}). For instance, if the matrix is in
$\mathcal{M}(2,10)$ and no degeneration of the diagonals occurs,
viz. $m_1=9$, the corresponding mass-spring system is given in
Figure~\ref{fig:1}.
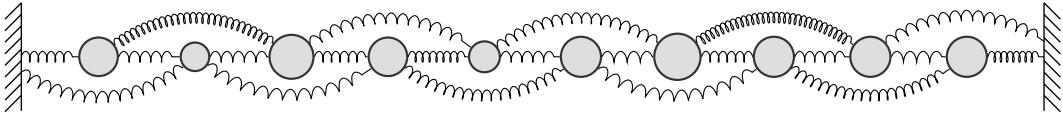
\begin{figure}[h]
\begin{center}
\resizebox{1\textwidth}{!}{%
\begin{tikzpicture}
  [mass1/.style={circle,draw=black!80,fill=black!13,thick,inner sep=0pt,
   minimum size=5mm},
   mass2/.style={circle,draw=black!80,fill=black!13,thick,inner sep=0pt,
   minimum size=3.7mm},
   mass3/.style={circle,draw=black!80,fill=black!13,thick,inner sep=0pt,
   minimum size=5.7mm},
   mass4/.style={circle,draw=black!80,fill=black!13,thick,inner sep=0pt,
   minimum size=5mm},
   mass5/.style={circle,draw=black!80,fill=black!13,thick,inner sep=0pt,
   minimum size=4mm},
   mass6/.style={circle,draw=black!80,fill=black!13,thick,inner sep=0pt,
   minimum size=5.2mm},
   mass7/.style={circle,draw=black!80,fill=black!13,thick,inner sep=0pt,
   minimum size=6mm},
   mass8/.style={circle,draw=black!80,fill=black!13,thick,inner sep=0pt,
   minimum size=5.2mm},
   mass9/.style={circle,draw=black!80,fill=black!13,thick,inner sep=0pt,
   minimum size=5.4mm},
   massn/.style={circle,draw=black!80,fill=black!13,thick,inner sep=0pt,
   minimum size=5.2mm},
   wall/.style={postaction={draw,decorate,decoration={border,angle=-45,
   amplitude=0.3cm,segment length=1.5mm}}},
   wall1/.style={postaction={draw,decorate,decoration={border,angle=45,
   amplitude=0.3cm,segment length=1.5mm}}}]
  \node (massn) at (12.75,1) [massn] {};
  \node (mass9) at (11.5,1) [mass8] {};
  \node (mass8) at (10.25,1) [mass8] {};
  \node (mass7) at (9.0,1) [mass7] {};
  \node (mass6) at (7.75,1) [mass6] {};
  \node (mass5) at (6.5,1) [mass5] {};
  \node (mass4) at (5.25,1) [mass4] {};
  \node (mass3) at (4.0,1) [mass3] {};
  \node (mass2) at (2.75,1) [mass2] {};
  \node (mass1) at (1.5,1) [mass1] {};
\draw[decorate,decoration={coil,aspect=0.4,segment
  length=1.1mm,amplitude=0.7mm}] (0.5,1) -- node[below=4pt]
{} (mass1);
\draw[decorate,decoration={coil,aspect=0.4,segment
  length=1.4mm,amplitude=0.7mm}] (mass1) -- node[below=4pt]
{} (mass2);
\draw[decorate,decoration={coil,aspect=0.4,segment
  length=1.5mm,amplitude=0.7mm}] (mass2) -- node[below=4pt]
{} (mass3);
\draw[decorate,decoration={coil,aspect=0.4,segment
  length=1.1mm,amplitude=0.7mm}] (mass3) -- node[below=4pt]
{} (mass4);
\draw[decorate,decoration={coil,aspect=0.4,segment
  length=0.9mm,amplitude=0.7mm}] (mass4) -- node[below=4pt]
{} (mass5);
\draw[decorate,decoration={coil,aspect=0.4,segment
  length=1.4mm,amplitude=0.7mm}] (mass5) -- node[below=4pt]
{} (mass6);
\draw[decorate,decoration={coil,aspect=0.4,segment
  length=1.7mm,amplitude=0.7mm}] (mass6) -- node[below=4pt]
{} (mass7);
\draw[decorate,decoration={coil,aspect=0.4,segment
  length=0.8mm,amplitude=0.7mm}] (massn) -- node[below=4pt]
{} (13.75,1);
\draw[decorate,decoration={coil,aspect=0.4,segment
  length=1.1mm,amplitude=0.7mm}] (mass7) -- node[below=4pt]
{} (mass8);
\draw[decorate,decoration={coil,aspect=0.4,segment
  length=1.3mm,amplitude=0.7mm}] (mass8) -- node[below=4pt]
{} (mass9);
\draw[decorate,decoration={coil,aspect=0.4,segment
  length=1.7mm,amplitude=0.7mm}] (mass9) -- node[below=4pt]
{} (massn);
\draw[decorate,decoration={coil,aspect=0.4,segment
  length=0.8mm,amplitude=0.7mm}] (mass1) to [bend left=35] node[below=4pt]
{} (mass3);
\draw[decorate,decoration={coil,aspect=0.4,segment
  length=1.5mm,amplitude=0.7mm}] (mass3) to [bend left=35] node[below=4pt]
{} (mass5);
\draw[decorate,decoration={coil,aspect=0.4,segment
  length=1.3mm,amplitude=0.7mm}] (mass5) to [bend left=35] node[below=4pt]
{} (mass7);
\draw[decorate,decoration={coil,aspect=0.4,segment
  length=0.7mm,amplitude=0.7mm}] (mass7) to [bend left=35] node[below=4pt]
{} (mass9);
\draw[decorate,decoration={coil,aspect=0.4,segment
  length=1.5mm,amplitude=0.7mm}] (mass9) to [bend left=35] node[below=4pt]
{} (13.75,1.2);
\draw[decorate,decoration={coil,aspect=0.4,segment
  length=1.5mm,amplitude=0.7mm}] (0.5,0.8) to [bend right=35] node[below=4pt]
{} (mass2);
\draw[decorate,decoration={coil,aspect=0.4,segment
  length=1.8mm,amplitude=0.7mm}] (mass2) to [bend right=35] node[below=4pt]
{} (mass4);
\draw[decorate,decoration={coil,aspect=0.4,segment
  length=1.2mm,amplitude=0.7mm}] (mass4) to [bend right=35] node[below=4pt]
{} (mass6);
\draw[decorate,decoration={coil,aspect=0.4,segment
  length=1.8mm,amplitude=0.7mm}] (mass6) to [bend right=35] node[below=4pt]
{} (mass8);
\draw[decorate,decoration={coil,aspect=0.4,segment
  length=1.1mm,amplitude=0.7mm}] (mass8) to [bend right=35] node[below=4pt]
{} (massn);
\draw[line width=.5pt,wall](0.5,1.7)--(0.5,0.3);
\draw[line width=.5pt,wall1](13.75,1.7)--(13.75,0.3);
\end{tikzpicture}
}
\end{center}
\caption{Mass-spring system of a matrix in
  $\mathcal{M}(2,10)$: nondegenerated case}\label{fig:1}
\end{figure}
If for another matrix in $\mathcal{M}(2,10)$, one has degeneration of the
diagonals, for instance $m_1=4$, the corresponding mass-spring system
is given in Figure~\ref{fig:2}.
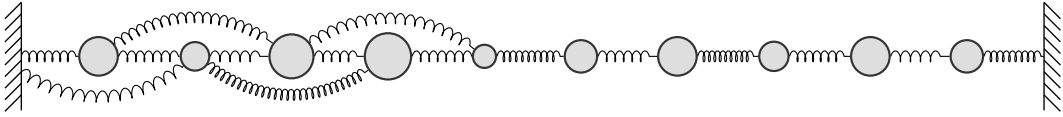
\begin{figure}[h]
\begin{center}
\resizebox{1\textwidth}{!}{%
\begin{tikzpicture}
  [mass1/.style={circle,draw=black!80,fill=black!13,thick,inner sep=0pt,
   minimum size=5mm},
   mass2/.style={circle,draw=black!80,fill=black!13,thick,inner sep=0pt,
   minimum size=3.7mm},
   mass3/.style={circle,draw=black!80,fill=black!13,thick,inner sep=0pt,
   minimum size=5.7mm},
   mass4/.style={circle,draw=black!80,fill=black!13,thick,inner sep=0pt,
   minimum size=6mm},
   mass5/.style={circle,draw=black!80,fill=black!13,thick,inner sep=0pt,
   minimum size=3mm},
   mass6/.style={circle,draw=black!80,fill=black!13,thick,inner sep=0pt,
   minimum size=4.2mm},
   mass7/.style={circle,draw=black!80,fill=black!13,thick,inner sep=0pt,
   minimum size=5mm},
   mass8/.style={circle,draw=black!80,fill=black!13,thick,inner sep=0pt,
   minimum size=3.8mm},
   mass9/.style={circle,draw=black!80,fill=black!13,thick,inner sep=0pt,
   minimum size=5mm},
   massn/.style={circle,draw=black!80,fill=black!13,thick,inner sep=0pt,
   minimum size=4.2mm},
   wall/.style={postaction={draw,decorate,decoration={border,angle=-45,
   amplitude=0.3cm,segment length=1.5mm}}},
   wall1/.style={postaction={draw,decorate,decoration={border,angle=45,
   amplitude=0.3cm,segment length=1.5mm}}}]
  \node (massn) at (12.75,1) [massn] {};
  \node (mass9) at (11.5,1) [mass9] {};
  \node (mass8) at (10.25,1) [mass8] {};
  \node (mass7) at (9.0,1) [mass7] {};
  \node (mass6) at (7.75,1) [mass6] {};
  \node (mass5) at (6.5,1) [mass5] {};
  \node (mass4) at (5.25,1) [mass4] {};
  \node (mass3) at (4.0,1) [mass3] {};
  \node (mass2) at (2.75,1) [mass2] {};
  \node (mass1) at (1.5,1) [mass1] {};
\draw[decorate,decoration={coil,aspect=0.4,segment
  length=1.0mm,amplitude=0.7mm}] (0.5,1) -- node[below=4pt]
{} (mass1);
\draw[decorate,decoration={coil,aspect=0.4,segment
  length=1.1mm,amplitude=0.7mm}] (mass1) -- node[below=4pt]
{} (mass2);
\draw[decorate,decoration={coil,aspect=0.4,segment
  length=1.3mm,amplitude=0.7mm}] (mass2) -- node[below=4pt]
{} (mass3);
\draw[decorate,decoration={coil,aspect=0.4,segment
  length=1.1mm,amplitude=0.7mm}] (mass3) -- node[below=4pt]
{} (mass4);
\draw[decorate,decoration={coil,aspect=0.4,segment
  length=1.1mm,amplitude=0.7mm}] (mass4) -- node[below=4pt]
{} (mass5);
\draw[decorate,decoration={coil,aspect=0.4,segment
  length=0.8mm,amplitude=0.7mm}] (mass5) -- node[below=4pt]
{} (mass6);
\draw[decorate,decoration={coil,aspect=0.4,segment
  length=1.1mm,amplitude=0.7mm}] (mass6) -- node[below=4pt]
{} (mass7);
\draw[decorate,decoration={coil,aspect=0.4,segment
  length=1.0mm,amplitude=0.7mm}] (mass8) -- node[below=4pt]
{} (mass9);
\draw[decorate,decoration={coil,aspect=0.4,segment
  length=0.8mm,amplitude=0.7mm}] (massn) -- node[below=4pt]
{} (13.75,1);
\draw[decorate,decoration={coil,aspect=0.4,segment
  length=0.7mm,amplitude=0.7mm}] (mass7) -- node[below=4pt]
{} (mass8);
\draw[decorate,decoration={coil,aspect=0.4,segment
  length=1.3mm,amplitude=0.7mm}] (mass9) -- node[below=4pt]
{} (massn);
\draw[decorate,decoration={coil,aspect=0.4,segment
  length=1.1mm,amplitude=0.7mm}] (mass1) to [bend left=35] node[below=4pt]
{} (mass3);
\draw[decorate,decoration={coil,aspect=0.4,segment
  length=1.3mm,amplitude=0.7mm}] (mass3) to [bend left=35] node[below=4pt]
{} (mass5);
\draw[decorate,decoration={coil,aspect=0.4,segment
  length=1.5mm,amplitude=0.7mm}] (0.5,0.8) to [bend right=35] node[below=4pt]
{} (mass2);
\draw[decorate,decoration={coil,aspect=0.4,segment
  length=0.8mm,amplitude=0.7mm}] (mass2) to [bend right=35] node[below=4pt]
{} (mass4);
\draw[line width=.5pt,wall](0.5,1.7)--(0.5,0.3);
\draw[line width=.5pt,wall1](13.75,1.7)--(13.75,0.3);
\end{tikzpicture}
}
\end{center}
\caption{Mass-spring system of a matrix in
  $\mathcal{M}(2,10)$: degenerated case}\label{fig:2}
\end{figure}

In this work, the approach to the inverse spectral analysis of the
operators whose matrix representation belongs to $\mathcal{M}(n,N)$ is
based on the one used in \cite{MR1668981,MR1699440}, but it allows to
treat the case of arbitrary $n$. An important ingredient of the method
used here is the linear interpolation of $n$-dimensional vector
polynomials, recently developed in \cite{MR3389906}.  The linear
interpolation theory of \cite{MR3389906} is a nontrivial
generalization of the rational interpolation theory developed in
\cite{MR2533388} from ideas given in \cite{MR1668981,MR1699440}.  It
is on the basis of the results of \cite{MR3389906} that the inverse
spectral theory developed in \cite{MR1668981,MR1699440} is extended
here to band matrices with $2n+1$ diagonals ($n\in\nats$). This
generalization required some new developments, in particular, we
modified the technique used in the reconstruction of the matrices in
the class $\mathcal{M}(n,N)$ which was concocted having uniqueness of
the reconstruction in mind. Indeed, the class $\mathcal{M}(2,N)$ turns
out to be a subclass of the class studied in
\cite{MR1668981,MR1699440} for which uniqueness of the reconstruction
is guaranteed.

The main results of this paper are:
\begin{enumerate}[(A)]
\item\label{it:introduction-a} A complete characterization of the
  spectral functions (measures) of operators whose matrix
  representation with respect to some orthonormal basis is an element
  of $\mathcal{M}(n,N)$
  (Proposition~\ref{prop:properties-spectral-function}).
\item\label{it:introduction-b} A characterization for degenerations of
  the diagonals of $\mathcal{A}$ in terms of the presence of
  polynomials of zero norm in $L_2(\reals,\sigma)$, where $\sigma$ is
  the spectral measure of the operator $A$.
\item\label{it:introduction-c} An algorithm for reconstructing the
  matrix from the corresponding spectral measure
  (Section~\ref{sec:reconstruction}).
\item\label{it:introduction-d} Uniqueness of reconstruction (see
  Theorem~\ref{thm:sigma-unique})
\end{enumerate}
After finishing the first version of this paper, we learned about
\cite{marchenko-slavin} (see Acknowledgments). This book considers
inverse spectral problems for a class of band symmetric finite
matrices and introduces the notion of \emph{restricted spectral data}
for which any matrix-valued spectral function in
(\ref{it:introduction-a}) is a particular case. Some of the results of
\cite{marchenko-slavin} concerning this matter are based on
\cite{MR3087910}.
Theorem 8 in \cite[Sec.\,15]{marchenko-slavin} gives necessary and
sufficient conditions for the matrix to have restricted spectral data
from which the matrix can be recovered. Moreover, from the results of
\cite[Sec.\,14]{marchenko-slavin}, one knows when the reconstruction
is unique. Theorem 9 in \cite[Sec.\,15]{marchenko-slavin}
characterizes the restricted spectral data when the matrix can be
reconstructed from them. The class $\mathcal{M}(n,N)$ satisfies
\cite[Thm.\,8 Sec.\,15]{marchenko-slavin} and the distribution of
positive numbers in Figure~\ref{fig:structure-class} guarantees that
the matrix is not ``decomposable'' \cite[Sec.\,12]{marchenko-slavin}
and satisfies the conditions for a unique reconstruction
\cite[Sec.\,14]{marchenko-slavin}.  Thus, some of our results, and
particularly (\ref{it:introduction-d}), are actually contained in
\cite{marchenko-slavin}. It is worth remarking that
(\ref{it:introduction-a}) is a different characterization of a
particular case of restricted spectral data from which the matrix can
be reconstructed. The algorithm of recontruction
(\ref{it:introduction-c}) differs from the one of
\cite{marchenko-slavin} and, remarkably, it can be applied without
essential modifications to spectral functions with infinitely many
points of increase (see \cite{MR3543793}). Finally, (\ref{it:introduction-b}) has no anlogue
in \cite{marchenko-slavin} and introduces \emph{inner} boundary
conditions (see Section~\ref{sec:spectral-measure})

It is known that if there exists a natural number $K$ such that
$N=Kn$, then a band matrix with $2n+1$ diagonals can be reduced to a
tridiagonal block matrix. However, the spectral theory for tridiagonal
block matrices requires that the off-diagonal block matrices be
invertible.  The matrices in $\mathcal{M}(n,N)$ do not satisfy this
requirement when there is a degeneration of the diagonals. The
technique for recovering a matrix from its spectral function developed
in this paper is applicable to any element in $\mathcal{M}(n,N)$ even
when $N$ is not an integer multiple of $n$.


This paper is organized as follows. The next section deals with the
direct spectral analysis of the operators under consideration. In this
section, a family of spectral functions is constructed for each
element in $\mathcal{M}(n,N)$. In
Section~\ref{sec:direct-spectral-analysis}, the connection of the
spectral analysis and the interpolation problem is
established. Section~\ref{sec:reconstruction} treats the problem of
reconstruction and characterization. In
Section~\ref{sec:alternative-methods}, we discuss alternative
approaches to the inverse spectral problem and give a comparative
analysis with the method given in
Section~\ref{sec:reconstruction}. The
Appendix~\ref{sec:Mass-spring} gives a brief account of how to
deduce the band symmetric matrix associated with a mass-spring system
from the dynamical equations.

\section{The spectral function}
\label{sec:spectral-measure}
Consider $\varphi=\sum_{k=1}^N\varphi_k\delta_k\in\cH$ and the equation
\begin{equation}
  \label{eq:eigenvectors}
  (A-z I)\varphi=0, \; z\in\mathbb{C}\,.
\end{equation}
We know that the equation has nontrivial solutions only for a finite
set of $z$.

From (\ref{eq:eigenvectors}) one obtains a system of $N$ equations,
where each equation, given by a fixed $k\in\{1,\dots,N\}$, is of
the form
\begin{equation}
\label{eq:recurrence}
\sum_{i=0}^{n-1} d_{k-n+i}^{(n-i)}\varphi_{k-n+i}+
d_k^{(0)}\varphi_k+
\sum_{i=1}^nd_k^{(i)}\varphi_{k+i}=z \varphi_k\,,
\end{equation}
where it has been assumed that
\begin{subequations}
  \label{eq:first-boundary-conditions}
\begin{align}
\label{eq:first-boundary-conditions-a}
\varphi_{k}=0\,,\quad\text{for}\ k&<1\,,\\
\label{eq:first-boundary-conditions-b}
\varphi_{k}=0\,,\quad\text{for}\ k&>N\,.
\end{align}
\end{subequations}

One can consider (\ref{eq:first-boundary-conditions}) as boundary
conditions where (\ref{eq:first-boundary-conditions-a}) is the condition
at the left endpoint and (\ref{eq:first-boundary-conditions-b}) is the
condition at the right endpoint.

The system (\ref{eq:recurrence}) with
(\ref{eq:first-boundary-conditions}), restricted to $k\in
\{1,2,\dots,N\}\setminus\{m_i\}_{i=1}^{n}$, can be solved
recursively whenever the first $n$ entries of the vector $\varphi$ are
given. Let $\varphi^{(j)}(z)$ ( $j\in\{1,\dots,n\}$) be a solution of
(\ref{eq:recurrence}) for all $k\in
\{1,2,\dots,N\}\setminus\{m_i\}_{i=1}^{n}$ such that
\begin{equation}
  \label{eq:initial-condition}
\inner{\delta_i}{\varphi^{(j)}(z)}=t_{ji}, \;\text{for}\; i=1,\dots,n\,,
\end{equation}
where $\mathscr{T}=\{t_{ji}\}_{j,i=1}^n$ is an upper triangular real
matrix and $t_{jj}\ne 0$ for all $j\in\{1,\dots,n\}$. In
(\ref{eq:initial-condition}) and in the sequel, we consider the inner
product in $\cH$ to be antilinear in its first argument.

The condition given by (\ref{eq:initial-condition}) can be seen as the
initial conditions for the system (\ref{eq:recurrence}) and
(\ref{eq:first-boundary-conditions-a}). We emphasize that given the
boundary condition at the left endpoint
(\ref{eq:first-boundary-conditions-a}) and the initial condition
(\ref{eq:initial-condition}), the system restricted to
$k\in\{1,2,\dots,N\}\setminus\{m_i\}_{i=1}^{n}$ has a unique
solution for any fixed $j\in\{1,\dots,n\}$ and $z\in\mathbb{C}$.

\begin{remark}
\label{rem:fundamental-system}
Note that the properties of the matrix $\mathscr{T}$ guarantee that
the collection of vectors $\{\varphi^{(j)}(z)\}_{j=1}^n$ is a
fundamental system of solutions of (\ref{eq:recurrence}) restricted to
$k\in \{1,2,\dots,N\}\setminus\{m_i\}_{i=1}^{n}$ with the boundary
condition (\ref{eq:first-boundary-conditions-a}).
\end{remark}
The entries of the vector $\varphi^{(j)}(z)$ are
polynomials, so we denote $P_k^{(j)}(z):=\varphi_k^{(j)}(z)$, for all
$k\in\{1,\dots,N\}$. And, define
\begin{equation*}
  Q_i^{(j)}(z):=
(z-d_{m_i}^{(0)})P_{m_i}^{(j)}(z)-\sum_{k=0}^{n-1}
d_{m_i-n+k}^{(n-k)}P_{m_i-n+k}^{(j)}(z)-\sum_{k=1}^{n-i}d_{m_i}^{(k)}P_{m_i+k}^{(j)}(z)
\end{equation*}
for $i\in\{1,\dots,n\}$ (it is assumed that the last sum is zero when
$i=n$).

It is worth remarking that the polynomials $\{P_k^{(j)}(z)\}_{k=1}^N$
and $\{Q_i^{(j)}(z)\}_{i=1}^n$ depend on the initial conditions given
by the matrix $\mathscr{T}$.

Define the matrix

\begin{equation*}
\mathscr{Q}(z):=\begin{pmatrix}
Q_{1}^{(1)}(z)&\dots&Q_{1}^{(n)}(z)\\
\vdots&\ddots&\vdots\\
Q_{n}^{(1)}(z)&\dots&Q_{n}^{(n)}(z)
\end{pmatrix}\,, \quad \text{for }z\in\complex\,.
\end{equation*}

There exists a solution of (\ref{eq:eigenvectors}) for a given
$z\in\complex$ if and only if there is a vector $(\beta_1(z),\dots,\beta_n(z))^t$ such that
\begin{equation}
  \label{eq:system-homoge}
    \mathscr{Q}(z)\begin{pmatrix}
\beta_1(z)\\
\vdots\\
\beta_n(z)
\end{pmatrix}=0\,.
\end{equation}

Indeed, since $\{\varphi^{(j)}(z)\}_{j=1}^n$ is a fundamental system
of solutions of (\ref{eq:recurrence}) restricted to
$k\in \{1,2,\dots,N\}\setminus\{m_i\}_{i=1}^{n}$ with the boundary
condition (\ref{eq:first-boundary-conditions-a}), the vector $\beta(z)$, given by
\begin{equation}
\label{eq:sol-beta}
\beta(z)=\sum_{j=1}^n\beta_j(z)\varphi^{(j)}(z)\,,
\end{equation}
is a solution of (\ref{eq:recurrence}) restricted to
$k\in \{1,2,\dots,N\}\setminus\{m_i\}_{i=1}^{n}$, satisfying
(\ref{eq:first-boundary-conditions-a}), for any collection of complex
numbers $\{\beta_j(z)\}_{j=1}^n$. Thus, it follows from
(\ref{eq:recurrence}) and (\ref{eq:first-boundary-conditions-b}) that
\begin{equation*}
  (A-zI)\beta(z)=\sum_{k=1}^{N}c_k(z)\delta_k\,,
\end{equation*}
where
\begin{equation*}
c_k(z):=
\begin{cases}
  \sum_{j=1}^{n}\beta_j(z)Q_{i}^{(j)}(z)&\text{if } k=m_i,
  \text{ for all } i=1,\dots,n\,,\\
  0&\text{otherwise}\,.
\end{cases}
\end{equation*}

Therefore, (\ref{eq:sol-beta}) is a solution of
(\ref{eq:eigenvectors}) if and only if
\begin{equation}
\label{system-beta}
    \sum_{j=1}^{n}\beta_j(z)Q_{i}^{(j)}(z)=0
\end{equation}
for all $i\in\{1,\dots,n\}$, which is equivalent to
(\ref{eq:system-homoge}).

\begin{lemma}
\label{lem:rank-equivalent-multiplicity}
Let $\widetilde{n}(z):=\dim \ker(A-zI)$. Then,
\begin{equation*}
  \rank(\mathscr{Q}(z))=n-\widetilde{n}(z)\,.
\end{equation*}
Observe that $\widetilde{n}(z)\leq n$ for all $z\in\complex$.
\end{lemma}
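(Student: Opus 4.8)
The plan is to realise $\ker(A-zI)$ as the image, under an injective linear map from $\complex^n$, of $\ker\mathscr{Q}(z)$, so that the rank--nullity identity for the $n\times n$ matrix $\mathscr{Q}(z)$ immediately yields the claim.

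First I would introduce the linear map $R_z\colon\complex^n\to\cH$, $R_z(\beta_1,\dots,\beta_n)^{\!\top}:=\sum_{j=1}^n\beta_j\varphi^{(j)}(z)$. By Remark~\ref{rem:fundamental-system} the vectors $\varphi^{(1)}(z),\dots,\varphi^{(n)}(z)$ are linearly independent (this is where the hypotheses $t_{jj}\ne0$ on the upper triangular matrix $\mathscr{T}$ enter, through \eqref{eq:initial-condition}), so $R_z$ is injective; moreover, again by Remark~\ref{rem:fundamental-system}, its range is exactly the solution space $V_z$ of \eqref{eq:recurrence} for all $k\in\{1,\dots,N\}\setminus\{m_i\}_{i=1}^n$ together with the conventions \eqref{eq:first-boundary-conditions}, and $\dim V_z=n$ since such a solution is uniquely determined by its first $n$ entries.

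Next I would invoke the computation carried out just before the statement: for every $\beta\in\complex^n$ one has $(A-zI)R_z\beta=\sum_{i=1}^n\bigl(\mathscr{Q}(z)\beta\bigr)_i\,\delta_{m_i}$, because $Q_i^{(j)}(z)$ is precisely the residual of the $m_i$-th equation of $(A-zI)\varphi^{(j)}(z)=0$ (the truncation of the last sum in the definition of $Q_i^{(j)}$ to $k=1,\dots,n-i$ being legitimate since, by the degeneration pattern of the diagonals, $d_{m_i}^{(k)}=0$ for $n-i<k\le n$, while the terms with $m_i+k>N$ vanish by \eqref{eq:first-boundary-conditions-b}). Since $\{\delta_{m_i}\}_{i=1}^n$ is orthonormal, this identity gives the equivalence $R_z\beta\in\ker(A-zI)\iff\mathscr{Q}(z)\beta=0$.

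Finally, observe that $\ker(A-zI)\subseteq V_z=\ran R_z$, because any eigenvector of $A$ at $z$ solves in particular the restricted system; hence every $\psi\in\ker(A-zI)$ equals $R_z\beta$ for a unique $\beta$, and by the previous step $\beta\in\ker\mathscr{Q}(z)$. Thus $R_z$ restricts to a linear isomorphism of $\ker\mathscr{Q}(z)$ onto $\ker(A-zI)$, so $\widetilde n(z)=\dim\ker\mathscr{Q}(z)=n-\rank\mathscr{Q}(z)$ by rank--nullity, and $\widetilde n(z)\le n$ follows at once from $\rank\mathscr{Q}(z)\ge0$. The only genuinely delicate point is the bookkeeping identifying $Q_i^{(j)}(z)$ with the $m_i$-th residual of $(A-zI)\varphi^{(j)}(z)=0$; but this is exactly what the discussion preceding the lemma has already settled, so the proof reduces to assembling these facts.
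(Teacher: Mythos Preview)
Your argument is correct and follows essentially the same route as the paper: the paper's proof simply invokes the Kronecker--Capelli--Rouch\'e (rank--nullity) theorem to conclude that the solution space of \eqref{eq:system-homoge} has dimension $n-\rank\mathscr{Q}(z)$, relying on the preceding discussion to identify that solution space with $\ker(A-zI)$ via $\beta\mapsto\sum_j\beta_j\varphi^{(j)}(z)$. You have made this identification explicit by naming the map $R_z$, checking its injectivity, and verifying $\ker(A-zI)\subseteq\ran R_z$, which is a welcome elaboration of what the paper leaves implicit.
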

\begin{proof}
  The proof is straightforward. Having fixed $z\in\mathbb{C}$, one
  recurs to the Kronecker-Capelli-Rouch\'e Theorem (see
  \cite[Chap.\,3 Secs.\,1-2]{MR891403}) to obtain that the dimension of
  the space of solutions of (\ref{eq:system-homoge}) is equal to
  $n-\rank(\mathscr{Q}(z))$.
\end{proof}

Immediately from Lemma~\ref{lem:rank-equivalent-multiplicity} it
follows that
\begin{equation*}
  \spec(A)=\{z\in\mathbb{C}: \det\mathscr{Q}(z)=0\}\,.
\end{equation*}

Fix $j\in\{1,\dots,n\}$. For $\varphi^{(j)}(z_0)$ to be a solution of
(\ref{eq:eigenvectors}), the equation
\begin{equation}
  \label{eq:poly-Q-equal-zero}
  Q_{i}^{(j)}(z_0)=0
\end{equation}
should be satisfied for any $i\in\{1,\dots,n\}$. The conditions
(\ref{eq:poly-Q-equal-zero}) can be seen as \emph{inner} boundary
conditions (of the right endpoint type) for the difference equation
(\ref{eq:recurrence}). Note that the degeneration of diagonals gives
rise to inner boundary conditions.

Let $\{x_k\}_{k=1}^N$ be such that $x_k\in\spec(A)$ for $k=1,\dots,N$. Note
that the elements of this sequence have been enumerated taking
into account the multiplicity of eigenvalues. Also, let $\alpha(x_k)$ be
the corresponding eigenvectors such that
\begin{equation*}
  \inner{\alpha(x_k)}{\alpha(x_l)}=\delta_{kl}\,,\quad
  \text{ with }k,l\in\{1,\dots,N\}\,.
\end{equation*}

It follows from Remark~\ref{rem:fundamental-system} that, for any
fixed $k=1,\dots,N$, there are complex numbers $\alpha_j(x_k)$
($j=1,\dots,n$) such that
\begin{equation}
  \label{eq:linear-combination-phi-k}
  \alpha(x_k)=\sum_{j=1}^n\alpha_j(x_k)\varphi^{(j)}(x_k)
\end{equation}
for any $k\in\{1,\dots,N\}$. Clearly, by construction
\begin{equation}
  \label{eq:condition-on-alpha-misha}
\sum_{j=1}^n\abs{\alpha_j(x_k)}>0\quad\text{for all }\,
k\in\{1,\dots,N\}\,.
\end{equation}
Additionally, since $\{\alpha(x_k)\}_{k=1}^N$ is a basis of $\cH$,
it follows from (\ref{eq:initial-condition}) that
\begin{equation}
  \label{eq:condition-on-alpha}
\sum_{k=1}^N\abs{\alpha_j(x_k)}>0\quad\text{for all }\,
j\in\{1,\dots,n\}\,.
\end{equation}
By (\ref{system-beta}) and the fact that $\alpha(x_k)\in\ker(A-x_kI)$,
it follows that
\begin{equation}
  \label{eq:interpolation-alpha}
  \sum_{j=1}^n\alpha_j(x_k)Q_i^{(j)}(x_k)=
0 \quad \text{for all } i\in\{1,\dots,n\}
\end{equation}
is true.

Now, define the matrix valued function
\begin{equation}
  \label{eq:function-measure}
  \sigma(t):=\sum_{x_k<t}\sigma_k\,,
\end{equation}
where
 \begin{equation}
   \label{eq:measure-sum}
 \sigma_k=
   \left(\begin{array}{cccc}
 \abs{\alpha_1(x_k)}^2&{\alpha_1(x_k)}\overline{\alpha_2(x_k)}
&\dots&{\alpha_1(x_k)}\overline{\alpha_n(x_k)}\\
 {\alpha_2(x_k)}\overline{\alpha_1(x_k)}
&\abs{\alpha_2(x_k)}^2&\dots&{\alpha_2(x_k)}\overline{\alpha_n(x_k)}\\
 \vdots&\vdots&\ddots&\vdots\\
{\alpha_n(x_k)}\overline{\alpha_1(x_k)}
&{\alpha_n(x_k)}\overline{\alpha_2(x_k)}&\dots&\abs{\alpha_n(x_k)}^2
 \end{array}
\right)
 \end{equation}
is a rank-one, nonnegative matrix
(cf. \cite[Sec.\,1]{MR1668981}). Note that
$\sigma(t)=\sigma^{\mathscr{T}}(t)$ depends on the initial conditions
given by $\mathscr{T}$.

We have thus arrived at the following
\begin{proposition}
\label{prop:properties-spectral-function}
The matrix valued function $\sigma(t)$ has the following properties:
\begin{enumerate}[$i$)]
\item It is a nondecreasing monotone step
  function.\label{item:properties-spectral-1}
\item Each jump of the function is a matrix whose rank is not greater than
  $n$.\label{item:properties-spectral-2}
\item The sum of the ranks of all jumps is equal to $N$ (the
  dimension of the space
  $\mathcal{H}$).\label{item:properties-spectral-3}
\end{enumerate}
\end{proposition}

\begin{remark}
  \label{rem:matrix-valued-measures}
  A nondecreasing matrix valued function satisfying
  $\ref{item:properties-spectral-2}$)--$\ref{item:properties-spectral-3}$)
  uniquely determines a Borel matrix valued measure with finite
  support (see \cite[Sec.\,72]{MR1255973}).
\end{remark}

For any matrix valued function $\sigma(t)$ satisfying properties
$\ref{item:properties-spectral-1}$)-$\ref{item:properties-spectral-3}$),
there is a collection of vectors $\{\alpha(x_k)\}_{k=1}^N$ satisfying
(\ref{eq:condition-on-alpha-misha}) and
(\ref{eq:condition-on-alpha}) such that $\sigma(t)$ is given by
(\ref{eq:function-measure}) and (\ref{eq:measure-sum})
(cf. \cite[Thm.\,2.2]{MR1668981}).

If $\mathscr{T}=I$, then
$\sigma_{ij}(t)=\inner{\delta_i}{E(t)\delta_j}$
($i,j\in\{1,\dots,n\}$), where $E(t)$ is the spectral resolution of
$A$. Indeed,
\begin{align*}
  \inner{\delta_i}{E(t)\delta_j}&
  =\inner{\delta_i}{\sum_{x_l<t}\inner{\alpha(x_l)}{\delta_j}\alpha(x_l)}=
  \sum_{x_l<t}\inner{\alpha(x_l)}{\delta_j}\inner{\delta_i}{\alpha(x_l)}\\
  &=\sum_{x_l<t}\overline{\alpha_j(x_l)}\alpha_i(x_l)=\sigma_{ij}(t)\,.
\end{align*}

Therefore, in this case, the matrix valued function $\sigma(t)$ is the
spectral function of the operator $A$ with respect to
$\{\delta_k\}_{k=1}^n$.

\begin{definition}
  \label{def:sigma-mathfrak-M}
  The set of all matrix valued functions $\sigma(t)$ given by
  (\ref{eq:function-measure}) and (\ref{eq:measure-sum}), where the
  collection of vectors $\{\alpha(x_k)\}_{k=1}^N$ satisfies
  (\ref{eq:condition-on-alpha-misha}) and
  (\ref{eq:condition-on-alpha}), is denoted by $\mathfrak{M}(n,N)$.
\end{definition}

Note that any matrix valued function in $\mathfrak{M}(n,N)$ satisfies
$\ref{item:properties-spectral-1}$)--$\ref{item:properties-spectral-3}$)
of Proposition~\ref{prop:properties-spectral-function}. On the basis
of what has been discussed we refer to the matrix valued functions in
$\mathfrak{M}(n,N)$ as spectral functions. Alternatively, by
Remark~\ref{rem:matrix-valued-measures}, one can consider the
corresponding matrix valued measures (spectral measures).

Consider the Hilbert space $L_2(\mathbb{R},\sigma)$, where $\sigma$ is
the spectral function corresponding to the operator $A$ given by
(\ref{eq:function-measure}) and (\ref{eq:measure-sum}) (see
\cite[Sec.\,72]{MR1255973}). We agree that the inner product
$\inner{\cdot}{\cdot}$ is antilinear in its first argument. Clearly,
the property $\ref{item:properties-spectral-3}$) implies that
$L_2(\mathbb{R},\sigma)$ is an $N$-dimensional space and in each
equivalence class there is an $n$-dimensional vector polynomial.

Define the vector polynomials in $L_2(\mathbb{R},\sigma)$
\begin{equation}
\label{eq:def-q}
\boldsymbol{q}_i(z):=(Q_i^{(1)}(z),\dots,Q_i^{(n)}(z))^t
\end{equation}
for all $i\in\{1,\dots,n\}$, and
 \begin{equation}
 \label{eq:def-p-i}
   \boldsymbol{p}_k(z):=\left(P_k^{(1)}(z),\dots,P_k^{(n)}(z)\right)^t
 \end{equation}
 for all $k\in\{1,\dots,N\}$.
\begin{lemma}
  \label{lem:ortonormal-p-L2}
  The vector polynomials $\{\boldsymbol{p}_k(z)\}_{k=1}^N$, defined by
  (\ref{eq:def-p-i}), satisfy
\begin{equation*}
  \inner{\boldsymbol{p}_j}{\boldsymbol{p}_k}_{L_2(\mathbb{R},\sigma)}
=\delta_{jk}
\end{equation*}
for $j,k\in\{1,\dots,N\}$.
\end{lemma}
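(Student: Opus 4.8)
The plan is to unfold the $L_2(\mathbb{R},\sigma)$-inner product into an atomic sum and then, exploiting the rank-one structure of the jumps $\sigma_k$, recognize it as a Parseval-type sum for the orthonormal basis $\{\alpha(x_k)\}_{k=1}^N$ of $\cH$.

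First, since $\sigma$ is the matrix valued step function (\ref{eq:function-measure}) whose jump at $x_l$ is the rank-one matrix $\sigma_l$ of (\ref{eq:measure-sum}), integration against $d\sigma$ reduces to a finite sum and, for any two $n$-dimensional vector polynomials $\boldsymbol{f},\boldsymbol{g}$,
\begin{equation*}
  \inner{\boldsymbol{f}}{\boldsymbol{g}}_{L_2(\mathbb{R},\sigma)}
  =\sum_{l=1}^N\boldsymbol{f}(x_l)^{*}\sigma_l\,\boldsymbol{g}(x_l)\,,
\end{equation*}
the conjugate transpose $\boldsymbol{f}(x_l)^{*}$ encoding the antilinearity in the first argument. (Coinciding eigenvalues require no special treatment: the jump of $\sigma$ at a multiple point is the sum of the corresponding $\sigma_l$, which is exactly what the sum over $l=1,\dots,N$ records; and each $\boldsymbol{p}_k$ lies in $L_2(\mathbb{R},\sigma)$ because its entries are polynomials, so the left-hand side of the asserted identity is well defined on equivalence classes.) Next I would use that $\sigma_l=\overline{\boldsymbol{a}_l}\,\boldsymbol{a}_l^{\,t}$ with $\boldsymbol{a}_l:=(\alpha_1(x_l),\dots,\alpha_n(x_l))^{t}$, so that
\begin{equation*}
  \boldsymbol{f}(x_l)^{*}\sigma_l\,\boldsymbol{g}(x_l)
  =\overline{\bigl(\boldsymbol{a}_l^{\,t}\boldsymbol{f}(x_l)\bigr)}\;\boldsymbol{a}_l^{\,t}\boldsymbol{g}(x_l)\,,
\end{equation*}
and then evaluate the scalar $\boldsymbol{a}_l^{\,t}\boldsymbol{p}_j(x_l)$ using (\ref{eq:def-p-i}) and $P_j^{(i)}(z)=\varphi_j^{(i)}(z)=\inner{\delta_j}{\varphi^{(i)}(z)}$:
\begin{equation*}
  \boldsymbol{a}_l^{\,t}\boldsymbol{p}_j(x_l)=\sum_{i=1}^{n}\alpha_i(x_l)P_j^{(i)}(x_l)
  =\Bigl\langle\delta_j,\ \sum_{i=1}^{n}\alpha_i(x_l)\varphi^{(i)}(x_l)\Bigr\rangle
  =\inner{\delta_j}{\alpha(x_l)}\,,
\end{equation*}
the last equality being (\ref{eq:linear-combination-phi-k}); i.e.\ $\boldsymbol{a}_l^{\,t}\boldsymbol{p}_j(x_l)$ is just the $j$-th coordinate of the eigenvector $\alpha(x_l)$.

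Putting these together gives
\begin{equation*}
  \inner{\boldsymbol{p}_j}{\boldsymbol{p}_k}_{L_2(\mathbb{R},\sigma)}
  =\sum_{l=1}^{N}\overline{\inner{\delta_j}{\alpha(x_l)}}\;\inner{\delta_k}{\alpha(x_l)}
  =\sum_{l=1}^{N}\inner{\delta_k}{\alpha(x_l)}\inner{\alpha(x_l)}{\delta_j}
  =\inner{\delta_k}{\delta_j}=\delta_{jk}\,,
\end{equation*}
where the penultimate step is the expansion of $\delta_k$ in the orthonormal basis $\{\alpha(x_l)\}_{l=1}^{N}$ of $\cH$ (it is orthonormal by hypothesis and has $N=\dim\cH$ elements, hence is a basis).

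I do not expect a genuine obstacle here; the argument is essentially bookkeeping. The only points deserving care are (i) justifying that the $L_2(\mathbb{R},\sigma)$-inner product of vector polynomials is the atomic sum displayed above, in particular handling repeated eigenvalues correctly, and (ii) keeping the complex conjugations consistent with the convention that $\inner{\cdot}{\cdot}$ is antilinear in its first argument, so that the rank-one factorization $\sigma_l=\overline{\boldsymbol{a}_l}\,\boldsymbol{a}_l^{\,t}$ and the final collapse to $\inner{\delta_k}{\delta_j}$ come out with the correct orientation.
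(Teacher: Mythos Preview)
Your proof is correct and follows essentially the same route as the paper's: unfold the $L_2(\mathbb{R},\sigma)$ inner product as the atomic sum $\sum_l\inner{\boldsymbol{p}_j(x_l)}{\sigma_l\boldsymbol{p}_k(x_l)}$, use the rank-one structure of $\sigma_l$ to factor each term as $\overline{\bigl(\sum_s\alpha_s(x_l)P_j^{(s)}(x_l)\bigr)}\sum_s\alpha_s(x_l)P_k^{(s)}(x_l)$, identify these scalars via (\ref{eq:linear-combination-phi-k}) as $\inner{\delta_j}{\alpha(x_l)}$ and $\inner{\delta_k}{\alpha(x_l)}$, and collapse the sum using orthonormality of $\{\alpha(x_l)\}$. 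Your write-up is simply more explicit about the rank-one factorization and the conjugation bookkeeping than the paper's compressed display.
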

\begin{proof}
  \begin{align*}
    \inner{\boldsymbol{p}_j}{\boldsymbol{p}_k}_{L_2(\mathbb{R},\sigma)}=
&\sum_{l=1}^N\inner{\boldsymbol{p}_j(x_l)}{\sigma_l\boldsymbol{p}_k(x_l)}\\
=&\sum_{l=1}^N{\left(\sum_{s=1}^n\alpha_s(x_l)\overline{P_{j}^{(s)}(x_l)}\right)}
\sum_{s=1}^n\overline{\alpha_s(x_l)}P_{k}^{(s)}(x_l)\\
=&\sum_{l=1}^N\inner{\delta_j}{\alpha(x_l)}\inner{\alpha(x_l)}{\delta_k}
=\delta_{jk}\,,
  \end{align*}
  where it has been used that
  $\delta_k=\sum_{i=1}^{n}\inner{\alpha(x_i)}{\delta_k}\alpha(x_i)$.
\end{proof}

Let $U:\mathcal{H} \rightarrow L_2(\mathbb{R},\sigma)$ be the
isometry given by $U\delta_k\mapsto \boldsymbol{p}_k$, for all
$k\in\{1,\dots,N\}$. Under this isometry, the operator $A$ becomes the
operator of multiplication by the independent variable in
$L_2(\mathbb{R},\sigma)$. Indeed,
 \begin{align*}
   \inner{\delta_j}{A\delta_k}&=\inner{\sum_{l=1}^N\inner{\alpha(x_l)}{\delta_j}
     \alpha(x_l)}{A\sum_{s=1}^N\inner{\alpha(x_s)}{\delta_k}\alpha(x_s)}\\
   &=\sum_{l=1}^N\inner{\delta_j}{\alpha(x_l)}\inner{\alpha(x_l)}{\sum_{s=1}^Nx_s
     \inner{\alpha(x_s)}{\delta_k}\alpha(x_s)}\\
   &=\sum_{l=1}^N\inner{\delta_j}{\alpha(x_l)}x_l\inner{\alpha(x_l)}{\delta_k}\\
   &=\inner{\boldsymbol{p}_j}{t\boldsymbol{p}_k}_{L_2(\reals,\sigma)}\,.
 \end{align*}

 If the matrix $\mathscr{T}$ in (\ref{eq:initial-condition}) turns out
 to be the identity matrix, i.\,e., $\mathscr{T}=I$, then it can be
 shown that $U^{-1}$ is the isomorphism corresponding to the canonical
 representation of the operator $A$ \cite[Sec.\,75]{MR1255973}, that
 is,
 \begin{equation*}
    \delta_k= U^{-1}\boldsymbol{p}_k=\sum_{j=1}^{n}P_k^{(j)}(A)\delta_j
 \end{equation*}
for all $k\in\{1,\dots,N\}$.

\begin{remark}
    \label{rem:recurrence-equation-vector}
    The matrix representation of the multiplication operator in
    $L_2(\mathbb{R,\sigma})$ with respect to the basis
    $\{\boldsymbol{p}_1(z),\dots,\boldsymbol{p}_N(z)\}$ is again the matrix
    $\mathcal{A}$. Thus,
    \begin{equation}
      \label{eq:recurrence-equation-vector}
      \sum_{i=0}^{n-1} d_{k-n+i}^{(n-i)}
      \boldsymbol{p}_{k-n+i}(z)+d_k^{(0)}\boldsymbol{p}_k(z)+\sum_{i=1}^nd_k^{(i)}
      \boldsymbol{p}_{k+i}(z)=
      z \boldsymbol{p}_k(z)\,
    \end{equation}
for $k=1,\dots,N$, where it is assumed that $\pb{p}_l=0$ whenever $l<1$.
Also, one verifies that
\begin{equation}
  \label{eq:def-q-by-recurrence}
  \boldsymbol{q}_j(z)=(z-d_{m_j}^{(0)})\boldsymbol{p}_{m_j}(z)-
\sum_{i=0}^{n-1} d_{m_j-n+i}^{(n-i)}\boldsymbol{p}_{m_j-n+i}(z)-
\sum_{i=1}^{n-j}d_{m_j}^{(i)}\boldsymbol{p}_{m_j+i}(z)
\end{equation}
for all $j\in\{1,\dots,n\}$, where the last sum vanishes when $j=n$.
\end{remark}
The relationship between the spectral functions
$\sigma^{\mathscr{T}}=\sigma$ and $\sigma^I$ for an arbitrary
$\mathscr{T}$ is given by the following lemma.

\begin{lemma}
  \label{lem:spectral_function_relationship}
  Fix a natural number $N>n$. For any $n\times n$ upper triangular
  real matrix with no zeros in the main diagonal $\mathscr{T}$, the
  spectral function $\sigma^{\mathscr{T}}$ given in
  (\ref{eq:function-measure}) satisfies
  \begin{equation*}
    \mathscr{T}^*\sigma^{\mathscr{T}}\mathscr{T}=\sigma^I\,.
  \end{equation*}
\end{lemma}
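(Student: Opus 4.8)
The plan is to exploit the fact that both $\sigma^{\mathscr{T}}$ and $\sigma^{I}$ are assembled from the same intrinsic spectral data of $A$ — the eigenvalues $x_k$ and a fixed orthonormal eigenbasis $\{\alpha(x_k)\}_{k=1}^N$ — and that replacing $\mathscr{T}=I$ by a general $\mathscr{T}$ changes only the fundamental system $\{\varphi^{(j)}(\cdot)\}_{j=1}^n$ in which each $\alpha(x_k)$ is expanded, not the eigenvectors themselves. So the first step is to record how this fundamental system transforms. Writing $\varphi^{(j)}_{\mathscr{T}}(z)$ for the solution of \eqref{eq:recurrence} on $\{1,\dots,N\}\setminus\{m_i\}_{i=1}^n$ subject to \eqref{eq:first-boundary-conditions-a} with initial data $\inner{\delta_i}{\varphi^{(j)}_{\mathscr{T}}(z)}=t_{ji}$ ($i=1,\dots,n$), I claim that $\varphi^{(j)}_{\mathscr{T}}(z)=\sum_{i=1}^n t_{ji}\,\varphi^{(i)}_{I}(z)$. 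This follows at once from the fact that, for fixed $z$, equation \eqref{eq:recurrence} is linear and homogeneous in $\varphi$: the right-hand side is again a solution on $\{1,\dots,N\}\setminus\{m_i\}_{i=1}^n$ satisfying \eqref{eq:first-boundary-conditions-a}, and its first $n$ entries are $(t_{j1},\dots,t_{jn})$, which coincide with those of $\varphi^{(j)}_{\mathscr{T}}(z)$; the uniqueness of such a solution, noted right after \eqref{eq:initial-condition}, forces equality.

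Next I would transfer this to the expansion coefficients appearing in \eqref{eq:linear-combination-phi-k}. Since $\spec(A)$ and the eigenvectors of $A$ do not depend on $\mathscr{T}$, I may use the same enumeration $\{x_k\}_{k=1}^N$ and one common orthonormal eigenbasis $\{\alpha(x_k)\}$ in building $\sigma^{\mathscr{T}}$ and $\sigma^{I}$; denote by $\alpha^{\mathscr{T}}_j(x_k)$ and $\alpha^{I}_j(x_k)$ the corresponding coefficients of $\alpha(x_k)$. Substituting the transformation law of Step 1 into the identity $\sum_j\alpha^{\mathscr{T}}_j(x_k)\varphi^{(j)}_{\mathscr{T}}(x_k)=\alpha(x_k)=\sum_j\alpha^{I}_j(x_k)\varphi^{(j)}_{I}(x_k)$ and comparing coefficients — which is legitimate because $\{\varphi^{(i)}_{I}(x_k)\}_{i=1}^n$ is linearly independent by Remark~\ref{rem:fundamental-system} — yields $\alpha^{I}_i(x_k)=\sum_{j=1}^n t_{ji}\,\alpha^{\mathscr{T}}_j(x_k)$ for all $i$ and $k$.

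Finally I would substitute this into the definitions \eqref{eq:function-measure}–\eqref{eq:measure-sum}. Since $\mathscr{T}$ is real, $\mathscr{T}^{*}=\mathscr{T}^{t}$, so the $(i,l)$ entry of $\mathscr{T}^{*}\sigma_k^{\mathscr{T}}\mathscr{T}$ equals $\sum_{p,q=1}^n t_{pi}\,\overline{\alpha^{\mathscr{T}}_p(x_k)}\,\alpha^{\mathscr{T}}_q(x_k)\,t_{ql}=\overline{\bigl(\sum_p t_{pi}\alpha^{\mathscr{T}}_p(x_k)\bigr)}\,\bigl(\sum_q t_{ql}\alpha^{\mathscr{T}}_q(x_k)\bigr)=\overline{\alpha^{I}_i(x_k)}\,\alpha^{I}_l(x_k)$, which is exactly the $(i,l)$ entry of $\sigma_k^{I}$. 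Summing over all $k$ with $x_k<t$ gives $\mathscr{T}^{*}\sigma^{\mathscr{T}}(t)\mathscr{T}=\sigma^{I}(t)$ for every $t\in\reals$, as claimed. I do not expect a real obstacle here: the argument is short and largely bookkeeping once Step 1 is available. The points that need care are (i) invoking the uniqueness of the recurrence so that the fundamental system transforms exactly by $\mathscr{T}$; (ii) the freedom to work with a single common orthonormal eigenbasis, which lets one compare the two step functions jump by jump; and (iii) keeping the adjoints and conjugates straight — innocuous here only because $\mathscr{T}$ is real and the jumps $\sigma_k$ are Hermitian.
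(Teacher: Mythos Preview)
Your argument is correct and is essentially the same as the paper's: both proofs hinge on the identity $\alpha^{I}_i(x_k)=\sum_{j} t_{ji}\,\alpha^{\mathscr{T}}_j(x_k)$ (the paper derives it via $\inner{\delta_j}{\alpha(x_l)}=\sum_{i\le j}\alpha_i(x_l)t_{ij}$, you via the explicit transformation of the fundamental system), and then verify entrywise that this is precisely $\mathscr{T}^{*}\sigma_k^{\mathscr{T}}\mathscr{T}=\sigma_k^{I}$. Your write-up is a bit more explicit about the uniqueness-of-solutions step and the common eigenbasis, but the route is the same.
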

\begin{proof}
  Let $\mathscr{T}$ be $n\times n$ upper triangular real matrix with no
  zeros in the main diagonal. Then, by
  (\ref{eq:linear-combination-phi-k}) one has

\begin{equation*}
  \inner{\delta_j}{\alpha(x_l)}=\sum_{i=1}^j\alpha_i(x_l)t_{ij}\,,
\quad \forall j\in\{1,\dots,n\}\,.
\end{equation*}

Now, for the particular case, when $\mathscr{T}=I$, one considers
\begin{equation*}
\sigma^I(t;i,j)=\sum\limits_{x_l<t}{\alpha_i'(x_l)}\overline{\alpha_j'(x_l)}\,.
\end{equation*}
Therefore,
$\inner{\delta_j}{\alpha(x_l)}_{\complex^N}=\alpha_j'(x_l)$ and
\begin{equation}
\label{eq:identity-and-general-case-t}
\sigma^I(t;i,j)=\sum_{x_l<t}{\sum_{k=1}^i\alpha_k(x_l)t_{ki}}
{\sum_{s=1}^j\overline{\alpha_s(x_l)}t_{ks}}\,.
\end{equation}
Observe that
\begin{equation*}
\mathscr{T}^*\left(
\begin{smallmatrix}
{\alpha_1(x_l)}\\ \vdots\\{\alpha_n(x_l)}
\end{smallmatrix}
\right)=\left(
\begin{smallmatrix}
  {\alpha_1(x_l)t_{11}}\\
{\sum_{k=1}^2\alpha_k(x_l)t_{k2}}\\
\vdots\\
{\sum_{k=1}^n\alpha_k(x_l)t_{kn}}
\end{smallmatrix}\right)
\,,
\end{equation*}
and by (\ref{eq:identity-and-general-case-t})
\begin{equation*}
\sigma^I(t;i,j)=\left(\mathscr{T}^*\sigma^{\mathscr{T}}\mathscr{T}\right)(t;i,j)\,.
\end{equation*}
\end{proof}

An immediate consequence of the previous lemma is the following
assertion

\begin{corollary}
  \label{cor:finite-measure-finite-first-moment}
  Fix $N>n$. For any $n\times n$ upper triangular matrix $\mathscr{T}$
  with no zeros in the main diagonal, one has
\begin{equation*}
  \mathscr{T}^*\int_{\reals}d\,\sigma^{\mathscr{T}}\mathscr{T}
=\int_{\reals}d\,\sigma^I=I\,.
\end{equation*}
\end{corollary}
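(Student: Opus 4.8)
The plan is to deduce the corollary directly from Lemma~\ref{lem:spectral_function_relationship} together with Corollary~\ref{cor:finite-measure-finite-first-moment}'s hypotheses and the normalization already established in the excerpt. First I would observe that, since $\sigma^{\mathscr{T}}$ is a step function with finitely many jumps (by property~$\ref{item:properties-spectral-1}$) of Remark~\ref{rem:properties-spectral-function} and the construction in~\eqref{eq:function-measure}--\eqref{eq:measure-sum}), the integral $\int_{\reals}d\,\sigma^{\mathscr{T}}$ is simply the finite sum of the jumps $\sum_{k=1}^N\sigma_k$, so no convergence issue arises and all the matrix manipulations below are legitimate. The constant matrix $\mathscr{T}$ can be pulled out of the (matrix-valued Stieltjes) integral on both sides, giving $\mathscr{T}^*\left(\int_{\reals}d\,\sigma^{\mathscr{T}}\right)\mathscr{T}=\int_{\reals}d\left(\mathscr{T}^*\sigma^{\mathscr{T}}\mathscr{T}\right)$.

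Next I would apply Lemma~\ref{lem:spectral_function_relationship}, which gives the pointwise (in $t$) identity $\mathscr{T}^*\sigma^{\mathscr{T}}(t)\mathscr{T}=\sigma^I(t)$ for every $t\in\reals$; integrating $d(\cdot)$ over $\reals$ therefore yields $\mathscr{T}^*\left(\int_{\reals}d\,\sigma^{\mathscr{T}}\right)\mathscr{T}=\int_{\reals}d\,\sigma^I$. It remains only to identify $\int_{\reals}d\,\sigma^I$ with the identity matrix $I$. For this I would use the explicit description of $\sigma^I$: its total mass is $\sum_{l=1}^N\sigma_l^I$, whose $(i,j)$ entry is $\sum_{l=1}^N\overline{\alpha_i'(x_l)}\alpha_j'(x_l)=\sum_{l=1}^N\inner{\delta_i}{\alpha(x_l)}\inner{\alpha(x_l)}{\delta_j}$, where $\alpha_j'(x_l)=\inner{\delta_j}{\alpha(x_l)}$ as noted in the proof of Lemma~\ref{lem:spectral_function_relationship} (equivalently, this is $\inner{\delta_i}{E(\infty)\delta_j}=\inner{\delta_i}{\delta_j}$ using the computation preceding Definition~\ref{def:sigma-mathfrak-M} with $t\to\infty$). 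Since $\{\alpha(x_l)\}_{l=1}^N$ is an orthonormal basis of $\cH$, the completeness relation $\sum_{l=1}^N\inner{\alpha(x_l)}{\delta_j}\alpha(x_l)=\delta_j$ gives that this sum equals $\inner{\delta_i}{\delta_j}=\delta_{ij}$, i.e.\ $\int_{\reals}d\,\sigma^I=I$.

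There is essentially no obstacle here: the statement is a formal consequence of the previous lemma plus the orthonormality of the eigenvector basis, and the only point requiring a word of care is that pulling the constant matrix $\mathscr{T}$ through the matrix-valued integral is justified precisely because $\sigma^{\mathscr{T}}$ has bounded variation (indeed finitely many jumps), so the Stieltjes integral reduces to a finite sum. I would present the argument in three short displayed lines — the integral factorization, the substitution of Lemma~\ref{lem:spectral_function_relationship}, and the evaluation $\int_{\reals}d\,\sigma^I=I$ — without further comment.
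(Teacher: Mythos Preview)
Your proof is correct and follows precisely the approach the paper intends: the paper states the corollary as ``an immediate consequence of the previous lemma'' with no further argument, and your three-step derivation (pull $\mathscr{T}$ through the finite sum of jumps, apply Lemma~\ref{lem:spectral_function_relationship}, evaluate $\int_{\reals}d\,\sigma^I=I$ via completeness of $\{\alpha(x_l)\}$) is exactly what that remark unpacks.
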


\section{Connection with a linear interpolation problem}
\label{sec:direct-spectral-analysis}
Motivated by (\ref{eq:interpolation-alpha}), we consider the following
interpolation problem. Given a collection of complex numbers
$\{z_k\}_{k=1}^N$ and $\{\alpha_j(k)\}_{j=1}^n$ ($k=1,\dots,N$), find
the scalar polynomials $R_j(z)$ ($j=1,\dots,n$) which satisfy the equation
  \begin{equation}
\label{eq:introduction-interpolation-problem}
      \sum_{j=1}^n\alpha_j(k)R_j(z_k)=0\,, \quad \forall k\in\{1,\dots,N\}\,.
  \end{equation}
The polynomials satisfying
(\ref{eq:introduction-interpolation-problem}) are the solutions to the
interpolation problem and the numbers  $\{z_k\}_{k=1}^N$ are called
the interpolation nodes.

In \cite{MR3389906}, this interpolation problem is studied
in detail. Let us introduce some of the notions and results given in
\cite{MR3389906}.

\begin{definition}
  \label{def:set-S-n}
  For a collection of complex numbers $z_1,\dots,z_N$, and matrices
  $\sigma_k:=\left\{{\alpha_{i}(k)}\overline{\alpha_{j}(k)}\right\}_{i,j=1}^{n}$
  ($k\in\{1,\dots,N\}$), let us consider the equations
   \begin{equation}
     \label{eq:interpolation-problem-sigma}
     \inner{\boldsymbol{r}(z_k)}{\sigma_k\boldsymbol{r}(z_k)}_{\complex^n}=0
   \end{equation}
   for $k=1,\dots,N$, where $\boldsymbol{r}(z)$ is a nonzero
   $n$-dimensional vector polynomial. We denote by
   $\mathbb{S}=\mathbb{S}(\{\sigma_k\}_{k=1}^N,\{z_k\}_{k=1}^N)$ the
   set of all vector polynomials $\boldsymbol{r}(z)$ which satisfy
   (\ref{eq:interpolation-problem-sigma})
   (c.f. \cite[Def.\,3]{MR3389906}).
\end{definition}

It is worth remarking that solving
(\ref{eq:interpolation-problem-sigma}) is equivalent to solving the
linear interpolation problem
(\ref{eq:introduction-interpolation-problem}) whenever
$\boldsymbol{r}(z)=(R_1(z),\dots,R_n(z))^t$.
 \begin{definition}
\label{def:height}
Let $\boldsymbol{r}(z)=\left(R_1(z),R_2(z),\ldots,R_n(z)\right)^{t}$
be an $n$-dimensional vector polynomial. The height of
$\boldsymbol{r}(z)$ is the number
\begin{equation*}
h(\boldsymbol{r}):=
\max_{j\in\{1,\dots,n\}}\left\lbrace n\deg (R_j)+j-1\right\rbrace\,,
\end{equation*}
where it is assumed that $\deg 0:=-\infty$ and $h(\boldsymbol{0}):=-\infty$.
\end{definition}

In \cite[Thm.\,2.1]{MR3389906} the following proposition is
proven.
\begin{proposition}
\label{prop:basis-vector}
Let $\{\boldsymbol{g}_1(z),\dots,\boldsymbol{g}_{m+1}(z)\}$ be a
sequence of vector polynomials such that $h(\boldsymbol{g}_i)=i-1$ for
all $i\in\{1,\dots,m+1\}$. Any vector polynomial $\boldsymbol{r}(z)$
with height $m\ne-\infty$ can be written as follows
\begin{equation*}
  \boldsymbol{r}(z)=\sum_{i=1}^{m+1}c_i\boldsymbol{g}_{i}(z)\,,
\end{equation*}
where $c_i\in\mathbb{C}$  for all $i\in\{1,\dots,n\}$
and $c_{m+1}\neq0$.
\end{proposition}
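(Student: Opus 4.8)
The plan is to prove Proposition~\ref{prop:basis-vector} by a straightforward induction on the height $m$ of $\boldsymbol{r}(z)$, using the sequence $\{\boldsymbol{g}_i(z)\}$ as a ``staircase'' basis, much in the spirit of the division algorithm for ordinary polynomials. The key observation is that the height function $h$ behaves like a degree: if a vector polynomial $\boldsymbol{r}$ has height $m$, then writing $m = n q + (\ell-1)$ with $q\ge 0$ and $\ell\in\{1,\dots,n\}$ (the unique such representation, since $j-1$ ranges over $\{0,\dots,n-1\}$), the maximizer in the definition of $h(\boldsymbol{r})$ forces $\deg(R_\ell) = q$ and $\deg(R_j)\le q$ for $j>\ell$ while $\deg(R_j)\le q-1$ for $j<\ell$. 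Thus the ``leading datum'' of $\boldsymbol{r}$ is the single scalar coefficient of $z^{q}$ in $R_\ell$, and this is precisely the kind of leading datum carried by $\boldsymbol{g}_{m+1}$, which has height $m$ as well.

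\textbf{Main steps.} First I would establish the normal-form fact just described: a vector polynomial has height $m = nq+\ell-1$ if and only if, in the notation above, $\deg R_\ell = q$, $\deg R_j \le q$ for $j>\ell$, and $\deg R_j \le q-1$ for $j<\ell$, with at least the $\ell$-th constraint attained. This is immediate from Definition~\ref{def:height} by inspecting which index achieves the maximum. Second, I would run the induction. The base case $m=0$: a vector polynomial of height $0$ must have $R_1$ a nonzero constant and $R_j=0$ for $j\ge 2$, i.e. it is a nonzero scalar multiple of $\boldsymbol{g}_1$ (which itself has height $0$, hence is such a vector), so $\boldsymbol{r}=c_1\boldsymbol{g}_1$ with $c_1\ne 0$. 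For the inductive step, suppose the claim holds for all heights $<m$, and let $h(\boldsymbol{r})=m$ with $m = nq+\ell-1$. Let $a$ be the leading coefficient ($z^q$-coefficient of $R_\ell$), $a\ne 0$, and let $b$ be the corresponding leading coefficient of $\boldsymbol{g}_{m+1}$ (nonzero, since $h(\boldsymbol{g}_{m+1})=m$ has the same normal form). Set $c_{m+1}:=a/b\ne 0$ and consider $\boldsymbol{r}'(z):=\boldsymbol{r}(z)-c_{m+1}\boldsymbol{g}_{m+1}(z)$. By construction the $z^q$-coefficient of the $\ell$-th component of $\boldsymbol{r}'$ vanishes, and all other components still satisfy the same degree bounds; hence $h(\boldsymbol{r}')<m$ (it could even be $-\infty$, i.e. $\boldsymbol{r}'=\boldsymbol{0}$). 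Applying the induction hypothesis to $\boldsymbol{r}'$ (trivially if it is zero) yields $\boldsymbol{r}' = \sum_{i=1}^{h(\boldsymbol{r}')+1} c_i \boldsymbol{g}_i$, and adding back $c_{m+1}\boldsymbol{g}_{m+1}$ gives the desired expansion $\boldsymbol{r}=\sum_{i=1}^{m+1}c_i\boldsymbol{g}_i$ with $c_{m+1}\ne 0$ (padding with zero coefficients $c_i=0$ for $h(\boldsymbol{r}')+1 < i \le m$ if necessary).

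\textbf{Main obstacle.} The only delicate point is verifying that the subtraction genuinely lowers the height, i.e. that cancelling the single leading scalar does not accidentally leave some other component violating its bound. This is where the precise normal-form characterization is essential: because $\boldsymbol{g}_{m+1}$ has \emph{exactly} height $m$, its components obey the \emph{same} degree bounds ($\deg \le q$ for indices $\ge\ell$, $\deg\le q-1$ for indices $<\ell$), so subtracting $c_{m+1}\boldsymbol{g}_{m+1}$ preserves all those bounds while killing the top coefficient in position $\ell$; therefore no term of ``height $m$'' survives, and $h(\boldsymbol{r}')\le m-1$. A minor bookkeeping remark: the statement as written says $c_i\in\mathbb{C}$ for $i\in\{1,\dots,n\}$, which should read $i\in\{1,\dots,m+1\}$; the argument produces coefficients for exactly those indices. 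I expect no further difficulties, as everything reduces to linear algebra over $\mathbb{C}$ once the height-degree dictionary is in place.
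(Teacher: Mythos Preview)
Your overall strategy---induct on the height, peel off the top term by subtracting a suitable multiple of $\boldsymbol{g}_{m+1}$, and recurse---is correct and is the natural proof; note that the paper does not actually prove this proposition but merely cites it as \cite[Thm.\,2.1]{2014arXiv1401.5384K}, so there is no in-paper argument to compare against. Your observation about the typo in the statement ($i\in\{1,\dots,n\}$ should read $i\in\{1,\dots,m+1\}$) is also apt.

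There is, however, a slip in your normal-form computation: the degree bounds are reversed. Writing $m=nq+\ell-1$ with $\ell\in\{1,\dots,n\}$, the condition $n\deg(R_j)+j-1\le m$ gives $\deg(R_j)\le q+(\ell-j)/n$. For $j>\ell$ the right-hand side lies strictly between $q-1$ and $q$, forcing $\deg(R_j)\le q-1$; for $j<\ell$ it lies strictly between $q$ and $q+1$, giving only $\deg(R_j)\le q$. So the correct bounds are $\deg(R_j)\le q$ for $j\le\ell$ and $\deg(R_j)\le q-1$ for $j>\ell$, the opposite of what you wrote. Fortunately this does not damage your argument: what you actually use is that the maximum $n\deg(R_j)+j-1=m$ is attained \emph{only} at $j=\ell$ with $\deg(R_\ell)=q$ (which follows since $\ell-j\equiv 0\pmod n$ forces $j=\ell$), and that both $\boldsymbol{r}$ and $\boldsymbol{g}_{m+1}$ share this unique leading position. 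Hence cancelling the single $z^q$-coefficient in the $\ell$-th slot drops the height, and the induction goes through. Just fix the direction of the inequalities when you write it up.
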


\begin{definition}
  \label{def:height-set}
  Let $\mathcal{S}$ be an arbitrary subset of the set of all
  $n$-dimensional vector polynomials. We define the height of
  $\mathcal{S}$ by
\begin{equation*}
h(\mathcal{S}):=\min \left\{h(\boldsymbol{r}): \:
  \boldsymbol{r}\in\mathcal{S}, \:\boldsymbol{r}\neq 0\right\}\,.
\end{equation*}
We say that $\boldsymbol{r}(z)$ in the set $\mathbb{S}$ is a first
generator of $\mathbb{S}$ when
\begin{equation*}
h(\boldsymbol{r})=
h(\mathbb{S})\,.
\end{equation*}
\end{definition}

\begin{definition}
  \label{def:m-spaces}
  For any fixed arbitrary vector polynomial $\pb{r}$, let $\mathbb{M}(\boldsymbol{r})$ be the subset of vector
  polynomials
  given by
\begin{equation*}
  \mathbb{M}(\boldsymbol{r}):=
  \left\{\boldsymbol{s}(z): \:\boldsymbol{s}(z)=
    S(z)\boldsymbol{r}(z), \: S(z) \:
    \text{ is an arbitrary scalar polynomial}\right\}\,.
\end{equation*}
\end{definition}
Note that for all $\boldsymbol{s}(z)\in\mathbb{M}(\boldsymbol{r})$,
there is a $k\in\mathbb{N}\cup\{0\}$ such that
\begin{equation*}
  h(\boldsymbol{s})=nk+h(\boldsymbol{r})\,.
\end{equation*}
In this case $k=\deg S$, where $\boldsymbol{s}(z)=S(z)\boldsymbol{r}(z)$.

\begin{proposition}
\label{prop:form-height-S_n-minus-set-M}
(\cite[Lem.\,4.3]{MR3389906}) Fix a natural number $m$ such
that $1\le m<n$. If the vector polynomials
$\boldsymbol{r}_1(z),\dots,\boldsymbol{r}_m(z)$ are arbitrary elements
of $\mathbb{S}$, then
\begin{equation*}
  h\left(\mathbb{S}\setminus[\mathbb{M}(\boldsymbol{r}_1)+\dots+
    \mathbb{M}(\boldsymbol{r}_m)]\right)\neq
  h(\boldsymbol{r}_j)+nk
\end{equation*}
for any $j\in\{1,\dots,m\}$ and
$k\in\mathbb{N}\cup\left\{0\right\}$. In other words,
\begin{equation*}
  h\left(\mathbb{S}\setminus[\mathbb{M}(\boldsymbol{r}_1)+\dots+
    \mathbb{M}(\boldsymbol{r}_m)]\right)\quad\text{and}\quad
  h(\boldsymbol{r}_j)
\end{equation*}
are different elements of the factor space $\integers/n\integers$ for
any $j\in\{1,\dots,m\}$.
\end{proposition}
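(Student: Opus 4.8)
The plan is to argue by contradiction, exploiting the structure of the height function together with the generating property of a sequence of vector polynomials with consecutive heights (Proposition~\ref{prop:basis-vector}). Suppose, contrary to the claim, that there is an index $j\in\{1,\dots,m\}$ and an integer $k\ge 0$ such that
\begin{equation*}
  h\bigl(\mathbb{S}\setminus[\mathbb{M}(\boldsymbol{r}_1)+\dots+\mathbb{M}(\boldsymbol{r}_m)]\bigr)
  = h(\boldsymbol{r}_j)+nk\,.
\end{equation*}
Then there is a nonzero $\boldsymbol{s}(z)\in\mathbb{S}\setminus[\mathbb{M}(\boldsymbol{r}_1)+\dots+\mathbb{M}(\boldsymbol{r}_m)]$ realizing this minimal height. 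Since $h(\boldsymbol{s})=h(\boldsymbol{r}_j)+nk$, the vector polynomial $z^{k}\boldsymbol{r}_j(z)$ lies in $\mathbb{M}(\boldsymbol{r}_j)$, belongs to $\mathbb{S}$ (note $\mathbb{S}$ is closed under multiplication by scalar polynomials, directly from \eqref{eq:interpolation-problem-sigma} since $\sigma_k\ge0$), and has exactly the same height $h(\boldsymbol{r}_j)+nk$ as $\boldsymbol{s}$. The key observation is that two vector polynomials of equal (finite) height have leading "coordinate data" that can be matched: there is a scalar $c\ne0$ such that $h(\boldsymbol{s}-c\,z^{k}\boldsymbol{r}_j) < h(\boldsymbol{s})$, because the height being $nk+h(\boldsymbol{r}_j)$ forces, via Definition~\ref{def:height}, the top-degree term in a single designated coordinate to dominate, and that term can be cancelled.

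First I would make the cancellation step precise: if $h(\boldsymbol{u})=h(\boldsymbol{v})=M\ne-\infty$, write $M = n\ell + (j_0-1)$ with $j_0\in\{1,\dots,n\}$; then the $j_0$-th components of $\boldsymbol u$ and $\boldsymbol v$ both have degree exactly $\ell$ and every other component $R_i$ satisfies $n\deg R_i + i - 1 \le M$, i.e. $\deg R_i \le \ell$ (with strict inequality forced when $i>j_0$, and $\le \ell$ when $i<j_0$ — in any case the combined constraint pins down which coordinate carries the "height-achieving" monomial). Subtracting the appropriate scalar multiple kills the degree-$\ell$ term in coordinate $j_0$, strictly lowering the height (or sending it to $-\infty$). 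Applying this with $\boldsymbol u=\boldsymbol s$ and $\boldsymbol v = z^k\boldsymbol r_j$, the difference $\boldsymbol{s}':=\boldsymbol{s}-c\,z^{k}\boldsymbol{r}_j$ has $h(\boldsymbol{s}') < h(\boldsymbol{s})$. Moreover $\boldsymbol{s}'\in\mathbb{S}$ (again by closure under scalar-polynomial multiples and under addition within $\mathbb S$ — here one uses that $\mathbb S$ as defined by the quadratic forms \eqref{eq:interpolation-problem-sigma} is, for rank-one $\sigma_k$, actually the solution set of the \emph{linear} conditions \eqref{eq:introduction-interpolation-problem}, hence a linear space).

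Then comes the crux. I claim $\boldsymbol{s}'\notin \mathbb{M}(\boldsymbol{r}_1)+\dots+\mathbb{M}(\boldsymbol{r}_m)$: indeed if it were, then $\boldsymbol{s} = \boldsymbol{s}' + c\,z^{k}\boldsymbol{r}_j$ would also lie in that sum (as $z^k\boldsymbol r_j\in\mathbb M(\boldsymbol r_j)$), contradicting the choice of $\boldsymbol{s}$. Hence $\boldsymbol{s}'$ is a nonzero element of $\mathbb{S}\setminus[\mathbb{M}(\boldsymbol{r}_1)+\dots+\mathbb{M}(\boldsymbol{r}_m)]$ with strictly smaller height than the minimal height of that set — a contradiction. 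This proves the first displayed assertion. The reformulation in terms of the factor space $\integers/n\integers$ is then immediate, since $h(\boldsymbol{r}_j)$ and $h(\boldsymbol{r}_j)+nk$ represent the same class modulo $n$, so "not equal to $h(\boldsymbol r_j)+nk$ for any $k$" is exactly "lies in a different residue class mod $n$ than $h(\boldsymbol r_j)$."

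The main obstacle I anticipate is not the logical skeleton above but the bookkeeping in the cancellation lemma: verifying carefully that equality of heights genuinely lets one cancel with a \emph{single} scalar, i.e. that the height is attained in a unique, well-defined coordinate slot $j_0$ determined by $M \bmod n$, and that multiplying $\boldsymbol r_j$ by $z^k$ shifts its height by exactly $nk$ without creating coincidental cancellations in lower coordinates (this is the content of the remark following Definition~\ref{def:m-spaces}, which I would invoke directly). One should also double-check the closure properties of $\mathbb{S}$ that are used — closure under multiplication by scalar polynomials and, for the rank-one case at hand, linearity — since these are what make $\boldsymbol{s}'$ stay in $\mathbb S$; both follow from the equivalence of \eqref{eq:interpolation-problem-sigma} and \eqref{eq:introduction-interpolation-problem} noted after Definition~\ref{def:set-S-n}.
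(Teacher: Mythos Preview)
The paper does not include a proof of this proposition; it is quoted from the external reference \cite[Lem.\,4.3]{2014arXiv1401.5384K} without argument, so there is nothing in-paper to compare against. Your argument is correct and is the natural one: cancel the leading term of a minimal-height element $\boldsymbol{s}$ of the complement against $c\,z^k\boldsymbol{r}_j$, observe that the difference is still in $\mathbb{S}$ (by the linearity you correctly justify via the equivalence of \eqref{eq:interpolation-problem-sigma} with \eqref{eq:introduction-interpolation-problem}) and still outside $\mathbb{M}(\boldsymbol{r}_1)+\dots+\mathbb{M}(\boldsymbol{r}_m)$, and contradict minimality. Your verification that the height-achieving coordinate $j_0$ is uniquely determined by the residue of the height modulo $n$ is exactly what is needed to make the cancellation work with a single scalar.

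One small caveat on your final paragraph: the two displayed assertions in the proposition are not literally equivalent for \emph{arbitrary} $\boldsymbol{r}_j\in\mathbb{S}$. The first only rules out $k\in\mathbb{N}\cup\{0\}$, whereas the residue-class formulation rules out all integers $k$; when the $\boldsymbol{r}_j$ are not chosen with minimal height, the height of the complement can be strictly smaller than $h(\boldsymbol{r}_j)$ and still lie in the same class modulo $n$. This is an imprecision in the proposition as stated (and your proof establishes the first assertion, which is what is actually used); the discrepancy vanishes in the intended application (Definition~\ref{def:j-generator}), where each $\boldsymbol{r}_j$ is recursively chosen of minimal height in the relevant complement, forcing $h(\mathbb{S}\setminus[\mathbb{M}(\boldsymbol{r}_1)+\dots+\mathbb{M}(\boldsymbol{r}_m)])\ge h(\boldsymbol{r}_j)$ for every $j\le m$.
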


Due to Proposition \ref{prop:form-height-S_n-minus-set-M}, the
following definition makes sense.

\begin{definition}
\label{def:j-generator}
One defines recursively the $j$-th generator of $\mathbb{S}$ as the
vector polynomial $\boldsymbol{r}_j(z)$ in
$\mathbb{S}\setminus[\mathbb{M}(\pb{r}_1)
\dotplus\dots\dotplus\mathbb{M}(\pb{r}_{j-1})]$ such that
\begin{equation*}
  h(\boldsymbol{r}_j)=
h(\mathbb{S}
\setminus[\mathbb{M}(\pb{r}_1)\dotplus\dots\dotplus\mathbb{M}(\pb{r}_{j-1})])\,.
\end{equation*}
\end{definition}

In \cite[Thm.\,5.3 and Rem.\,3]{MR3389906}, the following
results were obtained.
\begin{proposition}
\label{prop:all-solution-generators}
 There are exactly $n$ generators of $\mathbb{S}$. Moreover, if the
 vector polynomials
 $\boldsymbol{r}_1(z),\dots,\boldsymbol{r}_n(z)$ are the
 generators of $\mathbb{S}$, then
 \begin{equation*}
\mathbb{S}=
\mathbb{M}(\boldsymbol{r}_1)\dotplus\dots\dotplus\mathbb{M}(\boldsymbol{r}_n)\,
 \end{equation*}
 and the heights of the generators of $\mathbb{S}$ are different
 elements of the factor space $\integers/n\integers$.
\end{proposition}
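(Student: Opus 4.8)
The plan is to exploit the fact that $\mathbb{S}$, together with the zero polynomial, is a $\complex[z]$-submodule of finite codimension inside the free module of all $n$-dimensional vector polynomials, and to read the generators off the height function, which behaves like a monomial order. First I would record the two structural facts. Closure of $\mathbb{S}\cup\{\boldsymbol{0}\}$ under multiplication by scalar polynomials is immediate from (\ref{eq:introduction-interpolation-problem}) (equivalently (\ref{eq:interpolation-problem-sigma})): if $\sum_{j=1}^n\alpha_j(k)R_j(z_k)=0$ for all $k$, then $\sum_{j=1}^n\alpha_j(k)(SR_j)(z_k)=S(z_k)\cdot 0=0$. Finite codimension follows because $\mathbb{S}\cup\{\boldsymbol{0}\}$ is the kernel of the linear map $\boldsymbol{r}\mapsto\bigl(\sum_{j=1}^n\alpha_j(k)R_j(z_k)\bigr)_{k=1}^N$ into $\complex^N$, whence $\mathbb{S}$ has codimension at most $N$. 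Recalling $h(S\boldsymbol{r})=n\deg S+h(\boldsymbol{r})$, the set $\mathcal{H}:=\{h(\boldsymbol{r}):\boldsymbol{r}\in\mathbb{S}\}$ is stable under adding multiples of $n$, and the whole argument is organized around $\mathcal{H}$ and its intersections with the residue classes modulo $n$.

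The core step is to prove that $\mathcal{H}$ meets every residue class modulo $n$. For $m\in\mathbb{N}\cup\{0\}$ let $V_m$ be the span of the $m+1$ monomials $z^d\boldsymbol{e}_j$ of height $\le m$, and put $\mathbb{S}_m:=(\mathbb{S}\cup\{\boldsymbol{0}\})\cap V_m$. Comparing leading coefficients (the coefficient of the unique monomial of height $m$) and invoking Proposition~\ref{prop:basis-vector}, one sees that $\dim\mathbb{S}_m-\dim\mathbb{S}_{m-1}$ is $1$ when $m\in\mathcal{H}$ and $0$ otherwise, so $\dim\mathbb{S}_m=\#\bigl(\mathcal{H}\cap[0,m]\bigr)$. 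Since $\dim V_m-\dim\mathbb{S}_m\le N$, this gives $\#\bigl(\mathcal{H}\cap[0,m]\bigr)\ge m+1-N$. If $\mathcal{H}$ met only $t$ of the $n$ residue classes, then $\#\bigl(\mathcal{H}\cap[0,m]\bigr)\le t(\lfloor m/n\rfloor+1)$; dividing by $m$ and letting $m\to\infty$ forces $t\ge n$, hence $t=n$. Consequently each residue class $r\in\{0,\dots,n-1\}$ has a least element $a_r\in\mathcal{H}$, and $\mathcal{H}\cap(r+n\integers)=\{a_r,a_r+n,a_r+2n,\dots\}$.

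Next I would run the recursion of Definition~\ref{def:j-generator}. By Proposition~\ref{prop:form-height-S_n-minus-set-M}, at stage $j$ the height $h(\boldsymbol{r}_j)$ lies in a residue class distinct from those of $h(\boldsymbol{r}_1),\dots,h(\boldsymbol{r}_{j-1})$; since all $n$ residue classes are realized in $\mathbb{S}$, the set $\mathbb{S}\setminus[\mathbb{M}(\boldsymbol{r}_1)\dotplus\dots\dotplus\mathbb{M}(\boldsymbol{r}_{j-1})]$ is nonempty exactly for $j\le n$, so the process produces exactly $n$ generators $\boldsymbol{r}_1,\dots,\boldsymbol{r}_n$ with heights pairwise incongruent modulo $n$ — the last assertion. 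A remark used repeatedly: because $h(\boldsymbol{r}_1),\dots,h(\boldsymbol{r}_n)$ are pairwise incongruent modulo $n$, for any scalar polynomials $S_1,\dots,S_n$ the vector polynomials $S_i\boldsymbol{r}_i$ have pairwise distinct heights, so no cancellation of leading terms occurs and $h\bigl(\sum_i S_i\boldsymbol{r}_i\bigr)=\max_i\bigl(n\deg S_i+h(\boldsymbol{r}_i)\bigr)$; in particular the sum $\mathbb{M}(\boldsymbol{r}_1)\dotplus\dots\dotplus\mathbb{M}(\boldsymbol{r}_n)$ is direct and each of its nonzero elements has height congruent modulo $n$ to some $h(\boldsymbol{r}_i)$. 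To finish, given $\boldsymbol{s}\in\mathbb{S}$ pick the unique $j$ with $h(\boldsymbol{s})\equiv h(\boldsymbol{r}_j)\pmod n$; by the remark $\boldsymbol{s}$ is not in $\mathbb{M}(\boldsymbol{r}_1)\dotplus\dots\dotplus\mathbb{M}(\boldsymbol{r}_{j-1})$, hence $h(\boldsymbol{s})\ge h(\boldsymbol{r}_j)$, so $h(\boldsymbol{s})=h(\boldsymbol{r}_j)+nk$ with $k\ge0$, and subtracting the scalar multiple of $z^k\boldsymbol{r}_j$ that matches the leading coefficient of $\boldsymbol{s}$ (Proposition~\ref{prop:basis-vector}) yields an element of $\mathbb{S}$ of strictly smaller height. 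Iterating — heights in $\mathcal{H}$ are bounded below — writes $\boldsymbol{s}$ as an element of $\mathbb{M}(\boldsymbol{r}_1)\dotplus\dots\dotplus\mathbb{M}(\boldsymbol{r}_n)$, so this sum equals $\mathbb{S}$ and there is no $(n+1)$-st generator.

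The step I expect to be the main obstacle is the core step: that $\mathcal{H}$ hits every residue class modulo $n$ is false for a general $\complex[z]$-submodule of the free module (the multiples of a single constant vector realize only one class), so it genuinely uses finite codimension, and the bookkeeping linking $\dim\mathbb{S}_m$, $\#(\mathcal{H}\cap[0,m])$, and the codimension bound must be set up with care. A secondary delicate point is making the notion of the leading term of a vector polynomial precise through Proposition~\ref{prop:basis-vector}, since that is what legitimizes both the directness of the sum and the descent in the final step.
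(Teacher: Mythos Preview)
The paper does not prove this proposition: it is quoted verbatim from \cite[Thm.~5.3 and Rem.~3]{2014arXiv1401.5384K} with no argument given, so there is no in-paper proof against which to compare your attempt.

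Your proposal is a correct, self-contained proof. The module-theoretic framing is apt: $\mathbb{S}\cup\{\boldsymbol{0}\}$ is a $\complex[z]$-submodule of the free module of $n$-dimensional vector polynomials, and it has finite codimension because it is the kernel of the evaluation map into $\complex^N$. Your key counting step---$\dim\mathbb{S}_m=\#(\mathcal{H}\cap[0,m])$ combined with the codimension bound $(m+1)-\dim\mathbb{S}_m\le N$ and the asymptotic comparison---does force $\mathcal{H}$ to meet every residue class modulo $n$, and this is exactly where finite codimension enters, as you correctly anticipated. The remaining pieces (Proposition~\ref{prop:form-height-S_n-minus-set-M} for distinctness of residue classes, the no-cancellation remark giving directness of the sum, and the height-descent to obtain the decomposition) are all sound.

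One purely organizational comment: you assert that the complement $\mathbb{S}\setminus[\mathbb{M}(\boldsymbol{r}_1)\dotplus\dots\dotplus\mathbb{M}(\boldsymbol{r}_{j-1})]$ is nonempty \emph{exactly} for $j\le n$ before the emptiness at stage $n+1$ has been established; that emptiness is precisely what your final descent paragraph proves. The argument would read more linearly if you first showed the complement is nonempty for $j\le n$ (so at least $n$ generators exist), then ran the descent to get $\mathbb{S}=\mathbb{M}(\boldsymbol{r}_1)\dotplus\dots\dotplus\mathbb{M}(\boldsymbol{r}_n)$, and only then concluded there is no $(n+1)$-st generator. The content is all present; this is just a matter of ordering.
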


\begin{proposition}
\label{prop:sum-height-generators}
Let $\boldsymbol{r}_j(z)$ be the $j$-th generator of
$\mathbb{S}(n,N)$. It holds true that
 \begin{equation}
   \label{eq:sum-height-generators}
   \sum_{j=1}^nh(\boldsymbol{r}_j)=Nn+\frac{n(n-1)}{2}\,.
 \end{equation}
\end{proposition}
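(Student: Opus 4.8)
The plan is to count dimensions in $L_2(\mathbb{R},\sigma)$ two different ways and compare. By Proposition~\ref{prop:all-solution-generators}, if $\boldsymbol{r}_1,\dots,\boldsymbol{r}_n$ are the generators of $\mathbb{S}=\mathbb{S}(n,N)$, then $\mathbb{S}=\mathbb{M}(\boldsymbol{r}_1)\dotplus\dots\dotplus\mathbb{M}(\boldsymbol{r}_n)$, and the heights $h(\boldsymbol{r}_1),\dots,h(\boldsymbol{r}_n)$ run through all $n$ residue classes mod $n$. The key observation I would establish first is that $\mathbb{S}$ is exactly the set of vector polynomials that vanish in $L_2(\mathbb{R},\sigma)$ (i.e. those whose image under the natural map $\boldsymbol{r}\mapsto\boldsymbol{r}$ into $L_2(\mathbb{R},\sigma)$ is zero), because $\inner{\boldsymbol{r}(x_k)}{\sigma_k\boldsymbol{r}(x_k)}=\abs{\sum_j\alpha_j(x_k)R_j(x_k)}^2$, so $\boldsymbol{r}\in\mathbb{S}$ iff $\sum_j\alpha_j(x_k)R_j(x_k)=0$ for all $k$ iff $\norm{\boldsymbol{r}}_{L_2(\mathbb{R},\sigma)}=0$.

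Next I would count the vector polynomials of height $<M$ for large $M$. The set of \emph{all} $n$-dimensional vector polynomials of height $\le M-1$ has dimension exactly $M$ over $\complex$: indeed, the monomial vector polynomials $z^\ell e_j$ (with $e_j$ the $j$-th coordinate vector) have height $n\ell+j-1$, and as $\ell\ge 0$ and $1\le j\le n$ vary these heights hit $0,1,2,\dots$ each exactly once, so there are precisely $M$ of them with height in $\{0,\dots,M-1\}$, and by Proposition~\ref{prop:basis-vector} they form a basis of that space. On the other hand, inside $\mathbb{S}$, the same counting applied to the decomposition $\mathbb{S}=\dotplus_{j=1}^n\mathbb{M}(\boldsymbol{r}_j)$ shows that the number of elements of $\mathbb{S}$ of height $\le M-1$ equals $\sum_{j=1}^n \#\{k\ge 0 : nk+h(\boldsymbol{r}_j)\le M-1\}$; for $M$ a large multiple of $n$, say $M=nL$, this is $\sum_{j=1}^n\big(L-\lceil (h(\boldsymbol{r}_j)+1)/n\rceil +1\big)$, which after using that the $h(\boldsymbol{r}_j)$ are distinct mod $n$ simplifies to $nL - \frac1n\sum_j h(\boldsymbol{r}_j) + \text{(a correction from the residues)}$.

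Cleaner: write $h(\boldsymbol{r}_j)=n q_j + s_j$ with $\{s_1,\dots,s_n\}=\{0,1,\dots,n-1\}$. Then for $M=nL$ with $L$ large, the number of elements of $\mathbb{M}(\boldsymbol{r}_j)$ of height $\le nL-1$ is $L-q_j$, so $\dim\{$elements of $\mathbb{S}$ of height $<nL\}=nL-\sum_j q_j$. Comparing with the dimension count in $L_2(\mathbb{R},\sigma)$: the quotient (all vector polynomials of height $<nL$) modulo (those lying in $\mathbb{S}$) injects into $L_2(\mathbb{R},\sigma)$, and for $L$ large enough this map is onto (every one of the $N$ equivalence classes contains a vector polynomial of bounded height, by property~\ref{item:properties-spectral-3}) of Remark~\ref{rem:properties-spectral-function} and the fact that $L_2(\mathbb{R},\sigma)$ is $N$-dimensional). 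Hence $nL - (nL-\sum_j q_j) = N$, i.e. $\sum_j q_j = N$. Finally $\sum_j h(\boldsymbol{r}_j)=\sum_j(nq_j+s_j)=n\sum_j q_j + \sum_{s=0}^{n-1}s = nN + \frac{n(n-1)}{2}$, which is \eqref{eq:sum-height-generators}.

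\textbf{The main obstacle} I anticipate is justifying cleanly that the natural map from $\{$vector polynomials of height $<nL\}$ onto $L_2(\mathbb{R},\sigma)$ has kernel exactly $\mathbb{S}\cap\{\text{height}<nL\}$ and is surjective for large $L$: surjectivity needs that each equivalence class in the $N$-dimensional space $L_2(\mathbb{R},\sigma)$ has a representative of uniformly bounded height, which follows because there are finitely many classes and interpolation through the finitely many nodes $x_k$ can always be achieved by a vector polynomial (in fact by the $\boldsymbol{p}_k$ of Lemma~\ref{lem:ortonormal-p-L2}, which give an explicit orthonormal basis of bounded height); and the kernel statement is precisely the identification of $\mathbb{S}$ with the null vectors established in the first step. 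Once that correspondence is pinned down, the two dimension counts must agree and the identity drops out; the residue-class bookkeeping (that the $s_j$ exhaust $\integers/n\integers$, supplied by Proposition~\ref{prop:all-solution-generators}) is what converts $\sum q_j=N$ into the stated formula with the $\frac{n(n-1)}{2}$ term.
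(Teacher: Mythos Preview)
The paper does not supply its own proof of Proposition~\ref{prop:sum-height-generators}; it is quoted verbatim from \cite[Thm.\,5.3 and Rem.\,3]{2014arXiv1401.5384K}. So there is no in-paper argument to compare against, and your dimension-counting approach stands on its own. The argument is sound: writing $h(\boldsymbol{r}_j)=nq_j+s_j$ with $\{s_1,\dots,s_n\}=\{0,\dots,n-1\}$ (Proposition~\ref{prop:all-solution-generators}), counting $\dim\bigl(\mathbb{S}\cap\{\text{height}<nL\}\bigr)=nL-\sum_j q_j$, and comparing with the $N$-dimensional quotient gives $\sum_j q_j=N$, hence the formula. Your height bookkeeping for $\mathbb{M}(\boldsymbol{r}_j)\cap\{\text{height}<nL\}$ is correct because the residues $s_j$ are distinct, so no cancellation of leading terms can occur in $\sum_j S_j\boldsymbol{r}_j$.

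One point to tighten. You invoke Lemma~\ref{lem:ortonormal-p-L2} and Remark~\ref{rem:properties-spectral-function}\,$\ref{item:properties-spectral-3}$) to get that the quotient has dimension exactly $N$, but those statements are proved for the \emph{particular} $\sigma$ coming from an operator $A\in\mathcal{M}(n,N)$, whereas Proposition~\ref{prop:sum-height-generators} is stated for the abstract $\mathbb{S}$ of Definition~\ref{def:set-S-n}. In that abstract setting your quotient is the image of the linear map $\boldsymbol{r}\mapsto\bigl(\sum_j\alpha_j(k)R_j(z_k)\bigr)_{k=1}^N$, and its dimension equals $N$ only if these $N$ functionals are linearly independent (for instance, if $z_1=z_2$ and $\alpha(1)$ is proportional to $\alpha(2)$, the formula fails: take $n=1$, $N=2$, both nodes equal). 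So either state this nondegeneracy hypothesis explicitly, or restrict your claim to $\sigma\in\mathfrak{M}(n,N)$, where Remark~\ref{rem:properties-spectral-function}\,$\ref{item:properties-spectral-3}$) legitimately supplies the dimension. For surjectivity itself you do not need Lemma~\ref{lem:ortonormal-p-L2}: your parenthetical remark that finitely many interpolation conditions are met by some vector polynomial of bounded height already suffices.
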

Now, let us apply these results to the spectral analysis of the
operator $A$. To this end, consider
the solution of (\ref{eq:interpolation-alpha}) as elements of
$\mathbb{S}(\{\sigma_k\}_{k=1}^N,\{x_k\}_{k=1}^N)$, where $\sigma_k$
is given by (\ref{eq:measure-sum}).

\begin{lemma}
  \label{lem:degenerate-q}
  Fix $j\in\{1,\dots,n\}$ and let $\{x_k\}_{k=1}^N=\spec(A)$, where
  $\{x_k\}_{k=1}^N$ is enumerated tacking into account the
  multiplicity of eigenvalues. If $\boldsymbol{q}_j(z)$ is the vector
  polynomial given in (\ref{eq:def-q}), then
  \begin{equation*}
    \boldsymbol{q}_j(z)\in\mathbb{S}(\{\sigma_k\}_{k=1}^N,\{x_k\}_{k=1}^N)\,.
  \end{equation*}
\end{lemma}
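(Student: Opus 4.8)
The plan is to unwind Definition~\ref{def:set-S-n} and reduce the assertion directly to the relations \eqref{eq:interpolation-alpha} that were established above. By that definition, proving $\boldsymbol{q}_j\in\mathbb{S}(\{\sigma_k\}_{k=1}^N,\{x_k\}_{k=1}^N)$ amounts to two things: that $\boldsymbol{q}_j$ is a nonzero $n$-dimensional vector polynomial, and that
\begin{equation*}
  \inner{\boldsymbol{q}_j(x_k)}{\sigma_k\boldsymbol{q}_j(x_k)}_{\complex^n}=0
  \qquad\text{for every }k\in\{1,\dots,N\}\,.
\end{equation*}
I would treat these two points in turn, the second being the substantive one.

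For the quadratic form, the key point is the rank-one structure of $\sigma_k$: by \eqref{eq:measure-sum} its entries are $(\sigma_k)_{il}=\overline{\alpha_i(x_k)}\,\alpha_l(x_k)$. Since the polynomials $Q_j^{(s)}$ have real coefficients and the nodes $x_k$ are real, writing $\boldsymbol{q}_j(z)=(Q_j^{(1)}(z),\dots,Q_j^{(n)}(z))^t$ as in \eqref{eq:def-q} gives at once
\begin{equation*}
  \inner{\boldsymbol{q}_j(x_k)}{\sigma_k\boldsymbol{q}_j(x_k)}_{\complex^n}
  =\sum_{i,l=1}^n\overline{\alpha_i(x_k)Q_j^{(i)}(x_k)}\;\alpha_l(x_k)Q_j^{(l)}(x_k)
  =\abs{\sum_{l=1}^n\alpha_l(x_k)Q_j^{(l)}(x_k)}^{2}\,.
\end{equation*}
Hence it suffices to show $\sum_{l=1}^n\alpha_l(x_k)Q_j^{(l)}(x_k)=0$ for every $k$. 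This is precisely \eqref{eq:interpolation-alpha} read with the free subscript taken equal to $j$; recall that \eqref{eq:interpolation-alpha} was derived from \eqref{system-beta}, the expansion \eqref{eq:linear-combination-phi-k}, and the relation $\alpha(x_k)\in\ker(A-x_kI)$, the last of which holds because $\{x_k\}_{k=1}^N=\spec(A)$ by hypothesis. This settles the interpolation condition.

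It remains to record that $\boldsymbol{q}_j\neq\boldsymbol{0}$, so that it genuinely lies in $\mathbb{S}(\{\sigma_k\}_{k=1}^N,\{x_k\}_{k=1}^N)$. Here I would observe that $\boldsymbol{q}_j(z)$ is the transpose of the $j$-th row of the matrix $\mathscr{Q}(z)$. Since $A$ acts on the finite-dimensional space $\cH$, its spectrum is finite, and by the observation following Lemma~\ref{lem:rank-equivalent-multiplicity} one has $\det\mathscr{Q}(z)\neq0$ for every $z\in\complex\setminus\spec(A)$; for such $z$ the matrix $\mathscr{Q}(z)$ is invertible, so its rows are linearly independent, whence $\boldsymbol{q}_j(z)\neq\boldsymbol{0}$, and therefore $\boldsymbol{q}_j$ is not the zero polynomial. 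Together with the previous paragraph this yields the claim. I do not expect any real obstacle: the argument is essentially a one-line computation plus an invocation of \eqref{eq:interpolation-alpha}, and the only place demanding care is the index bookkeeping, since in \eqref{eq:interpolation-alpha} the summation variable runs over the superscripts of $Q$ (which are exactly the component indices of $\boldsymbol{q}_j$) while the free index is the subscript, here fixed to $j$.
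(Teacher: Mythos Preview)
Your argument is correct and follows essentially the same route as the paper's proof, which consists of a single sentence pointing to the comparison of \eqref{system-beta} with \eqref{eq:interpolation-alpha}; the paper implicitly relies on the equivalence (stated right after Definition~\ref{def:set-S-n}) between the quadratic condition \eqref{eq:interpolation-problem-sigma} and the linear interpolation relation \eqref{eq:introduction-interpolation-problem}, which you instead compute directly via the rank-one structure of $\sigma_k$. Your explicit verification that $\boldsymbol{q}_j\neq\boldsymbol{0}$ is a nice addition the paper omits.
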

\begin{proof}
  The assertion follows by comparing (\ref{system-beta}) with
 (\ref{eq:interpolation-alpha}).
\end{proof}
From this lemma, taking into account the definition of the inner
product in $L_2(\reals,\sigma)$ (see the proof of
Lemma~\ref{lem:ortonormal-p-L2}) and Definition~\ref{def:set-S-n}, one
arrives at the following assertion.
\begin{corollary}
\label{coro:properties-L2-q}
For all $j\in\{1,\dots,n\}$ the vector polynomial
$\boldsymbol{q}_j(z)$ is in the equivalence class of the zero in
$L_2(\mathbb{R},\sigma)$, that is,
\begin{equation*}
  \inner{\boldsymbol{q}_j}{\boldsymbol{q}_j}_{L_2(\mathbb{R},\sigma)}=0
\end{equation*}
and, for all $\boldsymbol{r}\in\L_2(\mathbb{R},\sigma)$,
\begin{equation}
  \label{eq:class-zero-arb-p}
    \inner{\boldsymbol{r}}{\boldsymbol{q}_j}_{L_2(\mathbb{R},\sigma)}=0\,.
\end{equation}
\end{corollary}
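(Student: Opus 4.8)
The plan is to combine Lemma~\ref{lem:degenerate-q} with the explicit description of the inner product of $L_2(\mathbb{R},\sigma)$ on vector polynomials that was used in the proof of Lemma~\ref{lem:ortonormal-p-L2}. Namely, for any two $n$-dimensional vector polynomials $\boldsymbol{f},\boldsymbol{g}$, regarded as elements of $L_2(\mathbb{R},\sigma)$, one has
\begin{equation*}
  \inner{\boldsymbol{f}}{\boldsymbol{g}}_{L_2(\mathbb{R},\sigma)}
  =\sum_{l=1}^N\inner{\boldsymbol{f}(x_l)}{\sigma_l\boldsymbol{g}(x_l)}_{\complex^n}\,,
\end{equation*}
where $\sigma_l$ is the rank-one nonnegative matrix of \eqref{eq:measure-sum}. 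This reduces everything to a termwise statement about the $\sigma_l$.

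First I would establish the first identity. By Lemma~\ref{lem:degenerate-q}, $\boldsymbol{q}_j\in\mathbb{S}(\{\sigma_k\}_{k=1}^N,\{x_k\}_{k=1}^N)$, so Definition~\ref{def:set-S-n} says precisely that $\inner{\boldsymbol{q}_j(x_l)}{\sigma_l\boldsymbol{q}_j(x_l)}_{\complex^n}=0$ for every $l\in\{1,\dots,N\}$; summing over $l$ in the formula above gives $\inner{\boldsymbol{q}_j}{\boldsymbol{q}_j}_{L_2(\mathbb{R},\sigma)}=0$.

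For \eqref{eq:class-zero-arb-p} I see two equivalent routes. The quickest is to observe that the sesquilinear form $\inner{\cdot}{\cdot}_{L_2(\mathbb{R},\sigma)}$ on vector polynomials is positive semi-definite (it is a sum of the positive semi-definite forms $v\mapsto\inner{v}{\sigma_l v}_{\complex^n}$), hence the Cauchy--Schwarz inequality applies and yields
\begin{equation*}
  \abs{\inner{\boldsymbol{r}}{\boldsymbol{q}_j}_{L_2(\mathbb{R},\sigma)}}^2
  \le\inner{\boldsymbol{r}}{\boldsymbol{r}}_{L_2(\mathbb{R},\sigma)}
  \inner{\boldsymbol{q}_j}{\boldsymbol{q}_j}_{L_2(\mathbb{R},\sigma)}=0
\end{equation*}
for every vector polynomial $\boldsymbol{r}$, and therefore for every element of $L_2(\mathbb{R},\sigma)$ since every equivalence class contains a vector polynomial representative (as noted after Definition~\ref{def:sigma-mathfrak-M}). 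Alternatively, one argues termwise: since $\sigma_l\ge0$ it admits a nonnegative square root $\sigma_l^{1/2}$, and $\inner{\boldsymbol{q}_j(x_l)}{\sigma_l\boldsymbol{q}_j(x_l)}_{\complex^n}=\norm{\sigma_l^{1/2}\boldsymbol{q}_j(x_l)}^2=0$ forces $\sigma_l^{1/2}\boldsymbol{q}_j(x_l)=0$, hence $\sigma_l\boldsymbol{q}_j(x_l)=0$ for all $l$; then each summand of $\inner{\boldsymbol{r}}{\boldsymbol{q}_j}_{L_2(\mathbb{R},\sigma)}=\sum_{l=1}^N\inner{\boldsymbol{r}(x_l)}{\sigma_l\boldsymbol{q}_j(x_l)}_{\complex^n}$ vanishes.

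There is essentially no hard step here: the statement is just a translation of membership in $\mathbb{S}$ into the language of $L_2(\mathbb{R},\sigma)$. The only point that deserves a word of care is that the pairing on vector polynomials is a priori only a semi-inner product before passing to equivalence classes; but the Cauchy--Schwarz inequality is valid for positive semi-definite forms, so this does not affect the conclusion, and the passage to $L_2(\mathbb{R},\sigma)$ is legitimate because each class has a vector-polynomial representative.
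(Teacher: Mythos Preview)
Your argument is correct and follows exactly the approach indicated in the paper, which simply remarks that the corollary follows from Lemma~\ref{lem:degenerate-q}, Definition~\ref{def:set-S-n}, and the explicit form of the inner product used in the proof of Lemma~\ref{lem:ortonormal-p-L2}. You have merely supplied the routine details (the termwise vanishing or, equivalently, Cauchy--Schwarz for the semi-inner product) that the paper leaves to the reader.
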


\begin{lemma}
  \label{lem:height-recurrence}
Fix $k\in\{1,\dots,N\}$.
  \begin{enumerate}[$i$)]
  \item  If $m_{j}<k<m_{j+1}$, with
    $j=0,\dots,n-1$ and $m_0=0$, then
    \begin{equation*}
      h(\pb{p}_{k+n-j})=n+h(\pb{p}_{k})\,.
    \end{equation*}\label{item:lemma-heights-1}
  \item  If there are no degenerations of the diagonals, then
    \begin{equation*}
      h(\pb{p}_{k})=k-1\,,\quad \text{for all }k\in\{1,\dots,N\}\,.
    \end{equation*}\label{item:lemma-heights-2}
  \item For any $i\in\{1,\dots,N\}$ and $j\in\{1,\dots,n\}$,
    the following holds
    \begin{equation*}
      h(\pb{p}_i)\neq h(\pb{p}_{m_j})+n = h(\pb{q}_j)\,.
    \end{equation*}\label{item:lemma-heights-3}
  \end{enumerate}
\end{lemma}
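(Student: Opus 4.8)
The plan is to prove all three parts simultaneously by induction on $k\in\{1,\dots,N\}$, carrying along the strict monotonicity $h(\pb p_1)<h(\pb p_2)<\dots<h(\pb p_k)$ as an auxiliary hypothesis. The first preparatory step is a combinatorial bijection. For a non-exceptional index $\ell\in\{1,\dots,N\}\setminus\{m_i\}_{i=1}^n$ let $j(\ell)$ be the unique index with $m_{j(\ell)}<\ell<m_{j(\ell)+1}$ (with $m_0:=0$), and set $k(\ell):=\ell+n-j(\ell)$. Using $m_a+1<m_{a+1}$ for $a<j_0$ and $m_a=N-n+a$ for $a>j_0$, one checks that a non-exceptional $\ell$ lies in a non-empty block only if $j(\ell)\le j_0$, that $k(1)=n+1$, and that $k(\ell)$ increases by exactly $1$ whenever $\ell$ advances to the next non-exceptional index, even across an exceptional one. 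Hence $\ell\mapsto k(\ell)$ is an order isomorphism of the non-exceptional indices onto $\{n+1,\dots,N\}$; denote its inverse by $\ell(k)$, which is non-exceptional and strictly increasing in $k$.

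The second step is to rearrange the recurrence~\eqref{eq:recurrence-equation-vector} at a non-exceptional index $\ell$ --- where it is a genuine polynomial identity because $\ell\notin\{m_i\}_{i=1}^n$ --- so as to isolate $\pb p_{\ell+n-j}$, with $j:=j(\ell)\le j_0$. The point is that $d_\ell^{(n-j)}>0$ (the diagonal $\mathcal D_{n-j}$ is positive on the block $(m_j,m_{j+1})$ containing $\ell$), while for every $i$ with $n-j<i\le n$, writing $i=n-j'$ with $0\le j'<j$, the diagonal $\mathcal D_{n-j'}$ has already degenerated at $m_{j'+1}\le m_j<\ell$, so $d_\ell^{(i)}=0$ when $\ell$ is in its range and $\pb p_{\ell+i}=0$ otherwise. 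Thus~\eqref{eq:recurrence-equation-vector} becomes
\begin{equation*}
  \pb p_{\ell+n-j}=\frac{1}{d_\ell^{(n-j)}}\Bigl(z\,\pb p_\ell-d_\ell^{(0)}\pb p_\ell-\sum_{i=0}^{n-1}d_{\ell-n+i}^{(n-i)}\pb p_{\ell-n+i}-\sum_{i=1}^{n-j-1}d_\ell^{(i)}\pb p_{\ell+i}\Bigr),
\end{equation*}
in which every vector polynomial other than $\pb p_\ell$ has index at most $\ell+n-j-1$.

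Now the induction. For $k\le n$, the initial conditions~\eqref{eq:initial-condition} and the upper-triangularity of $\mathscr T$ give $\pb p_k=(t_{1k},\dots,t_{kk},0,\dots,0)^t$ with $t_{kk}\neq0$ and all entries constant, so $h(\pb p_k)=k-1$; this is the base case and makes part~$\ref{item:lemma-heights-1}$) vacuous there. For $k>n$ set $\ell:=\ell(k)$; the displayed identity gives $h(\pb p_k)\le\max\{\,n+h(\pb p_\ell),\,h(\pb p_{k-1})\,\}$, with the first term attained once $n+h(\pb p_\ell)>h(\pb p_{k-1})$. If $k=n+1$ then $\ell=1$ and $n+h(\pb p_1)=n>n-1=h(\pb p_n)$; if $k\ge n+2$ then $h(\pb p_{k-1})=n+h(\pb p_{\ell(k-1)})$ by the inductive instance of part~$\ref{item:lemma-heights-1}$), and $\ell(k-1)<\ell(k)\le k-1$ together with the inductive monotonicity force $h(\pb p_{\ell(k-1)})<h(\pb p_\ell)$. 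Hence $h(\pb p_k)=n+h(\pb p_\ell)$, which is part~$\ref{item:lemma-heights-1}$), and since $h(\pb p_k)\ge n>h(\pb p_{k-1})$ the monotonicity is propagated. Part~$\ref{item:lemma-heights-2}$) follows because with no degenerations $j_0=0$, so $\ell(k)=k-n$ for every $k>n$ and then $h(\pb p_k)=n+h(\pb p_{k-n})=k-1$ by a secondary induction.

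For part~$\ref{item:lemma-heights-3}$), observe that in~\eqref{eq:def-q-by-recurrence} the terms $d_{m_j}^{(i)}\pb p_{m_j+i}$ with $i>n-j$ vanish by the degeneration argument above, so $\pb q_j$ equals $z\,\pb p_{m_j}$ plus vector polynomials of index at most $m_j+n-j$. Those of index up to $m_j$ have height at most $h(\pb p_{m_j})<n+h(\pb p_{m_j})$; and any $\ell$ with $k(\ell)=m_j+n-j$ satisfies $j(\ell)\le j$ (else $m_{j(\ell)}\ge m_j+(j(\ell)-j)=\ell$, contradicting $m_{j(\ell)}<\ell$), hence $\ell\le m_j$ and, being non-exceptional, $\ell<m_j$, so the terms of index $m_j+1,\dots,m_j+n-j$ have height at most $n+h(\pb p_{\ell(m_j+n-j)})<n+h(\pb p_{m_j})$ by monotonicity. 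Therefore $z\,\pb p_{m_j}$ strictly dominates and $h(\pb q_j)=h(\pb p_{m_j})+n$. Finally, if $i\le n$ then $h(\pb p_i)=i-1<n\le h(\pb p_{m_j})+n$; and if $i>n$ then $h(\pb p_i)=h(\pb p_{m_j})+n$ would force $h(\pb p_{\ell(i)})=h(\pb p_{m_j})$, hence $\ell(i)=m_j$ by the injectivity contained in the monotonicity --- impossible, since $\ell(i)$ is non-exceptional while $m_j\in\{m_1,\dots,m_n\}$. I expect the main obstacle to be the first, purely combinatorial step: tracking which diagonal degenerates at which $m_{j+1}$, that $d_\ell^{(n-j(\ell))}$ is precisely the nonzero coefficient one divides by, and that $\ell\mapsto k(\ell)$ is a step-by-one bijection onto $\{n+1,\dots,N\}$. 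Once that is in place, the height estimates are the routine observation that a $z\,\pb p_\ell$-term always dominates a finite family of lower-indexed terms.
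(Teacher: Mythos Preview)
Your proof is correct and follows the same underlying strategy as the paper's---use the recurrence at a non-exceptional index to read off $h(\pb p_{k+n-j})=h(z\pb p_k)$---but you make explicit the induction, the strict monotonicity of heights, and the bijection $\ell\mapsto k(\ell)$ that the paper leaves tacit (its proof of part~\textit{iii}) is just a pointer to the recurrence). One small slip: the clause ``since $h(\pb p_k)\ge n>h(\pb p_{k-1})$'' is only literally true for $k=n+1$; for $k\ge n+2$ the monotonicity already follows from the inequality $n+h(\pb p_\ell)>h(\pb p_{k-1})$ you established two lines earlier, so just cite that instead.
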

\begin{proof}
  $\ref{item:lemma-heights-1}$) The heights of the vector polynomials
  $\{\pb{p}_k\}_{k=n+1}^N$ are determined recursively by means of the
  system (\ref{eq:recurrence-equation-vector}). For any
  $m_{j}<k<m_{j+1}$, with $j=0,\dots,n-1$, one has the equation
  \begin{equation*}
    \dots +d_k^{(0)}\pb{p}_k+d_k^{(1)}\pb{p}_{k+1}+\dots+
d_{k}^{(n-j)}\pb{p}_{k+n-j}=z\pb{p}_k\,.
  \end{equation*}
  Since $d_{k}^{(n-j)}$ never vanishes, the height of $\pb{p}_{k+n-j}$
  coincides with the one of $z\pb{p}_k$, this implies
  the assertion.

  $\ref{item:lemma-heights-2}$) If there are no degenerations of the
  diagonals, then $m_1=N-n+1$. So, the heights of the vector
  polynomials $\pb{p}_1,\dots,\pb{p}_N$ are determined by
  $\ref{item:lemma-heights-1}$) with
  $j=0$.

  $\ref{item:lemma-heights-3}$) The assertions follows from the
  recurrence equations (\ref{eq:recurrence-equation-vector}) and
  (\ref{eq:def-q-by-recurrence}).
\end{proof}

\begin{lemma}
  \label{lem:cover-all-heights}
  For any nonnegative integer $s$, there exist $k\in\{1,\dots,N\}$ or
  a pair $j\in\{1,\dots,n\}$ and $l\in\nats\cup\{0\}$ such that either
  $s=h(\pb{p}_{k})$ or $s=h(\pb{q}_{j})+nl$.
\end{lemma}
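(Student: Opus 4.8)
*For any nonnegative integer $s$, there exist $k\in\{1,\dots,N\}$ or a pair $j\in\{1,\dots,n\}$ and $l\in\nats\cup\{0\}$ such that either $s=h(\pb{p}_{k})$ or $s=h(\pb{q}_{j})+nl$.*

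The plan is to count, for each residue class modulo $n$, how many heights are produced by the $\pb{p}_k$'s together with the arithmetic progressions $\{h(\pb{q}_j)+nl:l\ge 0\}$, and to show every sufficiently large value is hit; the small values are then covered because heights form an initial segment.

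First I would organize the heights by residue modulo $n$. By Lemma~\ref{lem:height-recurrence}$\,\ref{item:lemma-heights-1}$), the recurrence $h(\pb{p}_{k+n-j})=n+h(\pb{p}_k)$ on each block $m_j<k<m_{j+1}$ shows that, starting from the ``seed'' polynomials $\pb{p}_1,\dots,\pb{p}_n$ (whose heights are $0,1,\dots,n-1$ by construction from the initial condition $\mathscr{T}$, since $t_{jj}\ne 0$ forces $\deg P_j^{(j)}=0$ and $P_k^{(j)}=0$ for $k>j$, so $h(\pb{p}_j)=j-1$), the indices split into chains along which the height jumps by exactly $n$. A chain is interrupted precisely at a degeneration node $m_{j+1}$: the index $k=m_{j+1}-n+j$ would have pointed to $\pb{p}_{m_{j+1}}$, but the coefficient $d_{m_{j+1}}^{(n-j)}$ vanishes, so the recurrence no longer determines $h(\pb{p}_{m_{j+1}})$ from that equation. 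The key observation is that the role of continuing the progression is taken over by $\pb{q}_{j+1}$: from \eqref{eq:def-q-by-recurrence}, $h(\pb{q}_{j+1})=n+h(\pb{p}_{m_{j+1}})$ (this is the content of Lemma~\ref{lem:height-recurrence}$\,\ref{item:lemma-heights-3}$), which also records that these heights are genuinely new residues, not duplicates of $\pb{p}$-heights), so appending $\{h(\pb{q}_{j+1})+nl:l\ge 0\}$ exactly reinstates the missing arithmetic progression one step above $h(\pb{p}_{m_{j+1}})$.

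Then I would do the bookkeeping. Within each residue class $r\in\{0,\dots,n-1\}$, the set of heights contributed is a finite union of arithmetic progressions with common difference $n$; I claim their union is all of $\{r, r+n, r+2n,\dots\}$ minus possibly finitely many gaps — but in fact there are no gaps. Indeed, every $\pb{p}_k$ with $k>n$ sits at the top of a finite segment whose bottom is some seed $\pb{p}_{j_0}$ (height $j_0-1$) and every degeneration opens a new progression $\pb{q}$-chain starting exactly at the height where a $\pb{p}$-chain was cut off plus $n$. So following any residue class from its smallest value $r$ upward, one moves along $\pb{p}$-heights $r, r+n, \dots$ until possibly hitting a cut at some height $h(\pb{p}_{m_j})$, at which point $h(\pb{q}_j)=h(\pb{p}_{m_j})+n$ resumes the class, and its progression $\pb{q}_j, \pb{q}_j+n, \dots$ continues indefinitely (the $\pb{q}_j$ progression is never cut, since no further equation can kill it). Hence every nonnegative integer in every residue class is realized, which is the assertion. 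A clean way to package this is to invoke Proposition~\ref{prop:sum-height-generators}: the $n$ generators of $\mathbb{S}(n,N)$ have heights lying in distinct residue classes summing to $Nn+\tfrac{n(n-1)}{2}$, which matches the total count of heights $0,1,\dots$ distributed with the correct multiplicities, forcing the union to be gap-free.

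The main obstacle I anticipate is the careful matching at the degeneration nodes: one must verify that at each $m_{j+1}$ the $\pb{q}_{j+1}$ progression picks up exactly the residue class that the truncated $\pb{p}$-chain abandoned — no overlap and no gap — and that no residue class is left with two ``active'' progressions or none. This is where Lemma~\ref{lem:height-recurrence}$\,\ref{item:lemma-heights-3}$) (asserting $h(\pb{p}_i)\ne h(\pb{q}_j)$) and the structure of the index set $\{m_1,\dots,m_n\}$ do the real work; the rest is routine counting. I would present the argument either directly by this chain-tracking, or more slickly by noting that Lemmas~\ref{lem:ortonormal-p-L2} and \ref{lem:degenerate-q} together with Propositions~\ref{prop:all-solution-generators}--\ref{prop:sum-height-generators} identify $\{\pb{q}_1,\dots,\pb{q}_n\}$ as (a reindexing of) the generators of $\mathbb{S}$, so that $\mathbb{S}=\mathbb{M}(\pb{q}_1)\dotplus\dots\dotplus\mathbb{M}(\pb{q}_n)$ has heights filling $\nats\cup\{0\}$, while the $\pb{p}_k$ fill in precisely the $N$ heights below the generators in $L_2(\reals,\sigma)$.
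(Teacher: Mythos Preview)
Your core strategy---track heights by residue class modulo $n$, and show each class is an unbroken chain built from a $\pb{p}$-segment followed by a $\pb{q}_j$-progression---is sound and is in fact the same idea the paper uses, only the paper runs it as a descent: assume some $s$ is missed, take the least $\hat{l}_0$ with $s-n\hat{l}_0$ covered, then use Lemma~\ref{lem:height-recurrence}\,\ref{item:lemma-heights-1}) and \ref{item:lemma-heights-3}) to show $s-n(\hat{l}_0-1)$ is also covered, contradicting minimality. Your forward chain-tracking is the same step read upward rather than downward.

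Where your proposal goes wrong is in the two shortcuts you offer to close the argument. Invoking Proposition~\ref{prop:sum-height-generators} for the $\pb{q}_j$, and especially the ``slick'' route through Propositions~\ref{prop:all-solution-generators}--\ref{prop:sum-height-generators} to identify $\{\pb{q}_1,\dots,\pb{q}_n\}$ as the generators of $\mathbb{S}$, is circular: that identification is precisely Theorem~\ref{thm:q-j-generator}, whose proof relies on Corollary~\ref{cor:basis-vector}, which is the immediate consequence of the lemma you are trying to prove. At this point in the paper the $\pb{q}_j$ are only known to lie in $\mathbb{S}$ (Lemma~\ref{lem:degenerate-q}); nothing yet says they are minimal-height representatives. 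So Proposition~\ref{prop:sum-height-generators} tells you about the heights of the \emph{generators}, not about the heights of the $\pb{q}_j$, and you cannot use it to force your union to be gap-free.

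Stripped of those shortcuts, what remains is the ``main obstacle'' you correctly flag but do not resolve: verifying that at each $m_j$ the $\pb{q}_j$-progression picks up exactly the residue class abandoned by the $\pb{p}$-chain, with no overlap and no gap. That step is the entire content of the lemma, and deferring it as ``routine counting'' is not enough. The paper's descent argument handles it cleanly using only Lemma~\ref{lem:height-recurrence}; I would recommend you recast your forward chain-tracking in that form, so that each inductive step depends only on \ref{item:lemma-heights-1}) and \ref{item:lemma-heights-3}) of Lemma~\ref{lem:height-recurrence} and nothing downstream.
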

\begin{proof}
  Due to
  Lemma~\ref{lem:height-recurrence}~$\ref{item:lemma-heights-1}$), it
  follows from (\ref{eq:initial-condition}) and (\ref{eq:def-p-i})
  that
\begin{equation}
\label{eq:firs-height}
  h(\pb{p}_{k})=k-1\quad\text{for}\ k=1,\dots,h(\pb{q}_1)\,.
\end{equation}
Suppose that there is $s\in\nats$ ($s>n$) such that $s\ne h(\pb{p}_k)$
for all $k\in\{1,\dots,N\}$ and $s\ne h(\pb{q}_j)+nl$ for all
$j\in\{1,\dots,n\}$ and $l\in\nats\cup\{0\}$. Let $\hat{l}$ be an
integer such that
$s-n\hat{l}\in\{h(\pb{p}_k)\}_{k=1}^N\cup\{h(\pb{q}_j)+nl\}$
($j\in\{1,\dots,n\}$ and $l\in\nats\cup\{0\}$). There is always such
an integer due to (\ref{eq:firs-height}) and the fact that
$h(\pb{q}_1)>n$ (see
Lemma~\ref{lem:height-recurrence}~$\ref{item:lemma-heights-3}$). We
take $\hat{l}_0$ to be the minimum of all $\hat{l}$'s. Thus, there is
$k_0\in\{1,\dots,N\}$ or $j_0\in\{1,\dots,n\}$, respectively, such
that either
\begin{enumerate}[a)]
\item $s-n\hat{l}_0=h(\pb{p}_{k_0})$ or \label{item:height-p-lemma-proof}
\item $s-n\hat{l}_0=h(\pb{q}_{j_0})+nl$, with
  $l\in\nats\cup\{0\}$.  \label{item:height-q-lemma-proof}
\end{enumerate}
In the case \ref{item:height-p-lemma-proof}), we prove that
$\hat{l}_0$ is not the minimum integer, this implies the assertion of
the lemma. Indeed, if there is $j\in\{1,\dots,n\}$ such that
$k_0=m_{j}$, then
$s-n\hat{l}_0+n=h(\pb{p}_{m_{j_0}})+n=h(\pb{q}_{j_0})$ due to
  Lemma~\ref{lem:height-recurrence}~$\ref{item:lemma-heights-3}$). If
there is not such $j$, then $m_j<k_0<m_{j+1}$, and
Lemma~\ref{lem:height-recurrence}~$\ref{item:lemma-heights-1}$)
implies $s-n\hat{l}_0+n=h(\pb{p}_{k_0})+n=h(\pb{p}_{k_0+n-j})$.

For the case \ref{item:height-q-lemma-proof}), if
$s-n\hat{l}_0=h(\pb{q}_{j_0})+nl$, then
$s=h(\pb{q}_{j_0})+n(l+\hat{l}_0)$ which is a contradiction.
\end{proof}
As a consequence of Proposition~\ref{prop:basis-vector}, the above
lemma yields the following result.
\begin{corollary}
\label{cor:basis-vector}
Any vector polynomial $\boldsymbol{r}(z)$ is a finite linear
combination of
\begin{equation*}
  \{\pb{p}_k(z): k\in\{1,\dots,N\}\} \cup \{z^l\pb{q}_j(z):
l\in\nats,\, j\in\{1,\dots,n\}\}\,.
\end{equation*}
\end{corollary}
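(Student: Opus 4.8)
The plan is to combine Lemma~\ref{lem:cover-all-heights} with Proposition~\ref{prop:basis-vector} in the obvious way, taking as the ``staircase'' basis a family of vector polynomials indexed by their heights. First I would fix an arbitrary nonzero vector polynomial $\boldsymbol{r}(z)$ and set $m:=h(\boldsymbol{r})\in\nats\cup\{0\}$; the case $\boldsymbol{r}=\boldsymbol{0}$ is trivial, so we may assume $m\neq-\infty$. By Lemma~\ref{lem:cover-all-heights}, for every integer $s$ with $0\le s\le m$ there is either a $k\in\{1,\dots,N\}$ with $h(\pb{p}_k)=s$ or a pair $(j,l)$ with $j\in\{1,\dots,n\}$, $l\in\nats\cup\{0\}$, and $h(\pb{q}_j)+nl=s$ (recall from Remark~\ref{rem:recurrence-equation-vector} that $h(z^l\pb{q}_j)=h(\pb{q}_j)+nl$, since multiplication by $z^l$ raises the height by $nl$). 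For each such $s$ pick one of these vector polynomials and call it $\boldsymbol{g}_{s+1}(z)$, so that $h(\boldsymbol{g}_i)=i-1$ for $i=1,\dots,m+1$ and every $\boldsymbol{g}_i$ lies in the set $\{\pb{p}_k\}\cup\{z^l\pb{q}_j\}$.

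Next I would invoke Proposition~\ref{prop:basis-vector} with this sequence $\{\boldsymbol{g}_1,\dots,\boldsymbol{g}_{m+1}\}$: since these vector polynomials have heights $0,1,\dots,m$ respectively, any vector polynomial of height $m$ — in particular our $\boldsymbol{r}(z)$ — is a linear combination $\boldsymbol{r}(z)=\sum_{i=1}^{m+1}c_i\boldsymbol{g}_i(z)$ with $c_i\in\complex$ and $c_{m+1}\neq0$. Unwinding the choices of the $\boldsymbol{g}_i$, this exhibits $\boldsymbol{r}(z)$ as a finite linear combination of elements of $\{\pb{p}_k(z):k\in\{1,\dots,N\}\}\cup\{z^l\pb{q}_j(z):l\in\nats,\,j\in\{1,\dots,n\}\}$, which is precisely the claim. (One should note that $l=0$ is allowed among the $\boldsymbol{g}_i$'s even though the stated spanning set writes $l\in\nats$; but $\pb{q}_j=z^0\pb{q}_j$ has height $h(\pb{q}_j)=h(\pb{p}_{m_j})+n$, which by Lemma~\ref{lem:height-recurrence}~$\ref{item:lemma-heights-3}$) also equals $h(\pb{p}_{m_j+\ldots})$ for an appropriate index via the recurrence, or more simply one absorbs the $l=0$ terms by reading $\nats$ as $\nats\cup\{0\}$ — this is a harmless notational point.)

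I do not expect a genuine obstacle here: the corollary is essentially a bookkeeping consequence of the two cited propositions. The only point requiring a little care is making sure the selected $\boldsymbol{g}_i$ really do realize \emph{every} height from $0$ to $m$ with no gaps — this is exactly what Lemma~\ref{lem:cover-all-heights} guarantees — and that the height of $z^l\pb{q}_j$ is $h(\pb{q}_j)+nl$ rather than something smaller, which follows because $\pb{q}_j$ is a fixed nonzero vector polynomial, so multiplying each component by $z^l$ adds $l$ to each component degree and hence $nl$ to the height as computed in Definition~\ref{def:height}. With those two observations in place, Proposition~\ref{prop:basis-vector} does the rest.
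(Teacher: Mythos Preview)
Your argument is correct and follows exactly the route the paper indicates: Lemma~\ref{lem:cover-all-heights} supplies a vector polynomial of each height $0,1,\dots,m$, and Proposition~\ref{prop:basis-vector} then gives the expansion. One small slip in your parenthetical aside: $h(\pb{q}_j)$ is \emph{never} equal to any $h(\pb{p}_i)$ --- that is precisely the content of Lemma~\ref{lem:height-recurrence}~$\ref{item:lemma-heights-3}$) --- so the correct resolution is the second one you give, reading $\nats$ as $\nats\cup\{0\}$.
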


\begin{theorem}
\label{thm:q-j-generator}
For $j\in\{1,\dots,n\}$, the vector polynomial $\boldsymbol{q}_j(z)$
is a $j$-th generator of
  \begin{equation*}
    \mathbb{S}(\{\sigma_k\}_{k=1}^N,\{x_k\}_{k=1}^N)\,.
  \end{equation*}
\end{theorem}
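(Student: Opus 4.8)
The plan is to show that each $\boldsymbol{q}_j$ is precisely the $j$-th generator of $\mathbb{S}:=\mathbb{S}(\{\sigma_k\}_{k=1}^N,\{x_k\}_{k=1}^N)$ by matching the recursive definition of generators (Definition~\ref{def:j-generator}) with the structural facts established in Lemmas~\ref{lem:degenerate-q}, \ref{lem:height-recurrence}, \ref{lem:cover-all-heights}, Corollary~\ref{cor:basis-vector}, and Proposition~\ref{prop:all-solution-generators}. First I would record that, by Lemma~\ref{lem:degenerate-q}, all the $\boldsymbol{q}_j$ lie in $\mathbb{S}$, and that by Lemma~\ref{lem:height-recurrence}~$\ref{item:lemma-heights-3}$) the heights $h(\boldsymbol{q}_j)=h(\boldsymbol{p}_{m_j})+n$ are pairwise distinct modulo $n$ (since $\mathbb{S}$ has exactly $n$ generators by Proposition~\ref{prop:all-solution-generators}, and their heights are distinct elements of $\integers/n\integers$, it suffices to produce $n$ elements of $\mathbb{S}$ with heights in $n$ distinct residue classes, provided each is height-minimal in the appropriate sense). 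I would arrange the labeling so that $\boldsymbol{q}_j$ corresponds to the $j$-th generator, i.e.\ verify that $h(\boldsymbol{q}_1)<h(\boldsymbol{q}_2)<\dots$ is compatible with the ordering forced by Definition~\ref{def:j-generator}; this ordering is dictated by the index $m_1<m_2<\dots<m_n$ together with the height recurrence.

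The heart of the argument is to show, by induction on $j$, that $\boldsymbol{q}_j$ is a first generator of $\mathbb{S}\setminus[\mathbb{M}(\boldsymbol{q}_1)\dotplus\dots\dotplus\mathbb{M}(\boldsymbol{q}_{j-1})]$, i.e.\ that
\begin{equation*}
  h(\boldsymbol{q}_j)=h\bigl(\mathbb{S}\setminus[\mathbb{M}(\boldsymbol{q}_1)\dotplus\dots\dotplus\mathbb{M}(\boldsymbol{q}_{j-1})]\bigr)\,.
\end{equation*}
For the lower bound I would take an arbitrary nonzero $\boldsymbol{r}\in\mathbb{S}\setminus[\mathbb{M}(\boldsymbol{q}_1)\dotplus\dots\dotplus\mathbb{M}(\boldsymbol{q}_{j-1})]$ and argue that $h(\boldsymbol{r})\ge h(\boldsymbol{q}_j)$. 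Using Corollary~\ref{cor:basis-vector}, expand $\boldsymbol{r}$ as a finite linear combination of the $\boldsymbol{p}_k$ and the $z^l\boldsymbol{q}_i$. Since $\boldsymbol{r}\in\mathbb{S}$ and each $\boldsymbol{p}_k$ has $\inner{\boldsymbol{p}_k}{\boldsymbol{p}_k}_{L_2(\reals,\sigma)}=1\ne 0$ by Lemma~\ref{lem:ortonormal-p-L2} while the $\boldsymbol{q}_i$ sit in the zero class of $L_2(\reals,\sigma)$ by Corollary~\ref{coro:properties-L2-q}, membership in $\mathbb{S}$ (which is the zero class in $L_2(\reals,\sigma)$) forces the $\boldsymbol{p}_k$-coefficients to vanish: that is, $\mathbb{S}=\operatorname{span}\{z^l\boldsymbol{q}_i: l\in\nats\cup\{0\},\, i\in\{1,\dots,n\}\}=\mathbb{M}(\boldsymbol{q}_1)\dotplus\dots\dotplus\mathbb{M}(\boldsymbol{q}_n)$. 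Granting this, any $\boldsymbol{r}$ outside $\mathbb{M}(\boldsymbol{q}_1)\dotplus\dots\dotplus\mathbb{M}(\boldsymbol{q}_{j-1})$ must involve some $z^l\boldsymbol{q}_i$ with $i\ge j$, whose height is $h(\boldsymbol{q}_i)+nl\ge h(\boldsymbol{q}_j)$; and because the $h(\boldsymbol{q}_i)$ lie in distinct residue classes mod $n$, no cancellation of the leading height can occur, giving $h(\boldsymbol{r})\ge h(\boldsymbol{q}_j)$. Since $\boldsymbol{q}_j$ itself lies in this difference set (it is not in $\mathbb{M}(\boldsymbol{q}_1)\dotplus\dots\dotplus\mathbb{M}(\boldsymbol{q}_{j-1})$, again by the distinct-residue property), the height is attained and $\boldsymbol{q}_j$ is the $j$-th generator.

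The main obstacle I anticipate is twofold. First, pinning down the claim $\mathbb{S}=\operatorname{span}\{z^l\boldsymbol{q}_i\}$ rigorously: Corollary~\ref{cor:basis-vector} gives that \emph{every} vector polynomial is a combination of the $\boldsymbol{p}_k$ and the $z^l\boldsymbol{q}_j$, but one must carefully use the orthonormality of $\{\boldsymbol{p}_k\}$ in $L_2(\reals,\sigma)$ together with $\inner{\boldsymbol{r}}{\boldsymbol{q}_j}_{L_2}=0$ from \eqref{eq:class-zero-arb-p} to conclude that a polynomial lies in $\mathbb{S}$ (equivalently, represents $0$ in $L_2(\reals,\sigma)$) if and only if its $\boldsymbol{p}_k$-components all vanish. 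Second, the no-cancellation-of-heights step requires care: one must confirm that within $\mathbb{M}(\boldsymbol{q}_i)$ the heights realized are exactly $h(\boldsymbol{q}_i)+n\nats\cup\{0\}$ (this is the remark following Definition~\ref{def:m-spaces}) and invoke that the $h(\boldsymbol{q}_i)$ are in distinct classes mod $n$ — which in turn rests on Lemma~\ref{lem:height-recurrence}~$\ref{item:lemma-heights-3}$) and a counting argument using Proposition~\ref{prop:sum-height-generators} to ensure we have not missed or double-counted a generator. Once these bookkeeping points are settled, the identification is essentially forced.
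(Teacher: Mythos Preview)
Your proposal is correct and follows essentially the same route as the paper: expand an arbitrary $\boldsymbol{r}\in\mathbb{S}$ via Corollary~\ref{cor:basis-vector}, use the orthonormality of $\{\boldsymbol{p}_k\}$ in $L_2(\reals,\sigma)$ together with Corollary~\ref{coro:properties-L2-q} to force all $\boldsymbol{p}_k$-coefficients to vanish, and then use height bookkeeping to conclude that $\boldsymbol{r}$ lies in $\mathbb{M}(\boldsymbol{q}_1)\dotplus\dots\dotplus\mathbb{M}(\boldsymbol{q}_{j-1})$ whenever $h(\boldsymbol{r})<h(\boldsymbol{q}_j)$. The only cosmetic difference is that the paper phrases the last step as a direct contradiction from the hypothesis $h(\boldsymbol{r})<h(\boldsymbol{q}_j)$ (so only terms $z^l\boldsymbol{q}_i$ with $i<j$ can appear in the expansion), whereas you first isolate the identity $\mathbb{S}=\mathbb{M}(\boldsymbol{q}_1)\dotplus\dots\dotplus\mathbb{M}(\boldsymbol{q}_n)$ and then invoke the distinct-residue property; both amount to the same height argument.
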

\begin{proof}
  For any fixed $j\in\{1,\dots,n\}$, suppose that there is an element
  $\boldsymbol{r}(z)\in\mathbb{S}(\{\sigma_k\}_{k=1}^N,\{x_k\}_{k=1}^N)
\setminus\left(\mathbb{M}(\boldsymbol{q}_1)\dotplus\dots
\dotplus\mathbb{M}(\boldsymbol{q}_{j-1})\right)$,
  where $\boldsymbol{q}_0(z):=0$ such that
  $h(\boldsymbol{q}_{j-1})<h(\boldsymbol{r})<h(\boldsymbol{q}_j)$. Write
  $\boldsymbol{r}$ as Corollary~\ref{cor:basis-vector}, then by
  Corollary~\ref{coro:properties-L2-q}
\begin{align*}
  0=\inner{\boldsymbol{r}}{\boldsymbol{r}}_{L_2(\mathbb{R},\sigma)}=
\inner{\sum_{k=1}^{N}c_k\boldsymbol{p}_k}{\sum_{k=1}^{N}c_k\boldsymbol{p}_k}
_{L_2(\mathbb{R},\sigma)}=\sum_{k=1}^{N}\abs{c_k}^2\,.
\end{align*}
This implies that $c_k=0$ for all $k\in\{1,\dots,N\}$. In turn, again
by Corollary~\ref{cor:basis-vector}, one has
\begin{equation*}
  \boldsymbol{r}(z)\in\mathbb{M}(\boldsymbol{q}_1)
  \dotplus\dots\dotplus\mathbb{M}(\boldsymbol{q}_{j-1})
\end{equation*}
for $j>1$, and $\boldsymbol{r}(z)\equiv 0$ for $j=1$. This
contradiction yields that $\boldsymbol{q}_j(z)$ satisfies the
definition of $j$-generator for any $j\in\{1,\dots,n\}$.
\end{proof}
The following assertion is a direct consequence of
Theorem~\ref{thm:q-j-generator},
Proposition~\ref{prop:all-solution-generators}, and
Proposition~\ref{prop:sum-height-generators}.
\begin{corollary}
\label{coro:q-space-factor}
Let $\{\boldsymbol{q}_1(z),\dots,\boldsymbol{q}_n(z)\}$ be the
$n$-dimensional vector polynomials defined by (\ref{eq:def-q}). Then,
$h(\boldsymbol{q}_1),\dots,h(\boldsymbol{q}_n)$ are different elements
of the equivqlence class of the factor space
$\mathbb{Z}/n\mathbb{Z}$. Also,
\begin{equation*}
  \sum_{j=1}^{n}h(\boldsymbol{q}_j)=Nn+\frac{n(n-1)}{2}\,.
\end{equation*}
\end{corollary}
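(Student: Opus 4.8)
The plan is to derive Corollary~\ref{coro:q-space-factor} purely as a bookkeeping consequence of the three results already invoked. By Theorem~\ref{thm:q-j-generator}, for each $j\in\{1,\dots,n\}$ the vector polynomial $\boldsymbol{q}_j(z)$ defined by (\ref{eq:def-q}) is a $j$-th generator of $\mathbb{S}=\mathbb{S}(\{\sigma_k\}_{k=1}^N,\{x_k\}_{k=1}^N)$ in the sense of Definition~\ref{def:j-generator}. Hence the collection $\{\boldsymbol{q}_1(z),\dots,\boldsymbol{q}_n(z)\}$ is \emph{a} complete system of generators of $\mathbb{S}$. Proposition~\ref{prop:all-solution-generators} tells us that $\mathbb{S}$ has exactly $n$ generators and that the heights of \emph{the} generators of $\mathbb{S}$ are pairwise distinct modulo $n$; applying this to the generators $\boldsymbol{q}_1,\dots,\boldsymbol{q}_n$ gives immediately that $h(\boldsymbol{q}_1),\dots,h(\boldsymbol{q}_n)$ represent $n$ distinct classes in $\mathbb{Z}/n\mathbb{Z}$, which is the first assertion.

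For the second assertion I would simply invoke Proposition~\ref{prop:sum-height-generators} with the same identification: since $\boldsymbol{q}_j$ is the $j$-th generator of $\mathbb{S}$ (here $\mathbb{S}(n,N)$ in the notation of that proposition), we get
\begin{equation*}
  \sum_{j=1}^{n}h(\boldsymbol{q}_j)=Nn+\frac{n(n-1)}{2}\,.
\end{equation*}
No further computation is needed.

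The only point requiring a word of care — and what I would flag as the single substantive step — is checking that the generators produced in Definition~\ref{def:j-generator} are, up to the indexing, unique enough for Propositions~\ref{prop:all-solution-generators} and~\ref{prop:sum-height-generators} to apply to the specific choice $\boldsymbol{q}_j$. This is guaranteed because Theorem~\ref{thm:q-j-generator} asserts precisely that $\boldsymbol{q}_j$ realizes the defining minimality property $h(\boldsymbol{q}_j)=h(\mathbb{S}\setminus[\mathbb{M}(\boldsymbol{q}_1)\dotplus\dots\dotplus\mathbb{M}(\boldsymbol{q}_{j-1})])$ for every $j$, so $\boldsymbol{q}_1,\dots,\boldsymbol{q}_n$ is literally an instance of the recursively defined sequence of generators. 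Once that is observed, both conclusions are immediate from the cited propositions, and the corollary follows.
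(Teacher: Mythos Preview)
Your proof is correct and matches the paper's own treatment: the corollary is stated there as an immediate consequence of Theorem~\ref{thm:q-j-generator}, Proposition~\ref{prop:all-solution-generators}, and Proposition~\ref{prop:sum-height-generators}, which is exactly the chain of reasoning you give. Your added remark about the applicability of the propositions to the specific generators $\boldsymbol{q}_j$ is a reasonable clarification but not something the paper felt the need to spell out.
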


\section{Reconstruction}
\label{sec:reconstruction}
In this section, we take as a starting point a matrix valued function
$\widetilde{\sigma}\in\mathfrak{M}(n,N)$ and construct a matrix
$\mathcal{A}$ in $\mathcal{M}(n,N)$ from this function. Moreover, we
verify that, for some matrix $\mathscr{T}$ giving the initial
conditions, the function $\sigma$ generated by the matrix
$\mathcal{A}$ (see Section~\ref{sec:spectral-measure}) coincides with
$\widetilde{\sigma}$. Thus, the results of this section show that any
matrix in $\mathcal{M}(n,N)$ can be reconstructed from its function in
$\mathfrak{M}(n,N)$.

Let $\widetilde{\sigma}(t)$ be a matrix valued function in
$\mathfrak{M}(n,N)$. Thus, one can associate an interpolation problem
(\ref{eq:introduction-interpolation-problem}) which is equivalent to
(\ref{eq:interpolation-problem-sigma}) (with $\widetilde{\sigma}_k$
instead of $\sigma_k$).  Then, by Proposition
\ref{prop:all-solution-generators} there are $n$ generators
$\tb{q}_1(z),\dots,\tb{q}_n(z)$ of
$\mathbb{S}(\{\widetilde{\sigma}_k\}_{k=1}^{N},\{\widetilde{z}_k\}_{k=1}^N)$.


Let $\{\boldsymbol{e}_i(z)\}_{i\in\mathbb{N}}$ be a sequence of
$n$-dimensional vector polynomials defined by
\begin{equation}
  \label{eq:e-k-vectors}
  \boldsymbol{e}_{nk+1}(z):=
  \left(\begin{array}{c}z^k\\0\\0\\\vdots\\0\end{array}\right),
  \boldsymbol{e}_{nk+2}(z)
  :=\left(\begin{array}{c}0\\z^k\\0\\\vdots\\0\end{array}\right),
  \dots,\boldsymbol{e}_{n(k+1)}(z)
  :=\left(\begin{array}{c}0\\0\\\vdots\\0\\z^k\end{array}\right)\,.
\end{equation}
Clearly, $h(\boldsymbol{e}_i)=i-1$. In the Hilbert space
$L_2(\reals,\widetilde{\sigma})$, let us apply the Gram-Schmidt
procedure of orthonormalization to the sequence
$\{\boldsymbol{e}_i(z)\}_{i\in\mathbb{N}}$. Suppose that $\tb{q}_1$ is
the first generator of the corresponding interpolation problem and let
$\{\tb{p}_k\}_{k=1}^{h(\tb{q}_1)}$ be the orthonormalized vector
polynomials obtained by the first $h(\tb{q}_1)$ iterations of the
Gram-Schmidt procedure. Hence, if one defines
\begin{equation*}
  \pb{s}:=\pb{e}_{h(\tb{q}_1)+1}-\sum_{i=1}^{h(\tb{q}_1)}
  \inner{\tb{p}_i}{\pb{e}_{h(\tb{q}_1)+1}}_{L_2(\reals,\widetilde{\sigma})}\tb{p}_i\,,
\end{equation*}
then, in view of the fact that $h(\tb{p}_k)=k-1$ for $k=1,\dots,
h(\tb{q}_1)$, one has
$\pb{e}_{h(\tb{q}_1)+1}=c\tb{q}_1+\sum_{i=1}^{h(\tb{q}_1)}c_i\tb{p}_i$ which
in turn leads to
\begin{equation}
\label{eq:gram-schmidt-vector}
\pb{s}=c\tb{q}_1+\sum_{k=1}^{h(\tb{q}_1)}\widetilde{c}_k\tb{p}_k\,.
\end{equation}
This implies that
$\norm{\pb{s}}_{L_2(\reals,\widetilde{\sigma})}=0$.  One continues
with the procedure by taking the next vector of the sequence
\eqref{eq:e-k-vectors}. Note that if $\tb{p}_k$ is a normalized
element given by the Gram-Schmidt procedure, then the next
\emph{completed} iteration yields a normalized vector $\tb{p}_{k+1}$.
Observe that if the Gram-Schmidt technique has produced a vector
polynomial $\pb{q}$ of zero norm  and height $h$, then for any integer
number $l$, the vector polynomial $\pb{t}$ which is obtained at the
$h+1+nl$-th iteration of the Gram-Schmidt process, that is,
\begin{equation*}
  \pb{t}=\pb{e}_{h+1+nl}\,-\!\!\!\!\sum_{h(\tb{p}_i)<h+nl}
  \!\!\!\!\inner{\tb{p}_i}{\pb{e}_{h+1+nl}}_{L_2(\reals,\widetilde{\sigma})}\tb{p}_i\,,
\end{equation*}
satisfies that $\norm{\pb{t}}_{L_2(\reals,\widetilde{\sigma})}=0$ (for all
$l\in\nats$) due to the fact that
\begin{equation*}
  \pb{e}_{h+1+nl}=R_l\pb{q}+\!\!\!\!\sum_{h(\tb{p}_i)<h+1+nl}\!\!\!\! c_i\tb{p}_i+
  \!\!\!\!\sum_{h(\pb{r}_i)<h+1+nl} \!\!\!\!\pb{r}_i\,,
\end{equation*}
where $R_l$ is a scalar polynomial with degree $k$ and each $\pb{r}_i$
is a vector polynomial of zero norm with $h(\pb{r}_i)\ne h+nk$.

Since $L_2(\reals,\widetilde{\sigma})$ has dimension $N$, then one
obtains from the Gram-Schmidt algorithm the orthonormalized sequence
$\{\tb{p}_k\}_{k=1}^N$. Furthermore, the zeros found by the
unsuccesful iterations yields all the generators
$\{\tb{q}_i\}_{i=1}^n$ of the interpolation problem given by
$\widetilde{\sigma}$ and also polynomials in
$\mathbb{S}(\{\widetilde{\sigma}_k\}_{k=1}^{N},\{\widetilde{z}_k\}_{k=1}^N)$
such that their height are of the form $h(\tb{q}_i)+nl$ with
$i\in\{1,\dots,n\}$ and $l\in\nats$.

\begin{remark}
  \label{rem:heights-p-N-and-q-n}
  If one is interested only in the heights of the vector polynomials
  $\tb{p}_N$ and $\tb{q}_n$, then one can stop the Gram-Schmidt procedure
  when $\tb{q}_{n-1}$ appears and use
  Corollary~\ref{coro:q-space-factor}.
\end{remark}
  Due to the fact that
  \begin{equation}
    \label{eq:GramScmidt-procedure}
    h\left(\pb{e}_k-\!\!\!\!\sum_{h(\pb{p}_i)<k-1}\!\!\!\!
\inner{\tb{p}_i}{\pb{e}_k}_{L_2(\reals,\widetilde{\sigma})}\tb{p}_i\right)
    =h(\pb{e}_k)\,,
  \end{equation}
  the heights of the set
  $\{\tb{p}_k(z)\}_{k=1}^N\cup\{
  z^l\tb{q}_i(z)\}_{i=1}^n$ are in one-to-one correspondence
  with the set $\{0\}\cup\mathbb{N}$. Thus, in view of
  Proposition~\ref{prop:basis-vector}, one can write any
  $n$-dimensional vector polynomial $\boldsymbol{r}$ as
\begin{equation*}
\boldsymbol{r}(z)=\sum_{k=1}^{N}c_k\tb{p}_k(z)+
\sum_{j=1}^{n}S_j(z)\tb{q}_j(z)\,,
\end{equation*}
where $c_k\in\mathbb{C}$, $S_j(z)$ are scalar polynomials. Also,
$c_k=0$, respectively $S_j(z)=0$, if
$h(\boldsymbol{r})<h(\widetilde{\boldsymbol{p}}_k)$, respectively
$h(\boldsymbol{r})<h(\widetilde{\boldsymbol{q}}_j)$. In particular, for
$k\in\{1,\dots,N\}$,
  \begin{equation}
    \label{eq:representation-z-q}
    z\widetilde{\boldsymbol{p}}_k(z)=
\sum_{l=1}^{N}c_{lk}\widetilde{\boldsymbol{p}}_l(z)+
\sum_{j=1}^nS_{kj}(z)\tb{q}_j(z)\,,
  \end{equation}
where $c_{kl}\in\mathbb{C}$ and $S_{kj}(z)$ is scalar
polynomial.
\begin{remark}
\label{rem:basis-g}
In \eqref{eq:representation-z-q}, it holds that, for each $k$,
\begin{enumerate}[$i$)]
\item $c_{lk}=0$ if $h(z\tb{p}_k)<h(\tb{p}_l)$\label{item:heigt-equal-zp-p1},
\item $S_{kj}(z)=0$ if $h(z\tb{p}_k)<h(S_{kj}(z)\tb{q}_j)$\label{item:heigt-equal-zp-p2},
\item $c_{lk}>0$ if there is $l\in\mathbb{N}$ such that
  $h(z\tb{p}_k)=h(\tb{p}_l)$.\label{item:heigt-equal-zp-p}
\end{enumerate}
Items~$\ref{item:heigt-equal-zp-p1}$) and
$\ref{item:heigt-equal-zp-p2}$) are obtained by comparing the heights
of the left and right hand sides of (\ref{eq:representation-z-q}).  In
item~$\ref{item:heigt-equal-zp-p}$), one has to take into account that
the leading coefficient of $\pb{e}_k$ is positive for $k\in\nats$ and
therefore the Gram-Schmidt procedure yields the sequence
$\{\tb{p}_k\}_{k=1}^N$ with its elements having positive leading
coefficients.
\end{remark}
\begin{remark}
  \label{rem:hight-p_i}
  For $k=1,\dots,h_1$. We have that
  \begin{equation*}
    h(\tb{p}_k)=k-1\,,
  \end{equation*}
and, for $k>h_1$
\begin{equation*}
h(\tb{p}_k)=k-1+b_k\,,
\end{equation*}
where $b_k$ is the number of elements in the set
$\mathbb{M}(\tb{q}_1)\dotplus\dots\dotplus\mathbb{M}(\tb{q}_n)$
obtained by the Gram-Schmidt procedure and whose heights are less than
$h(\tb{p}_k)$. Observe that $b_k<b_{n+k}$, which in turn implies
\begin{equation}
\label{eq:height-p-N+i}
 h(\tb{p}_{n+k})>h(\tb{p}_{k})+n\,.
\end{equation}
\end{remark}
Therefore, if we take the inner product of
(\ref{eq:representation-z-q}) with $\widetilde{\boldsymbol{p}}_l(z)$
in $L_2(\mathbb{R},\widetilde{\sigma})$, we obtain
\begin{equation}
\label{eq:def-c-ki-inner}
c_{lk}=
\inner{\widetilde{\boldsymbol{p}}_l}
{z\widetilde{\boldsymbol{p}}_k}_{L_2(\mathbb{R},\widetilde{\sigma})}
=\inner{z\widetilde{\boldsymbol{p}_l}}
{\widetilde{\boldsymbol{p}}_k}_{L_2(\mathbb{R},\widetilde{\sigma})}
=c_{kl}\,,
\end{equation}
where (\ref{eq:class-zero-arb-p}) has been used. Hence, the matrix
$\{c_{lk}\}_{l,k=1}^N$ is symmetric and it is the matrix
representation of the operator of multiplication by the independent
variable in $L_2(\mathbb{R},\widetilde{\sigma})$ with respect to the
basis $\{\tb{p}_k(z)\}_{k=1}^N$.

The following results shed light on the structure of the matrix
$\{c_{lk}\}_{l,k=1}^N$.
\begin{lemma}
  \label{lem:band-matrix}
If $\abs{l-k}>n$. Then,
\begin{equation*}
  c_{kl}=c_{lk}=0\,.
\end{equation*}
\end{lemma}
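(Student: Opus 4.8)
The plan is to run the whole argument on heights, using the bookkeeping already set up in Remarks~\ref{rem:basis-g} and~\ref{rem:hight-p_i}. Since \eqref{eq:def-c-ki-inner} shows that $\{c_{lk}\}_{l,k=1}^N$ is symmetric, it suffices to prove $c_{lk}=0$ when $l>k+n$; note that in this case $k+n\le N-1$, so $\tb{p}_{k+n}$ is a genuine element of the orthonormalized family (if the hypothesis $\abs{l-k}>n$ forced an index outside $\{1,\dots,N\}$, the statement would be vacuous).

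First I would record the elementary fact that multiplying an $n$-dimensional vector polynomial by $z$ raises its height by exactly $n$: for $\pb{r}=(R_1,\dots,R_n)^t$ one has $\deg(zR_j)=\deg R_j+1$ (with the convention $-\infty+1=-\infty$), so $h(z\pb{r})=n+h(\pb{r})$; in particular $h(z\tb{p}_k)=n+h(\tb{p}_k)$. Next I would observe that the heights of the Gram--Schmidt outputs are strictly increasing in the index, $h(\tb{p}_1)<h(\tb{p}_2)<\dots<h(\tb{p}_N)$, because the procedure processes the $\pb{e}_i$ in increasing order of $i$, discards the zero-norm outputs, and by \eqref{eq:GramScmidt-procedure} each retained step returns a vector of height $h(\pb{e}_i)=i-1$. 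Combining this monotonicity with \eqref{eq:height-p-N+i} of Remark~\ref{rem:hight-p_i}, for $l>k+n$ we obtain
\begin{equation*}
  h(\tb{p}_l)\ge h(\tb{p}_{k+n})>h(\tb{p}_k)+n=h(z\tb{p}_k)\,.
\end{equation*}

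To finish, I would expand $z\tb{p}_k$ as in \eqref{eq:representation-z-q} and apply item~$\ref{item:heigt-equal-zp-p1}$) of Remark~\ref{rem:basis-g}: the coefficient $c_{lk}$ is forced to vanish as soon as $h(z\tb{p}_k)<h(\tb{p}_l)$, which is exactly the inequality just displayed. Hence $c_{lk}=0$, and by the symmetry noted at the start $c_{kl}=c_{lk}=0$. The argument is short; the only place that needs genuine care is the strict monotonicity $h(\tb{p}_m)<h(\tb{p}_{m+1})$ — everything else is a direct appeal to the cited remarks — so that is where I would concentrate the writing, making sure that no two successful Gram--Schmidt iterations can return vectors of equal height.
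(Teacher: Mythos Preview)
Your proof is correct and follows essentially the same route as the paper's: reduce to $l>k+n$ by symmetry, combine the monotonicity of the heights $h(\tb{p}_m)$ with inequality~\eqref{eq:height-p-N+i} to obtain $h(\tb{p}_l)>h(z\tb{p}_k)$, and then invoke Remark~\ref{rem:basis-g}\,$\ref{item:heigt-equal-zp-p1}$). The only differences are cosmetic --- you spell out the identity $h(z\pb{r})=n+h(\pb{r})$ and the strict monotonicity of heights, and you and the paper interchange which of the two inequalities in the displayed chain is recorded as strict (either choice suffices).
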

\begin{proof}
  For $l-k>n$, we obtain from (\ref{eq:height-p-N+i}) that
  $h(\tb{p}_l)>h(\tb{p}_{k+n})\geq h(\tb{p}_k)+n=h(z\tb{p}_k)$. Therefore
  by Remark \ref{rem:basis-g},
 \begin{equation*}
 c_{lk}=\inner{\tb{p}_l}{z\tb{p}_k}_{L_2(\mathbb{R},\widetilde{\sigma})}=0\,.
 \end{equation*}
 And similarly for $k-l>n$.
\end{proof}
Lemma \ref{lem:band-matrix} shows that $\{c_{lk}\}_{l,k=1}^N$ is a
band matrix. Let us turn to the question of characterizing the
diagonals of $\{c_{lk}\}_{l,k=1}^N$. It will be shown that they
undergo the kind of degeneration given in the Introduction.

For a fixed number $i\in\{0,\dots,n\}$, we define the numbers
\begin{equation}
  \label{eq:entries-matrix-inverse-prob}
  d^{(i)}_k:=c_{k+i, k}=c_{k, k+i}
\end{equation}
for $k=1,\dots,N-i$.

\begin{lemma}
  \label{lem:reconstruction-entries}
  Fix $j\in\{0,\dots,n-1\}$.
  \begin{enumerate}[$i$)]
  \item If $k$ is such that $ h(\tb{q}_{j})<
    h(z\tb{p}_k)<h(\tb{q}_{j+1})$, then $d_k^{(n-j)}>0$. Here one
    assumes that $h(\boldsymbol{q}_0):=n-1$.
  \item If $k$ is such that $h(z\tb{p}_k)\geq h(\tb{q}_{j+1})$, one
    has that $d_k^{(n-j)}=0$.
  \end{enumerate}

\end{lemma}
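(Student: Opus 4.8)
The plan is to analyze the expansion \eqref{eq:representation-z-q} of $z\tb{p}_k$ in the orthonormal-plus-zero-norm basis, reading off what the coefficient $c_{k+(n-j),k}=d_k^{(n-j)}$ must be by a height-counting argument. The key observation is that, by Remark~\ref{rem:hight-p_i} together with Lemma~\ref{lem:height-recurrence}~$\ref{item:lemma-heights-1}$) adapted to the reconstructed setting, the heights of $\{\tb{p}_k\}_{k=1}^N$ increase by exactly $n$ between consecutive ``free'' indices, except that a jump of $n$ in height occurs precisely across the indices where a generator $\tb{q}_j$ intervenes. So when $h(\tb{q}_j)<h(z\tb{p}_k)<h(\tb{q}_{j+1})$, the integer $h(z\tb{p}_k)=h(\tb{p}_k)+n$ lies strictly between the heights of two consecutive $\tb{p}$'s in such a way that it must equal $h(\tb{p}_l)$ for exactly one $l$, and by the definition of the $d^{(i)}_k$ in \eqref{eq:entries-matrix-inverse-prob} and the band structure (Lemma~\ref{lem:band-matrix}), that $l$ is $k+(n-j)$.

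First I would fix $j$ and take $k$ with $h(\tb{q}_j)<h(z\tb{p}_k)<h(\tb{q}_{j+1})$. Using Corollary~\ref{cor:basis-vector}/\eqref{eq:representation-z-q}, write $z\tb{p}_k=\sum_l c_{lk}\tb{p}_l+\sum_i S_{ki}\tb{q}_i$. By Remark~\ref{rem:basis-g}~$\ref{item:heigt-equal-zp-p1}$)--$\ref{item:heigt-equal-zp-p2}$), the only terms that can appear are those whose height is $\le h(z\tb{p}_k)$; since $h(z\tb{p}_k)<h(\tb{q}_{j+1})\le h(z^l\tb{q}_i)$ for the relevant $i>j$, no generator of index $>j$ enters, and the highest $\tb{p}$-term present has height exactly $h(z\tb{p}_k)$. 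Then I would show that there is an index $l$ with $h(\tb{p}_l)=h(z\tb{p}_k)=h(\tb{p}_k)+n$; this uses the fact (from the Gram--Schmidt construction, Remark~\ref{rem:hight-p_i}, and the one-to-one correspondence of heights with $\{0\}\cup\nats$) that every nonnegative integer not of the form $h(\tb{q}_i)+nl$ is the height of some $\tb{p}_l$, and the hypothesis on $k$ guarantees $h(z\tb{p}_k)$ is not of the excluded form. Identifying $l=k+(n-j)$ requires counting how many generator-heights lie below $h(z\tb{p}_k)$ versus below $h(\tb{p}_k)$: exactly $j$ of them (namely those of $\tb{q}_1,\dots,\tb{q}_j$, whose heights are congruent mod $n$ to a complete residue pattern, by Corollary~\ref{coro:q-space-factor}), so $l-k=n-j$. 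Finally, $d_k^{(n-j)}=c_{lk}>0$ follows from Remark~\ref{rem:basis-g}~$\ref{item:heigt-equal-zp-p}$), since the leading coefficients of the $\tb{p}$'s are positive.

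For part $(ii)$, if $h(z\tb{p}_k)\ge h(\tb{q}_{j+1})$ then I would argue that in the expansion of $z\tb{p}_k$ the coefficient $c_{k+(n-j),k}$ vanishes because $h(\tb{p}_{k+(n-j)})$ is no longer the relevant height at that slot: the presence of the generator $\tb{q}_{j+1}$ (of height $\le h(z\tb{p}_k)$) means the correspondence of heights has ``used up'' the value $h(\tb{p}_k)+n$ for a zero-norm vector rather than for $\tb{p}_{k+(n-j)}$, so that $h(\tb{p}_{k+(n-j)})>h(\tb{p}_k)+n=h(z\tb{p}_k)$ by \eqref{eq:height-p-N+i} and the extra shift $b_{k+(n-j)}$ recorded in Remark~\ref{rem:hight-p_i}; then Remark~\ref{rem:basis-g}~$\ref{item:heigt-equal-zp-p1}$) forces $c_{k+(n-j),k}=0$, i.e.\ $d_k^{(n-j)}=0$.

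The main obstacle I anticipate is the bookkeeping in the second paragraph: carefully tracking, via the $b_k$ of Remark~\ref{rem:hight-p_i} and the residues mod $n$ of the $h(\tb{q}_i)$ from Corollary~\ref{coro:q-space-factor}, that exactly $j$ generator-heights fall strictly between $h(\tb{p}_k)$ and $h(z\tb{p}_k)$ — hence $l=k+n-j$ — and, for $(ii)$, that crossing $h(\tb{q}_{j+1})$ increments this count by one and thereby pushes $h(\tb{p}_{k+n-j})$ above $h(z\tb{p}_k)$. Everything else is a direct consequence of the height calculus already assembled in Lemmas~\ref{lem:height-recurrence}--\ref{lem:band-matrix} and Remark~\ref{rem:basis-g}.
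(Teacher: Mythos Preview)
Your proposal is correct and follows essentially the same route as the paper's proof: both arguments hinge on the height bijection between $\{\tb{p}_k\}\cup\{z^l\tb{q}_i\}$ and $\{0\}\cup\nats$, then count how many generator heights fall in the window $(h(\tb{p}_k),\,h(\tb{p}_k)+n)$ to identify (or rule out) the index $l=k+n-j$ with $h(\tb{p}_l)=h(z\tb{p}_k)$, finishing with Remark~\ref{rem:basis-g}. The paper introduces the counts $f_k$ and $\widetilde{f}_k$ directly in the proof rather than via your $b_k$ from Remark~\ref{rem:hight-p_i}, but this is only a notational difference; the bookkeeping you flag as the main obstacle is precisely what the paper carries out with these counts.
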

\begin{proof}
  Fix a number $j\in\{0,\dots,n-1\}$, then any vector polynomial of the
  basis $\{\tb{p}_k(z)\}_{k=1}^N$ satisfies either
\begin{equation}
\label{eq:height_z_p_between_q}
h(\tb{q}_{j})<h(z\tb{p}_k)<h(\tb{q}_{j+1})
\end{equation}
or
\begin{equation}
\label{eq:height-z_p-less-q_i+1}
h(z\tb{p}_k)\geq h(\tb{q}_{j+1})\,.
\end{equation}
Suppose that
$k\in\{1,\dots,N\}$ is such that (\ref{eq:height_z_p_between_q})
holds, then there is $l\in\{1,\dots,N\}$ such that
\begin{equation*}
  h(\tb{p}_l)=h(\tb{p}_k)+n=h(z\tb{p}_{k})\,.
\end{equation*}
Indeed, if there is no vector polynomial $\tb{p}_l(z)$ such that
$h(\tb{p}_l)=h(z\tb{p}_k)$, then $h(z\tb{p}_k)=h(z^s\tb{q}_i)$ for
some $i\leq j$ and $s\geq 1$. Therefore
$h(\tb{p}_k)=h(z^{s-1}\tb{q}_i)$, which contradicts the fact that the
heights of the set $\{\tb{p}_k\}_{k=1}^N\cup\{z^l\tb{q}_j\}_{j=1}^n$
($l\in\nats\cup\{0\}$) are in one-to-one correspondence with the set
$\{0\}\cup\nats$.

Let $f_k$ be the number of elements of the sequence
$\{\tb{g}_i\}_{i=1}^\infty$ in
\begin{equation*}
\mathbb{M}(\tb{q}_1)\dotplus\dots\dotplus\mathbb{M}(\tb{q}_{i})
\end{equation*}
whose heights lies between $h(\tb{p}_k)$ and $h(\tb{p}_k)+n$. If one
assumes that (\ref{eq:height_z_p_between_q}) holds, then
\begin{equation}
\label{eq:a_k-first}
f_k=j\,.
\end{equation}
This is so because there are $n-1$ ``places'' between $h(\tb{p}_k)$
and $h(\tb{p}_k)+n$ and, for each generator $\tb{q}_j(z)$
($j\in\{1,\dots,n\}$), the heights of the elements of
$\mathbb{M}(\tb{q_j})$ fall into the same equivalence class of
$\mathbb{Z}/n\mathbb{Z}$ (see
Proposition~\ref{prop:form-height-S_n-minus-set-M}).  By
(\ref{eq:a_k-first}), one has
\begin{equation*}
h(z\tb{p}_{k})=h(\tb{p}_{k})+n=h(\tb{p}_{k+n-f_k})=h(\tb{p}_{k+n-j})\,.
\end{equation*}
Therefore, Remark~\ref{rem:basis-g}~$~\ref{item:heigt-equal-zp-p}$)
implies that $d_{k}^{(n-j)}> 0$.

Now, suppose that (\ref{eq:height-z_p-less-q_i+1}) takes place.
In this case, one verifies that
\begin{equation}
 \label{eq:a_k-second-cases}
f_k\geq j+1\,.
\end{equation}
Let $\widetilde{f}_k$ be the number of elements in
$\{\tb{p}_k(z)\}_{k=1}^N$ whose heights lies between $h(\tb{p}_k)$ and
$h(\tb{p}_k)+n$. Then
\begin{equation*}
  h(\tb{p}_{k+\widetilde{f}_k})<h(\tb{p}_k)+n\leq h(\tb{p}_{k+\widetilde{f}_k+1})\,.
\end{equation*}
Also, it follows from (\ref{eq:a_k-second-cases}) and the equality
$n-1=f_k+\widetilde{f}_k$ that
\begin{equation*}
h(\tb{p}_{k+\widetilde{f}_k+1})\leq h(\tb{p}_{k+n-j-1})<h(\tb{p}_{k+n-j})\,.
\end{equation*}
Thus $h(\tb{p}_{k})+n<h(\tb{p}_{k+n-j})$. This implies that
$\inner{\tb{p}_{k+n-1}}{z\tb{p}_k}_{L_2(\mathbb{R},\widetilde{\sigma})}=0$,
which yields that $d_{k}^{(n-j)}=0$ whenever
(\ref{eq:height-z_p-less-q_i+1}) holds.
\end{proof}

\begin{corollary}
  The matrix representation of the operator of multiplication by the
  independent variable in $L_2(\reals,\widetilde{\sigma})$ with
  respect to the basis $\{\tb{p}_k\}_{k=1}^N$ is a matrix in
  $\mathcal{M}(n,N)$.\label{cor:rep-op-multiplication}
\end{corollary}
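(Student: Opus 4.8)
The plan is to assemble Corollary~\ref{cor:rep-op-multiplication} directly from the preceding lemmas of this section, since all the work has essentially been done. We already know from \eqref{eq:def-c-ki-inner} that the matrix $\{c_{lk}\}_{l,k=1}^N$ is symmetric and is the matrix representation of the multiplication operator in $L_2(\reals,\widetilde{\sigma})$ with respect to the orthonormal basis $\{\tb{p}_k\}_{k=1}^N$. First I would invoke Lemma~\ref{lem:band-matrix} to conclude that $c_{lk}=0$ whenever $\abs{l-k}>n$, so that $\{c_{lk}\}_{l,k=1}^N$ is a symmetric band matrix with $2n+1$ diagonals; its $i$-th off-diagonal consists precisely of the numbers $d_k^{(i)}$ defined in \eqref{eq:entries-matrix-inverse-prob}. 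Thus the matrix has the shape of the matrices $\mathcal{A}$ described in the Introduction, and it only remains to check that its diagonals exhibit the prescribed degeneration pattern.

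Next I would extract that degeneration pattern from Lemma~\ref{lem:reconstruction-entries}. Fix $i\in\{1,\dots,n\}$ and set $j:=n-i$, so that $j\in\{0,\dots,n-1\}$. The quantity $h(z\tb{p}_k)=h(\tb{p}_k)+n$ is strictly increasing in $k$ by Remark~\ref{rem:hight-p_i} (indeed $h(\tb{p}_{k+1})>h(\tb{p}_k)$), so as $k$ runs over $\{1,\dots,N-i\}$ the values $h(z\tb{p}_k)$ pass monotonically through the thresholds $h(\tb{q}_1)<\dots<h(\tb{q}_n)$. Consequently there is a well-defined index $\mu_{j}$ such that $h(\tb{q}_j)<h(z\tb{p}_k)<h(\tb{q}_{j+1})$ holds exactly for $k$ in an initial block and $h(z\tb{p}_k)\ge h(\tb{q}_{j+1})$ holds for all larger $k$ (here $h(\tb{q}_0):=n-1$, and one uses \eqref{eq:firs-height} together with $h(\tb{q}_1)>n$ for the base case $j=0$). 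By Lemma~\ref{lem:reconstruction-entries}~$i$) the entries $d_k^{(n-j)}$ are strictly positive on the first block and by part~$ii$) they vanish on the second; moreover the comparison of heights shows these blocks nest correctly as $j$ increases, i.e.\ the vanishing set for $\mathcal{D}_{n-j-1}$ starts no earlier than that of $\mathcal{D}_{n-j}$. Translating the indices $\mu_j$ into the numbers $m_j$ of the Introduction, and checking the edge case in which some diagonal never degenerates (which happens exactly when $h(z\tb{p}_k)<h(\tb{q}_{j+1})$ for all admissible $k$, giving $m_{j+1}=N-n+j+1$), one sees that $\{c_{lk}\}_{l,k=1}^N$ satisfies every requirement in Definition~\ref{def:matrices-degenerate}.

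Finally, I would note that the number $2n+1$ of diagonals and the dimension $N$ match, so the constructed matrix lies in $\mathcal{M}(n,N)$ for the set of numbers $\{m_j\}_{j=1}^n$ just identified, which completes the proof. The only genuinely delicate point is bookkeeping: one must verify that the thresholds read off from Lemma~\ref{lem:reconstruction-entries} produce $m_j$'s that are strictly increasing and satisfy $m_j<N-n+j+1$, and that the position of the last nonzero entry in diagonal $\mathcal{D}_{n-j}$ is consistent with the formula $d_{m_{j+1}}^{(n-j)}=\dots=d_{N-n+j}^{(n-j)}=0$ of \eqref{eq:diagonal-properties-a}. This is where Corollary~\ref{coro:q-space-factor} is needed: because the heights $h(\tb{q}_1),\dots,h(\tb{q}_n)$ are distinct modulo $n$ and sum to $Nn+n(n-1)/2$, the thresholds are spread out exactly as the counting argument in the proof of Lemma~\ref{lem:reconstruction-entries} (the identity $n-1=f_k+\widetilde f_k$) requires, so no inconsistency can arise. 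Everything else is a direct citation of the lemmas above.
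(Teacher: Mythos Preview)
Your proposal is correct and follows essentially the same route as the paper: the paper's proof simply cites \eqref{eq:entries-matrix-inverse-prob} together with Lemma~\ref{lem:band-matrix} and Lemma~\ref{lem:reconstruction-entries} and declares the result immediate. Your additional bookkeeping (monotonicity of $h(z\tb{p}_k)$, the identification of the thresholds with the $m_j$'s, and the appeal to the height identities for the generators) spells out details the paper leaves implicit; note only that the ``initial block'' for diagonal $\mathcal{D}_{n-j}$ with $j\ge 1$ actually starts at $k=m_j+1$ rather than $k=1$, and that the height identities you need for the $\tb{q}_j$ come directly from Propositions~\ref{prop:all-solution-generators} and~\ref{prop:sum-height-generators} rather than from Corollary~\ref{coro:q-space-factor}, which is stated for the polynomials $\boldsymbol{q}_j$ of Section~\ref{sec:direct-spectral-analysis}.
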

\begin{proof}
  Taking into account (\ref{eq:entries-matrix-inverse-prob}), it
  follows from Lemma \ref{lem:band-matrix} and
  \ref{lem:reconstruction-entries} that the matrix
  $\{c_{kl}\}_{k,l=1}^N$ whose entries are given by
  (\ref{eq:def-c-ki-inner}) is in the class $\mathcal{M}(n,N)$.
\end{proof}
\begin{remark}
  \label{rem:numbers-m-j}
  Since the matrix $\{c_{kl}\}_{k,l=1}^N$ is in $\mathcal{M}(n,N)$,
  there are numbers $\{m_i\}_{i=1}^n$ associated with the matrix (see
  Introduction). This numbers can be found from
  Lemma~\ref{lem:reconstruction-entries} which tells us that a
  degeneration occurs when there exists $k\in\{1,\dots,N\}$ such that
  $h(z\tb{p}_k)=h(\tb{q}_{j+1})$ (this happens for each
  $j\in\{0,1,\dots,n-1\}$). Thus,
\begin{equation}
  \label{eq:coincide-height-zp-q}
  h(z\tb{p}_{m_i})=h(\tb{q}_i)\,,\quad \forall i\in\{1,\dots,n\}\,.
\end{equation}
\end{remark}

It is straightforward to verify that (\ref{eq:condition-on-alpha}) is
equivalent to the fact that $\boldsymbol{e}_i(z)$ is not in the
equivalence class of zero in $L(\reals,\widetilde{\sigma})$ for
$i\in\{1,\dots,n\}$. Therefore, the first $n$ elements of
$\{\tb{p}_k(z)\}_{k=1}^N$ are obtained by applying Gram-Schmidt to the
set $\{\boldsymbol{e}_i(z)\}_{i=1}^n$. Thus, if one defines
\begin{equation}
\label{eq:triangular-upper-p}
  t_{ij}:=\inner{\delta_i}{\tb{p}_{j}}_{\complex^{n}}\,,\quad\forall i,j\in\{1,\dots,n\}\,,
\end{equation}
the matrix $\mathscr{T}=\{t_{ij}\}_{i,j=1}^n$ turns out to be upper
triangular real and $t_{jj}\ne 0$ for all $j\in\{1,\dots,n\}$.  Now,
for this matrix $\mathscr{T}$ and $\mathcal{A}$ construct the
solutions $\varphi^{(j)}(z)$ satisfying
(\ref{eq:initial-condition}). Hence, the vector polynomials
$\{\boldsymbol{p}_1(z),\dots,\boldsymbol{p}_n(z)\}$ defined by
(\ref{eq:def-p-i}) satisfy (\ref{eq:triangular-upper-p}). In other
words
\begin{equation*}
 \boldsymbol{p}_j(z)=\tb{p}_j(z)\,,\quad \forall j\in\{1,\dots,n\}\,.
\end{equation*}

Consider the recurrence equation, which is obtained from
(\ref{eq:representation-z-q}), but only for the case
$\ref{item:heigt-equal-zp-p}$) of the Remark~\ref{rem:basis-g}
taking into account (\ref{eq:entries-matrix-inverse-prob}) and
Lemma~\ref{lem:reconstruction-entries}. That is,
\begin{small}
  \begin{equation}
    \label{eq:dif-eq-tilde-p}
    \begin{aligned}
      d^{(0)}_1\tb{p}_1+\dots+d^{(n)}_1\tb{p}_{n+1}&=z\tb{p}_1\\
      d^{(1)}_1\tb{p}_1+d^{(0)}_2\tb{p}_2
      +\dots+d^{(n)}_2\tb{p}_{n+2}&=z\tb{p}_2\\
      &\vdots\\
      d_{m_1-1-n}^{(n)}\tb{p}_{m_1-1-n}
      +\dots+d^{(0)}_{m_1-1}\tb{p}_{m_1-1}+d^{(1)}_{m_1-1}
      \tb{p}_{m_1}+d_{m_1-1}^{(n)}\tb{p}_{m_1-1+n}&
      =z\tb{p}_{m_1-1}\\
      \dots+d^{(0)}_{m_1+1}\tb{p}_{m_1+1}
      +d^{(1)}_{m_1+1}\tb{p}_{m_1+2}+\dots+d_{m_1+1}^{(n-1)}
      \tb{p}_{m_1+n}+S_{m_1+1,1}\tb{q}_1&=z\tb{p}_{m_1+1}\\
      &\vdots\\
      \dots+d^{(0)}_{m_2-1}\tb{p}_{m_2-1}
      +d^{(1)}_{m_2-1}\tb{p}_{m_2}+d_{m_2-1}^{(n-1)}
      \tb{p}_{m_2-2+n}+S_{m_2-1,1}\tb{q}_1&=z\tb{p}_{m_2-1}\\
      \dots+d^{(0)}_{m_2+1}\tb{p}_{m_2+1}
      +d^{(1)}_{m_2+1}\tb{p}_{m_2+2}+\dots+d_{m_2+1}^{(n-2)}
      \tb{p}_{m_2-1+n}+\sum_{i=1}^2S_{m_2+1,i}\tb{q}_i&=z\tb{p}_{m_2+1}\\
      &\vdots
    \end{aligned}
  \end{equation}
\end{small}
Since $\boldsymbol{p}_k(z)$ and $\tb{p}_k(z)$ satisfy the same
recurrence equation for any $k$ in $\{1,\dots,m_1-1+n\}$, one has
\begin{equation*}
 \boldsymbol{p}_k(z)=\tb{p}_k(z)\,,\quad \forall k\in\{1,\dots,m_1-1+n\}\,.
\end{equation*}
From the system of equations (\ref{eq:dif-eq-tilde-p}), consider the
equation containing the vector polynomial $z\tb{p}_{m_1+1}(z)$. By
comparing this equation with the corresponding one from
(\ref{eq:recurrence-equation-vector}), one concludes
\begin{equation*}
 \boldsymbol{p}_{m_1+n}(z)=\tb{p}_{m_1+n}(z)+S(z)\tb{q}_1(z)\,,
\end{equation*}
where $S(z)$ is a scalar polynomial, so $S(z)\tb{q}_1(z)$ is in the
equivalence class of zero of
$L_2(\mathbb{R},\widetilde{\sigma})$. Observe that
$h(\tb{p}_{m_1+n})>h(S\tb{q}_1)$ since
$h(\tb{p}_{m_1+n})=h(z\tb{p}_{m_1+1})$ in the equation containing
$z\tb{p}_{m_1+1}(z)$ and the height of $z\tb{p}_{m_1+1}(z)$ does not
coincide with the height of $S(z)\tb{q}_1(z)$. Recursively, for
$k>m_1+n$, one obtains the following lemma.
\begin{lemma}
  \label{lem:equivalence-p-tilde-p}
  The vector polynomials $\{\boldsymbol{p}_k(z)\}_{k=1}^N$ and
  $\{\tb{p}_k(z)\}_{k=1}^N$ defined above (see the text below
  (\ref{eq:triangular-upper-p}) and above
  Remark~\ref{rem:heights-p-N-and-q-n}, respectively) satisfy that
\begin{equation}
\label{eq:polynomial_p_inverse_problem}
 \boldsymbol{p}_k(z)=\tb{p}_k(z)+\tb{r}_k(z)
\end{equation}
for all $k\in\{1,\dots,N\}$, where $\tb{r}_k(z)$ is in the equivalence
class of zero of $L_2(\mathbb{R},\widetilde{\sigma})$ and
$h(\tb{r}_k)<h(\boldsymbol{p}_k)$.  Therefore,
 \begin{equation*}
   h(\boldsymbol{p}_k)=h(\tb{p}_k)\,,\quad \forall k\in\{1,\dots,N\}\,.
 \end{equation*}
\end{lemma}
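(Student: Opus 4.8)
Plan of proof.

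The plan is to establish both assertions together by strong induction on $k$, extending the computation already carried out for $k\le m_1+n$ in the text above. The base case is $k\le m_1-1+n$: there it was shown that $\boldsymbol{p}_k=\tb{p}_k$, so $\tb{r}_k=0$ (height $-\infty<h(\boldsymbol{p}_k)$), and the first genuinely new instance, $k=m_1+n$, is precisely the identity $\boldsymbol{p}_{m_1+n}=\tb{p}_{m_1+n}+S(z)\tb{q}_1(z)$ with $h(\tb{p}_{m_1+n})>h(S\tb{q}_1)$ derived immediately before the statement. So it remains to run the inductive step for $m_1+n< K\le N$.

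The first task in the inductive step is to identify, for the given $K$, the recurrence equation that "produces" $\boldsymbol{p}_K$ and the one that produces $\tb{p}_K$, and to see that they are centered at the same row. Since $\boldsymbol{p}_1,\dots,\boldsymbol{p}_N$ are built from the reconstructed matrix $\{c_{lk}\}\in\mathcal{M}(n,N)$ (Corollary~\ref{cor:rep-op-multiplication}) with its own numbers $\{m_i\}_{i=1}^n$ (Remark~\ref{rem:numbers-m-j}), Lemma~\ref{lem:height-recurrence} applies to them, and together with Corollary~\ref{cor:basis-vector} (the heights of $\{\boldsymbol{p}_k\}_{k=1}^N\cup\{z^l\boldsymbol{q}_j\}$ are in bijection with $\{0\}\cup\mathbb{N}$) one gets a unique index $k\notin\{m_i\}_{i=1}^n$ with $m_j<k<m_{j+1}$ and $K=k+n-j$, for which $\boldsymbol{p}_K$ is the highest-index term of equation (\ref{eq:recurrence-equation-vector}) centered at $k$; because $k$ is a non-degeneration row this equation carries no $\boldsymbol{q}$-defect, i.e.\ it is exactly (\ref{eq:recurrence}) at $k$. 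By Lemma~\ref{lem:reconstruction-entries} and Remark~\ref{rem:basis-g}~$\ref{item:heigt-equal-zp-p}$), applied to (\ref{eq:representation-z-q})/(\ref{eq:dif-eq-tilde-p}), the equation centered at the same $k$ has $\tb{p}_K$ as its highest-index $\tb{p}$-term with the same leading coefficient $d_k^{(n-j)}>0$, possibly augmented by summands $S_{ki}(z)\tb{q}_i(z)$, $i=1,\dots,j$.

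Subtracting the two equations centered at $k$ and writing $\tb{r}_m:=\boldsymbol{p}_m-\tb{p}_m$ yields
\begin{equation*}
  d_k^{(n-j)}\tb{r}_K = z\,\tb{r}_k - d_k^{(0)}\tb{r}_k - \sum_{i=1}^{n-j-1}d_k^{(i)}\tb{r}_{k+i} - \sum_{i=0}^{n-1}d_{k-n+i}^{(n-i)}\tb{r}_{k-n+i} + \sum_{i=1}^{j}S_{ki}(z)\tb{q}_i(z)\,.
\end{equation*}
Every index appearing on the right is at most $k+n-j-1=K-1$, so by the induction hypothesis each $\tb{r}_m$ there lies in the zero class of $L_2(\mathbb{R},\widetilde{\sigma})$ with $h(\tb{r}_m)<h(\boldsymbol{p}_m)$; each $\tb{q}_i$ is in that zero class as well, being a generator of $\mathbb{S}(\{\widetilde{\sigma}_k\}_{k=1}^N,\{\widetilde{z}_k\}_{k=1}^N)$ (cf.\ Corollary~\ref{coro:properties-L2-q}). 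Since that zero class coincides with $\mathbb{S}(\{\widetilde{\sigma}_k\}_{k=1}^N,\{\widetilde{z}_k\}_{k=1}^N)$, which is closed under addition and multiplication by scalar polynomials, the right-hand side lies in it, and dividing by $d_k^{(n-j)}\neq0$ gives that $\tb{r}_K$ is in the zero class, which is (\ref{eq:polynomial_p_inverse_problem}) at $K$. For the height bound one estimates term by term: $h(z\tb{r}_k)=n+h(\tb{r}_k)<n+h(\boldsymbol{p}_k)=h(\boldsymbol{p}_K)$ by Lemma~\ref{lem:height-recurrence}~$\ref{item:lemma-heights-1}$); each other $\tb{r}_m$ term has $h(\tb{r}_m)<h(\boldsymbol{p}_m)<h(\boldsymbol{p}_K)$ since $m\le K-1$ and $k\mapsto h(\boldsymbol{p}_k)$ is strictly increasing (a consequence of Lemma~\ref{lem:height-recurrence}); and by Remark~\ref{rem:basis-g}~$\ref{item:heigt-equal-zp-p2}$) the surviving summands obey $h(S_{ki}\tb{q}_i)\le h(z\tb{p}_k)=h(\tb{p}_K)=h(\boldsymbol{p}_K)$, with strict inequality because $\tb{p}_K$ and the vector polynomials $z^l\tb{q}_i$ occupy distinct positions in the one-to-one enumeration of heights. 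Hence $h(\tb{r}_K)<h(\boldsymbol{p}_K)$, and therefore $h(\tb{p}_K)=h(\boldsymbol{p}_K-\tb{r}_K)=h(\boldsymbol{p}_K)$, closing the induction.

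The main obstacle I anticipate is the indexing argument in the second paragraph: verifying that, for each $K$, one and the same non-degeneration row $k$ governs $\boldsymbol{p}_K$ in (\ref{eq:recurrence-equation-vector}) and $\tb{p}_K$ in (\ref{eq:representation-z-q}), with a common leading coefficient $d_k^{(n-j)}$ — in other words, exploiting that the band matrix reconstructed from $\widetilde{\sigma}$ (Corollary~\ref{cor:rep-op-multiplication}) is precisely the matrix $\mathcal{A}$ from which the $\boldsymbol{p}_k$'s were built, so both families satisfy the same band recurrence away from the degeneration rows and acquire comparable ($\tb{q}$- or $\boldsymbol{q}$-) defects exactly at those rows. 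Once this matching is in place, the zero-class propagation and the height count are routine transfers of the inductive hypothesis through a single band recurrence.
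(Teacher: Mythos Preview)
Your proposal is correct and follows the same approach as the paper: the paper establishes the base cases $k\le m_1+n$ explicitly in the text preceding the lemma and then simply writes ``Recursively, for $k>m_1+n$, one obtains the following lemma'', whereas you carry out that recursion as a formal strong induction, subtracting the matched recurrence equations (\ref{eq:recurrence-equation-vector}) and (\ref{eq:representation-z-q})/(\ref{eq:dif-eq-tilde-p}) at the common non-degeneration row and propagating both the zero-class membership and the height bound. One small cleanup: for the strict monotonicity of $m\mapsto h(\boldsymbol{p}_m)$ it is cleaner to use the induction hypothesis $h(\boldsymbol{p}_m)=h(\tb{p}_m)$ for $m<K$ together with Remark~\ref{rem:hight-p_i} (which gives $h(\tb{p}_m)=m-1+b_m$ with $b_m$ non-decreasing) rather than Lemma~\ref{lem:height-recurrence} alone.
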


On the other hand, for the particular case $k=m_1$,
(\ref{eq:representation-z-q}) and (\ref{eq:coincide-height-zp-q})
imply that
\begin{equation*}
  z\tb{p}_{m_1}=d_{m_1-n}^{(n)}\tb{p}_{m_1-n}+\dots+d_{m_1}^{(0)}\tb{p}_{m_1}
  +d_{m_1}^{(1)}\tb{p}_{m_1+1}+\dots
  +d_{m_1}^{(n-1)}\tb{p}_{m_1+n-1}+\gamma_1\tb{q}_1\,,
\end{equation*}
where $\gamma_1\neq0$.

In general, one verifies that for all $j\in\{1,\dots,n\}$
\begin{align*}
  z\tb{p}_{m_j}=&d_{m_j-n}^{(n)}\tb{p}_{m_j-n}+\dots
  +d_{m_j}^{(0)}\tb{p}_{m_j}+\\
  &+d_{m_j}^{(1)}\tb{p}_{m_j+1}+\dots+d_{m_j}^{(n-j)}
  \tb{p}_{m_j+n-j}+\sum_{i<j}S_{m_j,i}\tb{q}_i+\gamma_j\tb{q}_j\,,
\end{align*}
where $\gamma_j\neq0$ and $S_i(z)$ is a scalar polynomial.
Hence,
\begin{equation}
\label{eq:tilde-q}
\begin{aligned}
  \gamma_j \tb{q}_j=&\left(z-d_{m_j}^{(0)}\right)\tb{p}_{m_j}
  -\left(d_{m_j-n}^{(n)}\tb{p}_{m_j-n}
    +\dots+d_{m_j-1}^{(1)}\tb{p}_{m_j-1}+\right.\\
  & \left. +d_{m_j}^{(1)}\tb{p}_{m_j+1}+
    \dots+d_{m_1}^{(n-j)}\tb{p}_{m_j+n-j}+\sum_{i<j}S_{m_j,i}\tb{q}_i\right)
\end{aligned}
\end{equation}
for all $j\in\{1,\dots,n\}$.

Let us define the set of vector polynomials
$\{\boldsymbol{q}_1(z),\dots\boldsymbol{q}_n(z)\}$ by means of
(\ref{eq:def-q-by-recurrence}) using
$\{\boldsymbol{p}_1(z),\dots,\boldsymbol{p}_N(z)\}$, as was done in
Section~\ref{sec:spectral-measure}.
\begin{lemma}
\label{lem:equivalence-q-tilde-q}
Let $\tb{q}_j(z)$ be $j$-generator of
$\mathbb{S}(\{\widetilde{\sigma}_k\}_{k=1}^N,\{\widetilde{x}_k\}_{k=1}^N)$,
and $\boldsymbol{q}_j(z)$ be defined as above. Then
$h(\boldsymbol{q}_j)=h(\tb{q}_j)$ for all $j\in\{1,\dots,n\}$ and
  \begin{equation}
    \label{eq:dif-q-tilde-q}
    \boldsymbol{q}_j(z)=\sum_{i\leq j}S_i(z)\tb{q}_i(z)\,,\quad S_j\neq0\,,
  \end{equation}
where $S_i(z)$ are scalar polynomials.
\end{lemma}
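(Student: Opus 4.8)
The plan is to read $\boldsymbol{q}_j$ off its definition (\ref{eq:def-q-by-recurrence}), substitute $\boldsymbol{p}_k=\tb{p}_k+\tb{r}_k$ from Lemma~\ref{lem:equivalence-p-tilde-p}, recognize the resulting ``$\tb{p}$-part'' by means of (\ref{eq:tilde-q}), and then pin down the remaining coefficients with height arguments, the relevant heights being read modulo $n$ at the last step.

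First I would prove the height equality $h(\boldsymbol{q}_j)=h(\tb{q}_j)$ directly from (\ref{eq:def-q-by-recurrence}). The summand $(z-d_{m_j}^{(0)})\boldsymbol{p}_{m_j}$ has height $n+h(\boldsymbol{p}_{m_j})=n+h(\tb{p}_{m_j})=h(z\tb{p}_{m_j})=h(\tb{q}_j)$, by Lemma~\ref{lem:equivalence-p-tilde-p} and (\ref{eq:coincide-height-zp-q}). One then checks that every other summand has height strictly below $h(\tb{q}_j)$: the terms $d_{m_j-n+i}^{(n-i)}\boldsymbol{p}_{m_j-n+i}$ ($0\le i\le n-1$) carry indices below $m_j$, so by the strict monotonicity of $k\mapsto h(\tb{p}_k)$ (Remark~\ref{rem:hight-p_i}) their heights are at most $h(\tb{p}_{m_j-1})<h(\tb{p}_{m_j})<h(\tb{q}_j)$; and for $d_{m_j}^{(i)}\boldsymbol{p}_{m_j+i}$ with $1\le i\le n-j$, Remark~\ref{rem:basis-g}~$\ref{item:heigt-equal-zp-p1}$) forces the coefficient $d_{m_j}^{(i)}=c_{m_j+i,m_j}$ to vanish unless $h(\tb{p}_{m_j+i})\le h(z\tb{p}_{m_j})=h(\tb{q}_j)$, and since the heights of the $\tb{p}$'s and of the $\tb{q}$'s are distinct, this inequality is strict. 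Hence only the first summand attains height $h(\tb{q}_j)$, and $h(\boldsymbol{q}_j)=h(\tb{q}_j)$.

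Next I would substitute $\boldsymbol{p}_k=\tb{p}_k+\tb{r}_k$ into (\ref{eq:def-q-by-recurrence}) and split $\boldsymbol{q}_j$ accordingly. The ``$\tb{p}$-part'' — the same expression with $\tb{p}_k$ in place of $\boldsymbol{p}_k$ — equals, by (\ref{eq:tilde-q}) (where $\gamma_j\ne0$), exactly $\gamma_j\tb{q}_j+\sum_{i<j}S_{m_j,i}\tb{q}_i$. The ``$\tb{r}$-part'' is a $\mathbb{C}[z]$-linear combination of the vector polynomials $\tb{r}_k$, each lying in the equivalence class of zero of $L_2(\reals,\widetilde{\sigma})$; since that class is the $\mathbb{C}[z]$-module $\mathbb{M}(\tb{q}_1)\dotplus\dots\dotplus\mathbb{M}(\tb{q}_n)$ (Proposition~\ref{prop:all-solution-generators}), the $\tb{r}$-part can be written $\sum_{i=1}^{n}T_i(z)\tb{q}_i(z)$ for scalar polynomials $T_i$. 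Collecting terms gives
\begin{equation*}
\boldsymbol{q}_j=(\gamma_j+T_j)\tb{q}_j+\sum_{i<j}(S_{m_j,i}+T_i)\tb{q}_i+\sum_{i>j}T_i\tb{q}_i\,.
\end{equation*}

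Finally, two residue-class observations finish the proof. Since $h(\tb{q}_1),\dots,h(\tb{q}_n)$ lie in distinct classes modulo $n$ (Proposition~\ref{prop:all-solution-generators}), the heights of the $n$ summands above are distinct, so $h(\boldsymbol{q}_j)$ equals the largest of them. If $T_i\ne0$ for some $i>j$, that summand would have height at least $h(\tb{q}_i)>h(\tb{q}_j)=h(\boldsymbol{q}_j)$, which is impossible; hence $T_i=0$ for $i>j$, giving $\boldsymbol{q}_j=\sum_{i\le j}S_i(z)\tb{q}_i(z)$ with $S_j:=\gamma_j+T_j$ and $S_i:=S_{m_j,i}+T_i$ for $i<j$. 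Moreover $S_j\ne0$, for otherwise $h(\boldsymbol{q}_j)$ would be congruent modulo $n$ to $h(\tb{q}_i)$ for some $i<j$, hence different from $h(\tb{q}_j)$, contradicting the height equality. The main obstacle is the first step: establishing $h(\boldsymbol{q}_j)=h(\tb{q}_j)$ is the only place where the degeneration pattern of the reconstructed diagonals (through Remark~\ref{rem:basis-g} and the identity (\ref{eq:coincide-height-zp-q})) genuinely enters; once the height of $\boldsymbol{q}_j$ is fixed, the remainder is linear algebra over $\mathbb{C}[z]$ modulo the zero class.
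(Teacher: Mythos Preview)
Your proof is correct and follows essentially the same route as the paper: substitute $\boldsymbol{p}_k=\tb{p}_k+\tb{r}_k$ into (\ref{eq:def-q-by-recurrence}), identify the $\tb{p}$-part via (\ref{eq:tilde-q}) as $\gamma_j\tb{q}_j$ plus lower generators, note the $\tb{r}$-part lies in the zero class, and finish with Proposition~\ref{prop:all-solution-generators} together with a height/residue-class argument. The only difference is organizational: the paper packages everything into a single equation $\boldsymbol{q}_j=\gamma_j\tb{q}_j+\tb{s}_j$ with $\tb{s}_j$ in the zero class and $h(\tb{s}_j)<h(\tb{q}_j)$, deriving both the height equality and the decomposition from this bound at once; you instead establish $h(\boldsymbol{q}_j)=h(\tb{q}_j)$ first by a separate term-by-term analysis, and then use it to rule out the $i>j$ contributions and certify $S_j\ne0$. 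Your more explicit treatment of the terms $d_{m_j}^{(i)}\boldsymbol{p}_{m_j+i}$ via Remark~\ref{rem:basis-g}~$\ref{item:heigt-equal-zp-p1}$) is a nice touch the paper leaves implicit.
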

\begin{proof}
  It follows from (\ref{eq:def-q-by-recurrence}),
  (\ref{eq:polynomial_p_inverse_problem}) and (\ref{eq:tilde-q}) that
\begin{equation}
\label{eq:tilde-q-relation-q}
  \boldsymbol{q}_j(z)=\gamma_j\tb{q}_j(z)+\tb{s}_j(z)\,,
\quad\text{ for all }j\in\{1,\dots,n\}\,,
\end{equation}
where $\tb{s}_j(z)$ is in the equivalence class of the zero of
$L_2(\mathbb{R},\widetilde{\sigma})$ and its height is strictly less
that the height of $\tb{q}_j(z)$ since, due to
(\ref{eq:coincide-height-zp-q}), the height of $\tb{q}_j(z)$ is
strictly greater than the height of any other term in the equation
with $k=m_j$ in the system (\ref{eq:representation-z-q}). Thus,
$h(\boldsymbol{q}_j)=h(\tb{q}_j)$ for all $j\in\{1,\dots,n\}$.

Equation (\ref{eq:tilde-q-relation-q}) also shows that
$\boldsymbol{q}_i(z)\in\mathbb{S}
(\{\widetilde{\sigma}_k\}_{k=1}^N,\{\widetilde{x}_k\}_{k=1}^N)$ and,
due to Proposition \ref{prop:all-solution-generators},
(\ref{eq:dif-q-tilde-q}) is satisfied.
\end{proof}


\begin{lemma}
\label{lem:inner-coincide}
Let $\boldsymbol{r}(z)$ and $\boldsymbol{s}(z)$ be any two
$n$-dimensional vector polynomials. Then,
  \begin{align*}
    \inner{\boldsymbol{r}}{\boldsymbol{s}}_{L_2(\mathbb{R},\sigma)}=
    \inner{\boldsymbol{r}}{\boldsymbol{s}}_{L_2(\mathbb{R},\widetilde{\sigma})}\,.
  \end{align*}
\end{lemma}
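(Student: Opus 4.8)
The plan is to establish the identity first on a spanning set of vector polynomials and then extend it by sesquilinearity. By Corollary~\ref{cor:basis-vector}, every $n$-dimensional vector polynomial is a finite linear combination of the family
\begin{equation*}
  \{\boldsymbol{p}_k(z)\}_{k=1}^N\cup\{z^l\boldsymbol{q}_j(z):\ l\in\nats\cup\{0\},\ j\in\{1,\dots,n\}\}\,,
\end{equation*}
where $\{\boldsymbol{p}_k\}_{k=1}^N$ and $\{\boldsymbol{q}_j\}_{j=1}^n$ are the vector polynomials attached to the reconstructed matrix $\mathcal{A}$ (together with the matrix $\mathscr{T}$ of (\ref{eq:triangular-upper-p})) exactly as in Section~\ref{sec:spectral-measure}. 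Since both sides of the asserted equality are sesquilinear in the pair $(\boldsymbol{r},\boldsymbol{s})$, it is enough to verify the equality when each of $\boldsymbol{r}$ and $\boldsymbol{s}$ runs over this family, and in fact only the two situations below need to be treated.

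In the first situation, $\boldsymbol{r}=\boldsymbol{p}_i$ and $\boldsymbol{s}=\boldsymbol{p}_k$ with $i,k\in\{1,\dots,N\}$. The left-hand side equals $\delta_{ik}$ by Lemma~\ref{lem:ortonormal-p-L2}. For the right-hand side I would use Lemma~\ref{lem:equivalence-p-tilde-p} to write $\boldsymbol{p}_k=\tb{p}_k+\tb{r}_k$, with $\tb{r}_k$ in the equivalence class of zero of $L_2(\reals,\widetilde{\sigma})$. An element of zero norm is orthogonal to every vector polynomial (positive semidefiniteness of the form, or directly Corollary~\ref{coro:properties-L2-q}), so
\begin{equation*}
  \inner{\boldsymbol{p}_i}{\boldsymbol{p}_k}_{L_2(\reals,\widetilde{\sigma})}
  =\inner{\tb{p}_i}{\tb{p}_k}_{L_2(\reals,\widetilde{\sigma})}=\delta_{ik}\,,
\end{equation*}
the last equality because $\{\tb{p}_k\}_{k=1}^N$ is precisely the orthonormal system produced by the Gram--Schmidt procedure in $L_2(\reals,\widetilde{\sigma})$.

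In the second situation, $\boldsymbol{r}=z^l\boldsymbol{q}_j$ (or, symmetrically, $\boldsymbol{s}=z^l\boldsymbol{q}_j$) for some $j\in\{1,\dots,n\}$ and $l\in\nats\cup\{0\}$. By Lemma~\ref{lem:degenerate-q}, $\boldsymbol{q}_j\in\mathbb{S}(\{\sigma_k\}_{k=1}^N,\{x_k\}_{k=1}^N)$, and because this set is stable under multiplication by scalar polynomials (immediate from the defining relation (\ref{eq:interpolation-problem-sigma})), $z^l\boldsymbol{q}_j$ lies in it as well; hence $z^l\boldsymbol{q}_j$ is in the equivalence class of zero of $L_2(\reals,\sigma)$ and the left-hand side vanishes. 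For the right-hand side, Lemma~\ref{lem:equivalence-q-tilde-q} gives $\boldsymbol{q}_j=\sum_{i\le j}S_i(z)\tb{q}_i(z)$, so $z^l\boldsymbol{q}_j$ is a scalar-polynomial combination of the generators $\tb{q}_i$ of $\mathbb{S}(\{\widetilde{\sigma}_k\}_{k=1}^N,\{\widetilde{x}_k\}_{k=1}^N)$, hence belongs to that set and therefore lies in the equivalence class of zero of $L_2(\reals,\widetilde{\sigma})$; the right-hand side vanishes too. Combining the two situations by sesquilinearity yields the lemma.

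The genuine content here is supplied by Lemmas~\ref{lem:equivalence-p-tilde-p} and~\ref{lem:equivalence-q-tilde-q}, which say exactly that the data $\{\boldsymbol{p}_k\}$, $\{\boldsymbol{q}_j\}$ coming from the reconstructed matrix coincide with the Gram--Schmidt data $\{\tb{p}_k\}$, $\{\tb{q}_i\}$ of $\widetilde{\sigma}$ modulo the equivalence class of zero of $L_2(\reals,\widetilde{\sigma})$; everything else is expansion and bookkeeping. I expect the only delicate point to be the two elementary facts used repeatedly above: that membership in the equivalence class of zero is preserved under multiplication by scalar polynomials, and that any such element is orthogonal to every vector polynomial — the former being a one-line consequence of (\ref{eq:interpolation-problem-sigma}) and the latter of Corollary~\ref{coro:properties-L2-q}.
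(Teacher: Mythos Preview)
Your argument is correct and is essentially the paper's own proof, just organized slightly differently: the paper expands an arbitrary $\boldsymbol{r}$ via Corollary~\ref{cor:basis-vector} as $\sum c_k\boldsymbol{p}_k+\sum S_j\boldsymbol{q}_j$, then uses Lemmas~\ref{lem:equivalence-p-tilde-p} and~\ref{lem:equivalence-q-tilde-q} to compute $\inner{\boldsymbol{r}}{\tb{p}_k}_{L_2(\reals,\widetilde{\sigma})}=c_k$, which is exactly your case analysis packaged as a single direct calculation. The key inputs (Corollary~\ref{cor:basis-vector}, Lemma~\ref{lem:equivalence-p-tilde-p}, Lemma~\ref{lem:equivalence-q-tilde-q}, and the fact that zero-norm elements are orthogonal to everything) are identical in both presentations.
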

\begin{proof}
  Any vector polynomial $\boldsymbol{r}(z)$ can written as
  \begin{equation*}
    \boldsymbol{r}(z)=\sum_{k=1}^{N}c_k\boldsymbol{p}_k(z)
    +\sum_{j=1}^{n}S_{j}(z)\boldsymbol{q}_j(z)\,,
  \end{equation*}
where
$c_k=\inner{\boldsymbol{r}}{\boldsymbol{p}_k}_{L_2(\mathbb{R},\sigma)}$
and $S_{j}(z)$ are scalar polynomials.
Thus,
 \begin{align*}
   \inner{\boldsymbol{r}}{\tb{p}_k}_{L_2(\mathbb{R},\widetilde{\sigma})}&
   =\inner{\sum_{l=1}^{N}c_l\boldsymbol{p}_l+\sum_{j=1}^{n}S_{j}
     \boldsymbol{q}_j}{\tb{p}_k}_{L_2(\mathbb{R},\widetilde{\sigma})}\\
   &=\inner{\sum_{l=1}^{N}c_l\left(\tb{p}_l+\tb{r}_l\right)
     +\sum_{j=1}^{n}S_{j}\left(\sum_{i\leq j}S_i\tb{q}_i\right)}
   {\tb{p}_k}_{L_2(\mathbb{R},\widetilde{\sigma})}\\
   &=\inner{\sum_{l=1}^{N}c_l\tb{p}_l}{\tb{p}_k}
   _{L_2(\mathbb{R},\widetilde{\sigma})}=c_k\,.
 \end{align*}
\end{proof}

For the functions $\sigma(t)$ and $\widetilde{\sigma}(t)$ in
$\mathfrak{M}(n,N)$ consider the points $x_k$ and $\widetilde{x}_k$,
where, respectively, $\sigma(t)$ and $\widetilde{\sigma}(t)$ have
jumps $\sigma_k$ and $\widetilde{\sigma}_k$.  By definition, $k$ takes
all the values of the set $\{1,\dots,N\}$.
\begin{lemma}
\label{lem:same-jumps}
The points where the jumps of the matrix valued functions $\sigma(t)$ and
$\widetilde{\sigma}(t)$ take place coincide, i.\,e.,
  \begin{equation*}
    x_k=\widetilde{x}_k\,,\quad\text{for all } k\in\{1,\dots,N\}\,.
  \end{equation*}
\end{lemma}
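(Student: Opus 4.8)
The plan is to realise both $\sigma$ and $\widetilde\sigma$ as spectral functions of one and the same operator — the one represented by the reconstructed matrix $\mathcal A=\{c_{lk}\}_{l,k=1}^N$ — and then to use the elementary fact that the set of jump points of a matrix valued step function coincides with the spectrum of the associated operator of multiplication by the independent variable.

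First I would note that, by Lemma~\ref{lem:ortonormal-p-L2}, the vector polynomials $\{\boldsymbol{p}_k\}_{k=1}^N$ form an orthonormal basis of $L_2(\mathbb{R},\sigma)$, and that, by Remark~\ref{rem:recurrence-equation-vector}, the operator of multiplication by the independent variable in $L_2(\mathbb{R},\sigma)$ has matrix $\mathcal A$ with respect to this basis. On the other hand, $\{\tb{p}_k\}_{k=1}^N$ is an orthonormal basis of $L_2(\mathbb{R},\widetilde\sigma)$ — orthonormal by construction of the Gram--Schmidt procedure, and a basis because there are exactly $N=\dim L_2(\mathbb{R},\widetilde\sigma)$ of them and their heights are pairwise distinct — and by Corollary~\ref{cor:rep-op-multiplication} together with \eqref{eq:def-c-ki-inner} the operator of multiplication by the independent variable in $L_2(\mathbb{R},\widetilde\sigma)$ also has matrix $\mathcal A$ with respect to $\{\tb{p}_k\}_{k=1}^N$. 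Consequently these two multiplication operators are unitarily equivalent, hence they have the same eigenvalues with the same multiplicities.

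The next step is to identify these eigenvalues. For any $\mu\in\mathfrak{M}(n,N)$ the space $L_2(\mathbb{R},\mu)$ splits as the orthogonal sum, over the distinct jump points $y$ of $\mu$, of the fibres $\mathbb{C}^n/\ker(\,\text{jump of }\mu\text{ at }y\,)$, on each of which multiplication by the independent variable acts as the scalar $y$; since the dimension of the fibre over $y$ equals the rank of the jump of $\mu$ at $y$, and by property~\ref{item:properties-spectral-3}) of Remark~\ref{rem:properties-spectral-function} these ranks add up to $N$, the eigenvalues of the multiplication operator, counted with multiplicity, are exactly the list in which every jump point appears as many times as the rank of the jump there. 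Applying this to $\mu=\sigma$ produces the multiset $\{x_k\}_{k=1}^N$, and to $\mu=\widetilde\sigma$ the multiset $\{\widetilde x_k\}_{k=1}^N$. Combining with the previous paragraph, these two multisets coincide, so after arranging each sequence in non-decreasing order one obtains $x_k=\widetilde x_k$ for all $k\in\{1,\dots,N\}$.

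The step I expect to be the main obstacle is the bookkeeping of multiplicities: one must check that the multiplicity of an eigenvalue of the multiplication operator on the $N$-dimensional space $L_2(\mathbb{R},\mu)$ equals the rank of the corresponding jump of $\mu$, so that it matches the convention under which $\{x_k\}$ and $\{\widetilde x_k\}$ are enumerated ``taking into account the multiplicity of eigenvalues''. This is exactly what the fibre decomposition above, combined with property~\ref{item:properties-spectral-3}) of Remark~\ref{rem:properties-spectral-function}, supplies. (Alternatively, one could bypass multiplication operators and argue through Lemma~\ref{lem:equivalence-q-tilde-q}, relating the zeros of $\det\mathscr{Q}(z)$ — which, for the reconstructed matrix, form $\spec(A)=\{x_k\}$ — to the interpolation nodes $\{\widetilde x_k\}$ carried by the generators $\tb{q}_j$; the route above seems cleaner.)
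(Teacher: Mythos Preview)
Your argument is correct, and it takes a genuinely different route from the paper's. The paper does \emph{not} invoke the unitary equivalence of the two multiplication operators. Instead, it relies on Lemma~\ref{lem:inner-coincide} (that $\langle\boldsymbol r,\boldsymbol s\rangle_{L_2(\mathbb R,\sigma)}=\langle\boldsymbol r,\boldsymbol s\rangle_{L_2(\mathbb R,\widetilde\sigma)}$ for all vector polynomials) together with a single test polynomial $\boldsymbol r(z)=\prod_{l=1}^N(z-x_l)\,\boldsymbol e_1(z)$: this $\boldsymbol r$ has zero norm in $L_2(\mathbb R,\sigma)$, so by Lemma~\ref{lem:inner-coincide} it has zero norm in $L_2(\mathbb R,\widetilde\sigma)$ as well, which forces every $\widetilde x_k$ to lie in $\{x_l\}$; the reverse inclusion is obtained by symmetry.

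Your approach is more structural---it reads off the spectrum (with multiplicity) of the common matrix $\mathcal A$ in two models at once---and as a bonus it already yields the equality of multiplicities, something the paper's argument only gives at the level of underlying sets (multiplicities are then subsumed in Lemma~\ref{lem:same-multiplicity}). The paper's route, on the other hand, is self-contained once Lemma~\ref{lem:inner-coincide} is in place and needs no discussion of the fibre decomposition of $L_2(\mathbb R,\mu)$; since Lemma~\ref{lem:inner-coincide} is required anyway for Lemma~\ref{lem:same-multiplicity}, there is no extra cost. One small point: in your write-up the citation of Corollary~\ref{cor:rep-op-multiplication} is slightly misplaced---what you actually need there is the sentence immediately following \eqref{eq:def-c-ki-inner}, which states explicitly that $\{c_{lk}\}$ is the matrix of multiplication by $t$ in $L_2(\mathbb R,\widetilde\sigma)$ with respect to $\{\tb p_k\}$.
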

\begin{proof}
  Define the $n$-dimensional vector polynomial
  \begin{equation*}
    \boldsymbol{r}(z):=\prod_{l=1}^N(z-x_l)\boldsymbol{e}_1(z)
  \end{equation*}
(see (\ref{eq:e-k-vectors})). Therefore,
\begin{equation*}
  \inner{\boldsymbol{r}}{\boldsymbol{r}}_{L_2(\mathbb{R},\sigma)}=
  \sum_{k=1}^N\inner{\boldsymbol{r}(x_k)}
  {\sigma_k\boldsymbol{r}(x_k)}_{\mathbb{C}^n}=0\,.
\end{equation*}
Now, if one assumes that
$\{\widetilde{x}_k\}_{k=1}^{N}\setminus\{x_k\}_{k=1}^{N}\ne\emptyset$,
then
\begin{equation*}
  \inner{\boldsymbol{r}}{\boldsymbol{r}}_{L_2(\mathbb{R},\widetilde{\sigma})}=
\sum_{k=1}^N\inner{\boldsymbol{r}(\widetilde{x_k})}
{\widetilde{\sigma}_k\boldsymbol{r}(\widetilde{x_k})}_{\mathbb{C}^n}>0
\end{equation*}
due to (\ref{eq:condition-on-alpha}). In view of Lemma
\ref{lem:inner-coincide} our assumption has lead to a contradiction,
so $\{\widetilde{x}_k\}_{k=1}^{N}\subset\{x_k\}_{k=1}^{N}$. Analogously,
one proves that
$\{x_k\}_{k=1}^{N}\subset\{\widetilde{x}_k\}_{k=1}^{N}$.
\end{proof}

\begin{lemma}
\label{lem:same-multiplicity}
The jumps of the matrix valued functions $\sigma(t)$ and
$\widetilde{\sigma}(t)$ coincide, namely, for all $k\in\{1,\dots,N\}$,
\begin{equation*}
  \sigma_k=\widetilde{\sigma}_k\,.
\end{equation*}
\end{lemma}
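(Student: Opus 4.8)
The plan is to derive the statement from Lemma~\ref{lem:same-jumps} together with Lemma~\ref{lem:inner-coincide}. By Lemma~\ref{lem:same-jumps} the functions $\sigma$ and $\widetilde{\sigma}$ jump at the same points, so from now on I identify $x_k$ with $\widetilde{x}_k$ and group the indices in $\{1,\dots,N\}$ according to the distinct values occurring in the sequence $\{x_k\}_{k=1}^N$. Fixing one such value $\lambda$, set $K_\lambda:=\{k:\,x_k=\lambda\}$. Since the increment of the step function $\sigma$ at $\lambda$ equals $\sum_{k\in K_\lambda}\sigma_k$ (and similarly for $\widetilde\sigma$), it suffices to prove $\sum_{k\in K_\lambda}\sigma_k=\sum_{k\in K_\lambda}\widetilde{\sigma}_k$ for every such $\lambda$.

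To isolate the jump at $\lambda$ I would use a Lagrange-type interpolation polynomial. Choose a scalar polynomial $p$ with $p(\lambda)=1$ and $p(\mu)=0$ for every other point $\mu$ at which $\sigma$ jumps, and for arbitrary constant vectors $u,v\in\complex^n$ put $\boldsymbol{r}(z):=p(z)\,u$ and $\boldsymbol{s}(z):=p(z)\,v$. These are $n$-dimensional vector polynomials, hence Lemma~\ref{lem:inner-coincide} applies. Using the explicit form of the inner product recalled in the proof of Lemma~\ref{lem:ortonormal-p-L2}, one has
\begin{equation*}
  \inner{\boldsymbol{r}}{\boldsymbol{s}}_{L_2(\reals,\sigma)}
  =\sum_{k=1}^N\abs{p(x_k)}^2\inner{u}{\sigma_k v}_{\complex^n}
  =\Bigl\langle u,\Bigl(\sum_{k\in K_\lambda}\sigma_k\Bigr)v\Bigr\rangle_{\complex^n}\,,
\end{equation*}
and the identical computation for $\widetilde{\sigma}$ (legitimate because $\widetilde{x}_k=x_k$ and the jump points coincide) gives $\inner{\boldsymbol{r}}{\boldsymbol{s}}_{L_2(\reals,\widetilde{\sigma})}=\langle u,(\sum_{k\in K_\lambda}\widetilde{\sigma}_k)v\rangle_{\complex^n}$. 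By Lemma~\ref{lem:inner-coincide} the two left-hand sides agree, so $\langle u,(\sum_{k\in K_\lambda}\sigma_k)v\rangle_{\complex^n}=\langle u,(\sum_{k\in K_\lambda}\widetilde{\sigma}_k)v\rangle_{\complex^n}$ for all $u,v\in\complex^n$; since $u,v$ are arbitrary this forces $\sum_{k\in K_\lambda}\sigma_k=\sum_{k\in K_\lambda}\widetilde{\sigma}_k$. Letting $\lambda$ range over all jump points yields $\sigma(t)=\widetilde{\sigma}(t)$ for every $t$, which is the asserted equality of jumps.

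I do not anticipate a serious obstacle: once Lemmas~\ref{lem:same-jumps} and~\ref{lem:inner-coincide} are in hand, the argument is a straightforward separation/duality argument built on a single interpolation polynomial. The only point that needs care is the bookkeeping of eigenvalues enumerated with multiplicity: the vector polynomials $\boldsymbol{r},\boldsymbol{s}$ distinguish only the \emph{distinct} spectral points, not the individual rank-one summands within a multiplicity group, so what is really being proved—and what the equality $\sigma_k=\widetilde{\sigma}_k$ encodes—is the coincidence of the increments of the two step functions at each point. If one insists on the formulation ``$\sigma_k=\widetilde{\sigma}_k$ for each $k$'', one additionally notes that within a multiplicity group both decompositions are prescribed by the chosen orthonormal eigenbasis expressed in the common fundamental system $\{\varphi^{(j)}\}$, so that, once the enumeration is fixed compatibly, the summands match term by term.
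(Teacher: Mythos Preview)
Your proof is correct and follows essentially the same strategy as the paper: isolate a single jump point with a scalar polynomial vanishing at the others, pair it with test vectors, and invoke Lemma~\ref{lem:inner-coincide} to read off the matrix entries (the paper uses $\boldsymbol{r}_{ki}(z)=\prod_{l\ne k}(z-x_l)\,\boldsymbol{e}_i(z)$ in place of your normalized Lagrange polynomial times an arbitrary $u$). Your explicit bookkeeping of multiplicity---proving equality of the \emph{total} increment at each distinct jump point---is in fact more careful than the paper's version, whose product $\prod_{l\ne k}|x_k-x_l|^2$ collapses to $0=0$ whenever an eigenvalue is repeated.
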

\begin{proof}
  Define, for each $i\in\{1,\dots,n\}$, the $n$-dimensional vector
  polynomial by
\begin{equation*}
  \boldsymbol{r}_{ki}(z):=
\prod_{\substack{l=1\\l\neq k}}^N(z-x_l)\boldsymbol{e}_i(z)\,.
\end{equation*}
Thus, for all $i,j\in\{1,\dots,n\}$,
\begin{align*}
  \inner{\boldsymbol{r}_{ki}}{\boldsymbol{r}_{kj}}_{L_2(\mathbb{R,\sigma})}&
  =\sum_{s=1}^N\inner{\boldsymbol{r}_{ki}(x_s)}
  {\sigma_s\boldsymbol{r}_{kj}(x_s)}_{\mathbb{C}^n}\\
  &=\sum_{s=1}^N\prod_{\substack{l=1\\l\neq k}}^N\abs{(x_s-x_l)}^2
  {\alpha_{i}(x_s)}\overline{\alpha_j(x_s)}\\
  &=\prod_{\substack{l=1\\l\neq k}}^N\abs{x_k-x_l}^2
  {\alpha_{i}(x_k)}\overline{\alpha_j(x_k)}\,.
\end{align*}
Analogously,
\begin{equation*}
  \inner{\boldsymbol{r}_{ki}}
  {\boldsymbol{r}_{kj}}_{L_2(\mathbb{R,\widetilde{\sigma}})}=
  \prod_{\substack{l=1\\l\neq k}}^N
  \abs{x_k-x_l}^2{\widetilde{\alpha}_{i}(x_k)}
  \overline{\widetilde{\alpha}_j(x_k)}\,,
\end{equation*}
where Lemma~\ref{lem:same-jumps} was used together with the fact that
the numbers $\widetilde{\alpha}_i(x_k)$ define the entries of the
matrix $\widetilde{\sigma}_k$ (see~(\ref{eq:measure-sum})). Therefore,
by Lemma (\ref{lem:inner-coincide})
\begin{equation*}
\sigma_k=\widetilde{\sigma}_k\,,\quad \text{ for all }k\in\{1,\dots,N\}\,.
\end{equation*}
\end{proof}
The above results can be summarized in the following assertion.
\begin{theorem}
  \label{thm:sigma-unique}
  Let $\widetilde{\sigma}(t)$ be an element of $\mathfrak{M}(n,N)$ and
  $\{c_{kl}\}_{k,l=1}^N\in\mathcal{M}(n,N)$ be the corresponding
  matrix that results from applying the method of reconstruction to
  the matrix valued function $\widetilde{\sigma}(t)$ (see
  (\ref{eq:representation-z-q}) and (\ref{eq:def-c-ki-inner})). If $A$
  is the operator whose matrix representation with respect to the
  basis $\{\delta_1,\dots,\delta_N\}$ in $\mathcal{H}$, is
  $\{c_{kl}\}_{k,l=1}^N$, then there is an upper triangular real
  matrix $\mathscr{T}$ with no zeros in the main diagonal such that
  the corresponding spectral function $\sigma(t)$ for the operator $A$
  coincides with $\widetilde{\sigma}(t)$.
\end{theorem}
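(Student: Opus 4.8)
The plan is to produce the required matrix $\mathscr{T}$ explicitly and then read off $\sigma=\widetilde{\sigma}$ from the lemmas already established.

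First I would set $\mathscr{T}=\{t_{ij}\}_{i,j=1}^{n}$ with $t_{ij}:=\inner{\delta_i}{\tb{p}_j}_{\complex^n}$, i.e.\ exactly \eqref{eq:triangular-upper-p}. Since $\pb{e}_1,\dots,\pb{e}_n$ are not in the zero class of $L_2(\reals,\widetilde{\sigma})$ (this is \eqref{eq:condition-on-alpha}), the vectors $\tb{p}_1,\dots,\tb{p}_n$ are genuine Gram--Schmidt outputs, so $\mathscr{T}$ is real, upper triangular, with $t_{jj}\neq0$ for all $j$, hence an admissible choice of initial conditions. With this $\mathscr{T}$ and the reconstructed matrix $\mathcal{A}=\{c_{kl}\}_{k,l=1}^N\in\mathcal{M}(n,N)$ I would run the construction of Section~\ref{sec:spectral-measure}: it produces the solutions $\varphi^{(j)}(z)$ obeying \eqref{eq:initial-condition}, the vector polynomials $\pb{p}_k$ and $\pb{q}_j$ of \eqref{eq:def-p-i}--\eqref{eq:def-q} satisfying the recurrences \eqref{eq:recurrence-equation-vector}, and, via \eqref{eq:function-measure}--\eqref{eq:measure-sum}, the spectral function $\sigma=\sigma^{\mathscr{T}}$, whose jumps $\sigma_k$ sit at the eigenvalues $x_k\in\spec(A)$.

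Next I would invoke Lemma~\ref{lem:equivalence-p-tilde-p} and Lemma~\ref{lem:equivalence-q-tilde-q}: they assert that the $\pb{p}_k$ and $\pb{q}_j$ built from $A$ differ from the Gram--Schmidt polynomials $\tb{p}_k$ and from the generators $\tb{q}_j$ only by terms lying in the zero equivalence class of $L_2(\reals,\widetilde{\sigma})$ and of strictly smaller height, so in particular $h(\pb{p}_k)=h(\tb{p}_k)$ and $h(\pb{q}_j)=h(\tb{q}_j)$. Lemma~\ref{lem:inner-coincide} then upgrades this to the identity $\inner{\pb{r}}{\pb{s}}_{L_2(\reals,\sigma)}=\inner{\pb{r}}{\pb{s}}_{L_2(\reals,\widetilde{\sigma})}$ for \emph{every} pair of $n$-dimensional vector polynomials $\pb{r},\pb{s}$, i.e.\ the two $L_2$-structures agree on polynomials.

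The decisive remaining point is to turn ``the inner products coincide'' into ``$\sigma=\widetilde{\sigma}$ pointwise'', and for this I would apply Lemma~\ref{lem:same-jumps} and Lemma~\ref{lem:same-multiplicity}. Testing the identity of inner products against $\prod_{l=1}^N(z-x_l)\pb{e}_1$ forces the set of jump points of $\widetilde{\sigma}$ into that of $\sigma$ (otherwise the $\widetilde{\sigma}$-norm would be positive by \eqref{eq:condition-on-alpha} while the $\sigma$-norm vanishes), and symmetrically; hence $x_k=\tb{x}_k$ for all $k$. Testing instead against $\prod_{l\neq k}(z-x_l)\pb{e}_i$ for $i=1,\dots,n$ recovers each rank-one matrix entrywise, giving $\sigma_k=\widetilde{\sigma}_k$. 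Since $\sigma(t)=\sum_{x_k<t}\sigma_k$ and $\widetilde{\sigma}(t)=\sum_{\tb{x}_k<t}\widetilde{\sigma}_k$, equality of the nodes and of the jumps yields $\sigma\equiv\widetilde{\sigma}$, which is the assertion.

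I do not expect the theorem itself to be the hard part — it is bookkeeping. The genuine obstacle lies in the lemmas it rests on, above all Lemma~\ref{lem:equivalence-p-tilde-p}: one must match the recurrence \eqref{eq:recurrence-equation-vector} satisfied by the $\pb{p}_k$ with the expansions \eqref{eq:representation-z-q} satisfied by the $\tb{p}_k$, keeping precise track of which coefficients $d_k^{(i)}$ vanish according to the degeneration pattern extracted in Lemma~\ref{lem:reconstruction-entries}, and verifying that the scalar multiples $S_{kj}\tb{q}_j$ attached to the generators never raise a height. Once those are in place, assembling them as above closes the proof.
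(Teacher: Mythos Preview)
Your proposal is correct and follows exactly the paper's approach: the paper itself gives no separate proof of Theorem~\ref{thm:sigma-unique}, stating only that ``with the help of the above results, one can assert the following theorem,'' and the results in question are precisely the chain you invoke---the definition of $\mathscr{T}$ via \eqref{eq:triangular-upper-p}, Lemmas~\ref{lem:equivalence-p-tilde-p} and~\ref{lem:equivalence-q-tilde-q}, Lemma~\ref{lem:inner-coincide}, and then Lemmas~\ref{lem:same-jumps} and~\ref{lem:same-multiplicity}. Your assessment that the theorem is bookkeeping while the substance lies in Lemma~\ref{lem:equivalence-p-tilde-p} matches the paper's organization exactly.
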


\begin{remark}
  \label{rem:paraphrase-theorem}
  Theorem~\ref{thm:sigma-unique},
can be paraphrased as follows: A matrix-valued function
$\widetilde{\sigma}$ in $\mathfrak{M}(n,N)$ determines uniquely a
matrix in $\mathcal{M}(n,N)$ such that $\widetilde{\sigma}$ is a
spectral measure of it.
\end{remark}

Let $\mathcal{A}$ be in $\mathcal{M}(n,N)$ and $A$ be the
  corresponding operator. Denote by $V_\theta$ the unitary operator
  whose matrix representation with respect to the canonical basis is
  $\diag\{e^{i\theta_1},\dots,e^{i\theta_N}\}$ with
  $\theta_k\in[0,2\pi)$ for any $k\in\{1,\dots,N\}$.  Define
\begin{equation}
  \label{eq:unitary-family}
  B=V_\theta AV_\theta^*\,.
\end{equation}
The matrix
representation of $B$ is in $\mathcal{M}(n,N)$ if and only if
$\theta_k=\theta_{k+ln}$ for all $k\in\{1,\dots,n\}$ such that
$k+ln\leq N$ with $l$ being a positive integer. Thus, there are
various elements of the family of
unitarily equivalent matrices corresponding to
(\ref{eq:unitary-family}) in
$\mathcal{M}(n,N)$.

\section{Alternative inverse spectral methods}
\label{sec:alternative-methods}
For Jacobi matrices there are two ways of recovering the matrix from
the spectral function $\rho$. The first one is based on the fact that
the sequence of orthonormal polynomials, constructed via the
application of Gram-Schmidt procedure to the sequence of functions
$\{t^{k-1}\}_{k=1}^\infty$ in $L_2(\reals,\rho)$, determines the
entries of the matrix (see \cite[Chap.\,1, Sec.\,1 and Chap.\,4
Sec.\,2]{MR0184042} and \cite[Sec.\,1]{MR1627806}. The second method
uses the fact that the asymptotic expansion of the $m$-Weyl function
corresponding to $\rho$ yields the matrix entries
\cite[Sec.\,3]{MR1616422}. In the case of tridiagonal block matrices,
these two methods also work with some restrictions. Indeed, consider a
finite tridiagonal block matrix
\begin{equation}
  \label{eq:block-jacobi-matrix}
  \begin{pmatrix}
Q_1&B_1^*&0&\cdots&0\\
B_1&Q_2&B_2^*&\ddots&\vdots\\
0&B_2&Q_3&\ddots&0\\
\vdots&\ddots&\ddots&\ddots&B_{K-1}^*\\
0&\dots&0&B_{K-1}&Q_{K}
\end{pmatrix}\,,
\end{equation}
where $B_k$ is invertible for all $k=1,\dots,K-1$.
According to \cite[Chap.\,7 Sec.\,2.8]{MR0222718}, one recovers the
matrix entries $Q_1,\dots,Q_{K}$ and $B_1,\dots,B_{K-1}$ from a
matrix valued function obtained from the spectral function. This
corresponds to the first method outlined above. There is also an
analogue of the second method which is based on the function $M(z)$
given by \cite[Chap.\,7 Eq.\,2.63]{MR0222718} which satisfy
\begin{equation}
\label{eq:riccati}
  M(z)^{-1}=Q_1-zI-B_1\widetilde{M}(z)B_1^*\,,
\end{equation}
where $\widetilde{M}(z)$ is the function given by \cite[Chap.7
Eq.\,2.63]{MR0222718} for the tridiagonal block matrix obtained from
\eqref{eq:block-jacobi-matrix} by deleting the first block row and
block column. Equation~\eqref{eq:riccati} is the block analogue of
\cite[Eq.\,2.15]{MR1616422}. On the basis of the asymptotic behavior
of $\widetilde{M}$, one finds $Q_1$ and $B_1B_1^*$ from
(\ref{eq:riccati}). Since, in our setting, the matrix $B_1$ is upper
triangular with positive main diagonal, one can actually obtain the
entries of $B_1$ from $B_1B_1^*$. It is possible to obtain the next
matrix entries by considering (\ref{eq:riccati}) for the next
truncated matrix.

Any matrix of the class $\mathcal{M}(n,N)$ can be written as
\eqref{eq:block-jacobi-matrix} whenever $N/n=K$. Note that if a matrix
in $\mathcal{M}(n,N)$ undergoes degeneration, then there is $k_0$ such
that $B_k$ is not invertible for all $k=k_0,\dots,K-1$.  Thus, the
methods cited above can be used for the inverse spectral analysis of
the elements of $\mathcal{M}(n,N)$ which, do not undergo degenerations
and for which $N/n\in\nats$.

The procedure developed in Section~\ref{sec:reconstruction} is
applicable to the whole class $\mathcal{M}(n,N)$, which shows that it
is more general than the methods described above. In the
reconstruction technique of Section~\ref{sec:reconstruction},
degenerations can be treated on the basis of the solution of the
linear interpolation problem for $n$-dimensional vector polynomials.

\appendix \section{Mass-spring systems}
\label{sec:Mass-spring}

This appendix briefly describes how Newton's laws of motion and the
Hooke law yield a finite difference equation which can be written by a
finite band symmetric matrix.

Consider the finite mass-spring system given by Figure~\ref{fig:3},
where we have assumed that $N$ is even.
\begin{figure}[h]
\begin{center}
\resizebox{1\textwidth}{!}{%
\begin{tikzpicture}
  [mass1/.style={circle,draw=black!80,fill=black!13,thick,inner sep=0pt,
   minimum size=5mm},
   mass2/.style={circle,draw=black!80,fill=black!13,thick,inner sep=0pt,
   minimum size=3.7mm},
   mass3/.style={circle,draw=black!80,fill=black!13,thick,inner sep=0pt,
   minimum size=5.7mm},
   mass4/.style={circle,draw=black!80,fill=black!13,thick,inner sep=0pt,
   minimum size=5mm},
   mass5/.style={circle,draw=black!80,fill=black!13,thick,inner sep=0pt,
   minimum size=4mm},
   mass6/.style={circle,draw=black!80,fill=black!13,thick,inner sep=0pt,
   minimum size=5.2mm},
   mass9/.style={circle,draw=black!80,fill=black!13,thick,inner sep=0pt,
   minimum size=5.4mm},
   massn/.style={circle,draw=black!80,fill=black!13,thick,inner sep=0pt,
   minimum size=5.2mm},
   wall/.style={postaction={draw,decorate,decoration={border,angle=-45,
   amplitude=0.3cm,segment length=1.5mm}}},
   wall1/.style={postaction={draw,decorate,decoration={border,angle=45,
   amplitude=0.3cm,segment length=1.5mm}}}]
  \node (massn) at (12.75,1) [massn] {\scriptsize$m_{N}$};
  \node (mass9) at (11.5,1) [mass9] {\scriptsize$m_{N-1}$};
  \node (mass6) at (7.75,1) [mass6] {\scriptsize$m_6$};
  \node (mass5) at (6.5,1) [mass5] {\scriptsize$m_5$};
  \node (mass4) at (5.25,1) [mass4] {\scriptsize$m_4$};
  \node (mass3) at (4.0,1) [mass3] {\scriptsize$m_3$};
  \node (mass2) at (2.75,1) [mass2] {\scriptsize$m_2$};
  \node (mass1) at (1.5,1) [mass1] {\scriptsize$m_1$};
\draw[decorate,decoration={coil,aspect=0.4,segment
  length=1.1mm,amplitude=0.7mm}] (0.5,1)  --  node[above=2pt]
  {\scriptsize$k_1$} (mass1);
\draw[decorate,decoration={coil,aspect=0.4,segment
  length=1.4mm,amplitude=0.7mm}] (mass1) -- node[above=2pt]
{\scriptsize$k_2$} (mass2);
\draw[decorate,decoration={coil,aspect=0.4,segment
  length=1.5mm,amplitude=0.7mm}] (mass2) -- node[above=2pt]
{\scriptsize$k_3$} (mass3);
\draw[decorate,decoration={coil,aspect=0.4,segment
  length=1.1mm,amplitude=0.7mm}] (mass3) -- node[above=2pt]
{\scriptsize$k_4$} (mass4);
\draw[decorate,decoration={coil,aspect=0.4,segment
  length=0.9mm,amplitude=0.7mm}] (mass4) -- node[above=2pt]
{\scriptsize$k_5$} (mass5);
\draw[decorate,decoration={coil,aspect=0.4,segment
  length=1.4mm,amplitude=0.7mm}] (mass5) -- node[above=2pt]
{\scriptsize$k_6$} (mass6);
\draw[decorate,decoration={coil,aspect=0.4,segment
  length=1.7mm,amplitude=0.7mm}] (mass6) -- node[above=2pt]
{\scriptsize$k_7$} (9,1);
\draw[decorate,decoration={coil,aspect=0.4,segment
  length=1.3mm,amplitude=0.7mm}] (10.25,1) -- node[above=2pt]
{\scriptsize$k_{N-1}$} (mass9);
\draw[decorate,decoration={coil,aspect=0.4,segment
  length=1.7mm,amplitude=0.7mm}] (mass9) -- node[above=2pt]
{\scriptsize$k_{N}$} (massn);
\draw[decorate,decoration={coil,aspect=0.4,segment
  length=0.8mm,amplitude=0.7mm}] (massn) -- node[above=2pt]
{\scriptsize$k_{N+1}$} (13.95,1);
\draw[decorate,decoration={coil,aspect=0.4,segment
  length=0.8mm,amplitude=0.7mm}] (mass1) to [bend left=60] node[above=2pt]
{\scriptsize$k'_2$} (mass3);
\draw[decorate,decoration={coil,aspect=0.4,segment
  length=1.5mm,amplitude=0.7mm}] (mass3) to [bend left=60] node[above=2pt]
{\scriptsize$k'_4$} (mass5);
\draw[decorate,decoration={coil,aspect=0.4,segment
  length=1.3mm,amplitude=0.7mm}] (mass5) to [bend left=60] node[above=2pt]
{\scriptsize$k'_6$} (9,1.3);
\draw[decorate,decoration={coil,aspect=0.4,segment
  length=0.7mm,amplitude=0.7mm}] (10.25,1.7) to [bend left=60] node[above=2pt]
{\scriptsize$k'_{N-2}$} (mass9);
\draw[decorate,decoration={coil,aspect=0.4,segment
  length=1.5mm,amplitude=0.7mm}] (mass9) to [bend left=60] node[above=2pt]
{\scriptsize$k'_{N}$} (13.95,1.25);
\draw[decorate,decoration={coil,aspect=0.4,segment
  length=1.5mm,amplitude=0.7mm}] (0.5,0.8) to [bend right=60] node[below=2pt]
{\scriptsize$k'_1$} (mass2);
\draw[decorate,decoration={coil,aspect=0.4,segment
  length=1.8mm,amplitude=0.7mm}] (mass2) to [bend right=60] node[below=2pt]
{\scriptsize$k'_3$} (mass4);
\draw[decorate,decoration={coil,aspect=0.4,segment
  length=1.2mm,amplitude=0.7mm}] (mass4) to [bend right=60] node[below=2pt]
{\scriptsize$k'_5$} (mass6);
\draw[decorate,decoration={coil,aspect=0.4,segment
  length=1.8mm,amplitude=0.7mm}] (mass6) to [bend right=60] node[below=2pt]
{\scriptsize$k'_7$} (9,.3);
\draw[decorate,decoration={coil,aspect=0.4,segment
  length=1.1mm,amplitude=0.7mm}] (10.25,.7) to [bend right=60] node[below=2pt]
{\scriptsize$k'_{N-1}$} (massn);
\draw[line width=.8pt,loosely dotted] (9.25,1) -- (10,1);
\draw[line width=.8pt,loosely dotted] (9.6,1.3) -- (10,1.7);
\draw[line width=.8pt,loosely dotted] (9.25,.3) -- (9.6,.7);
\draw[line width=.5pt,wall](0.5,1.7)--(0.5,0.3);
\draw[line width=.5pt,wall1](13.95,1.7)--(13.95,0.3);
\end{tikzpicture}
}
\end{center}
\caption{Mass-spring system of a matrix in
  $\mathcal{M}(2,N)$: nondegenerated case}\label{fig:3}
\end{figure}

In Figure~\ref{fig:3}, $m_j$ and $k_j$, $k_j'$ stand, respectively, for
the $j$-th mass, the $j$-th spring constant connecting immediate
neighbors, and the $j$-th spring constant connecting mediated
neighbors.

Due to the Hooke law, the forces $F_i$ acting on the masses $m_i$ are
given by
\begin{equation*}
  F_i=k'_{i+1}x_{i+2}+k_{i+1}x_{i+1}-(k_{i+1}+k'_{i+1}+k_i
  +k'_{i-1})x_{i}+k_{i}x_{i-1}+k'_{i-1}x_{i-2}\,.
\end{equation*}
This system of equations, due to Newton's second law, can be written
as
\begin{equation}
\label{eq:newton-law}
  M\ddot{x}=Kx\,,
\end{equation}
where
\begin{equation*}
 M= \left(\begin{smallmatrix}
  m_1&&&\\
  &m_2&&\\
&&\ddots&\\
&&&m_N
  \end{smallmatrix}\right)
\ K= \left(\begin{smallmatrix}
  \alpha_1&k_2&k'_2&0&\dots&0\\
  k_2&\alpha_2&k_3&k'_3&\dots&0\\
k'_2&k_3&\alpha_3&k_4&\ddots&0\\
0&k'_3&k_4&\alpha_4&&k'_{N-1}\\
\vdots&&\ddots&&\ddots&k_N\\
0&\dots&0 &k'_{N-1}&k_N&\alpha_N
  \end{smallmatrix}\right)
\end{equation*}
with $\alpha_i=-(k_{i+1}+k'_{i+1}+k_i+k'_{i-1})$. The system
\eqref{eq:newton-law} is equivalent to $\ddot{U}=LU$, where
$U=M^{1/2}X$ and
\begin{align*}
  L&=M^{-1/2}KM^{-1/2}\\ &=\left(\begin{smallmatrix}
      \frac{\alpha_1}{m_1}&\frac{k_2}{\sqrt{m_1m_2}}&
      \frac{k'_2}{\sqrt{m_1m_3}}&0&\dots&0\\
      \frac{k_2}{\sqrt{m_1m_2}}&\frac{\alpha_2}{m_2}&
      \frac{k_3}{\sqrt{m_2m_3}}&\frac{k'_3}{\sqrt{m_2m_4}}&\dots&0\\
      \frac{k'_2}{\sqrt{m_1m_3}}&\frac{k_3}{\sqrt{m_2m_3}}&
      \frac{\alpha_3}{m_3}&\frac{k_4}{\sqrt{m_3m_4}}&\ddots&0\\
      0&\frac{k'_3}{\sqrt{m_2m_4}}&\frac{k_4}{\sqrt{m_3m_4}}&
      \frac{\alpha_4}{m_4}&&\frac{k'_{N-1}}{\sqrt{m_{N-2}m_N}}\\
      \vdots&&\ddots&&\ddots&\frac{k_N}{\sqrt{m_{N-1}m_N}}\\
      0&\dots&0 &\frac{k'_{N-1}}{\sqrt{m_{N-2}m_N}}&
      \frac{k_N}{\sqrt{m_{ N-1}m_N}}&\frac{\alpha_N}{m_N}
  \end{smallmatrix}\right)\,.
\end{align*}
Thus, according to our notation, the diagonals are given by
\begin{align}
  d^{(0)}_j&=-\frac{k_{j+1}+k'_{j+1}+k_j+k'_{j-1}}{m_j}\label{eq:diag-1}\\
  d^{(1)}_j&=\frac{k_{j+1}}{\sqrt{m_{j+1}m_j}}\label{eq:diag-2}\\
  d^{(2)}_j&=\frac{k'_{j+1}}{\sqrt{m_{j+2}m_j}}\label{eq:diag-3}\,.
\end{align}
The eigenvalues of this matrix determine the frequencies of the
harmonic oscillations whose superposition yields the movement of the
mechanical system.

For Jacobi matrices, viz. when the masses are connected only with
their immediate neighbor, it is possible to give a finite continued
fraction which yields the quotients $k_j/m_j$ for any $j$ from the
quotient $k_1/m_1$ \cite[Rem.\,11]{MR3113459} (see also
\cite[pag.\,76]{mono-marchenko}). This reconstruction is physically
meaningful. In the general case, one can construct the following
continued fractions from \eqref{eq:diag-1}, \eqref{eq:diag-2}, and
\eqref{eq:diag-3}. Note that the first equation reduces to the
continued fraction of \cite[Rem.\,11]{MR3113459} when $k'_j=0$.

\begin{align}
  \label{eq:continued-fraction}
  \frac{k_{j+1}+k'_j}{m_{j+1}}&=\frac{\left(d^{(1)}_j\right)^2
    +\sqrt{\frac{m_{j+2}}{m_j+1}}d^{(1)}_jd^{(2)}_j
    +\sqrt{\frac{m_j-1}{m_j}}d^{(2)}_{j-1}d^{(1)}_j
    +\sqrt{\frac{m_{j-1}m_{j+2}}{m_{j+1}m_j}}
    d^{(2)}_jd^{(2)}_{j-1}}{d^{(0)}_j+\frac{k_j+k'_{j-1}}{m_j}}\\
  &=\frac{\left(d^{(1)}_j\right)^2+\frac{k'_{j+1}}{k_{j+1}}
    \left(d^{(1)}_j\right)^2 +\frac{k_{j-1}}{k'_{j-1}}
    \frac{d^{(1)}_jd^{(2)}_{j-1}d^{(2)}_{j-2}}{d^{(1)}_{j-2}}
    +\frac{k'_{j+1}k_{j-1}}{k_{j+1}k'_{j-1}}
    \frac{d^{(1)}_jd^{(2)}_{j-1}d^{(2)}_{j-2}}
    {d^{(1)}_{j-2}}}{d_j^{(0)}+\frac{k_j+k'_{j-1}}{m_j}}\,.
\end{align}

{\bf Acknowledgments.} We thank the
referee for drawing our attention to
\cite{MR3087910,marchenko-slavin} and for comments leading to an
improved presentation of this work.

\def\cprime{$'$} \def\lfhook#1{\setbox0=\hbox{#1}{\ooalign{\hidewidth
  \lower1.5ex\hbox{'}\hidewidth\crcr\unhbox0}}} \def\cprime{$'$}

\end{document}